\documentclass[11pt,a4paper]{article}
\usepackage[T1]{fontenc}
\usepackage{lmodern,amsmath,amssymb,amsthm,mathtools}
\usepackage[margin=27mm]{geometry}
\usepackage{booktabs,array,enumitem,microtype}
\usepackage{xcolor}
\usepackage[colorlinks=true,linkcolor=blue!55!black,citecolor=blue!55!black,urlcolor=blue!55!black]{hyperref}
\setlist{itemsep=2pt,topsep=4pt}
\newtheorem{theorem}{Theorem}[section]
\newtheorem{proposition}[theorem]{Proposition}
\newtheorem{lemma}[theorem]{Lemma}
\newtheorem{corollary}[theorem]{Corollary}
\theoremstyle{definition}
\newtheorem{definition}[theorem]{Definition}
\newtheorem{example}[theorem]{Example}

\theoremstyle{remark}
\newtheorem{remark}[theorem]{Remark}
\newcommand{\F}{\mathbb F}
\newcommand{\Z}{\mathbb Z}
\newcommand{\Tr}{\operatorname{Tr}}
\newcommand{\ord}{\operatorname{ord}}
\newcommand{\PAut}{\operatorname{PAut}}
\newcommand{\wt}{\operatorname{wt}}
\newcommand{\PG}{\operatorname{PG}}
\newcommand{\lcm}{\operatorname{lcm}}
\DeclareMathOperator{\rank}{rank}
\DeclareMathOperator{\supp}{supp}
\numberwithin{equation}{section}
\allowdisplaybreaks[1]
\title{Length Bounds for Irreducible Cyclic MDS Codes}
\author{Guanghui~Zhang \footnote{E-mail address: {\it zghui@squ.edu.cn.}}}

\date{\small (School of Mathematics and Physics, Suqian University, Suqian 223800, China)}

\begin{document}
\maketitle
\begin{abstract}
We prove the MDS length bound for nontrivial irreducible cyclic codes of every composite length.
The classical trace realization turns the MDS property into independence of elements of a multiplicative subgroup. 
Every nonidentity subgroup must have the full extension degree. This restriction yields length congruences, 
and the Frobenius-orbit description of the coefficient field gives an exact descent obstruction. 
A cyclotomic argument reduces each fixed dimension to finitely many parameters. For dimensions at most thirty, 
established geometric bounds reduce the alphabet to a small finite range; 
the remaining twelve parameter triples are excluded by explicit products of at most three determinant factors and a low-degree multiplier argument. 
An independent argument proves the composite-length theorem in arbitrary dimension.
The standardness conclusion is stated separately as a conditional consequence of the classification statement of Feng, Hollmann, Li, and Xiang; 
the correct extension-field reduction retains a semilinear quotient. The general prime-length problem is not settled here. 
Neither length proof uses the automorphism classification or the recent odd-characteristic range announced by Fan.
\end{abstract}
\noindent\textbf{Keywords.} Cyclic codes; finite fields; maximum distance separable codes; MDS conjecture; permutation automorphisms.

\section{Introduction}
\subsection{Background and motivation}
A linear $[n,k,d]_q$ code is a $k$-dimensional subspace of $\F_q^n$ with minimum Hamming distance $d$. The Singleton bound is
\begin{equation}\label{eq:singleton-intro}
 d\leq n-k+1.
\end{equation}
A code attaining equality is \emph{maximum distance separable}, or MDS \cite{Singleton}. For a generator matrix $G$, the MDS property has the equivalent formulation
\[
 \det G_I\neq0\qquad
 (I\subseteq\{0,\ldots,n-1\},\ |I|=k),
\]
where $G_I$ consists of the columns indexed by $I$. Thus every $k$ coordinates determine the whole codeword. The extremal problem is to bound $n$ in terms of $q$ and $k$.

To formulate the length problem beyond the elementary endpoint families, throughout this paper \emph{nontrivial} means $2\leq k\leq n-2$. The linear MDS conjecture predicts
\begin{equation}\label{eq:mds-conjecture}
 n\leq q+1,
\end{equation}
with the possible exception $n=q+2$ when $q$ is even and $k\in\{3,q-1\}$. Repetition codes and their duals explain why nontriviality is necessary. The exceptional dimensions are dual to one another and arise from hyperovals in finite projective planes.

To make the geometric and structural inputs below unambiguous, all numbered theorem and section references to Ball and Lavrauw \cite{BallLavrauw} use the numbering of \href{https://arxiv.org/abs/1908.10772v2}{arXiv:1908.10772v2}. The versioned preprints in the bibliography likewise fix the numbering used for Feng--Hollmann--Li--Xiang, Li--Yuan, and Fan.

The orthogonal-array and arc formulations are classical; see Bush and Segre \cite{Bush,Segre} and the modern account \cite[Theorem~17]{BallLavrauw}. A point of $\PG(k-1,q)$ is a one-dimensional subspace of $\F_q^k$; an \emph{arc} is a set of points for which every $k$ representatives are independent. Consequently,
\[
 \text{MDS generator matrix}
 \quad\longleftrightarrow\quad
 \text{arc in }\PG(k-1,q),
\]
where changing representatives amounts to nonzero column scalings. This correspondence supplies the geometric length bounds used below; see \cite{BallLavrauw}. Our additional input is that, for an irreducible cyclic code, the columns admit a multiplicative description in an extension field.

\subsection{Related work}
The conjecture is known in several ranges that provide the starting points for our reductions. Ball proved the MDS conjecture over prime fields \cite[Theorem~1.7]{Ball}. The elementary coding-theoretic facts used below are standard; we retain their proofs and cite the relevant treatments in \cite{HP,LN}. For $q=p^f$ with $f>1$, Ball and De Beule established the range $k\leq2p-2$ \cite[Theorem~5.2]{BallDeBeule}. Geometric results of Hirschfeld--Korchm\'aros, Storme--Thas, and Voloch supply further ranges \cite{HK,ST,Voloch}. Section~\ref{sec:composite} states precisely the versions used here, including the strict endpoints and excluded field sizes.

A different approach restricts the coordinate symmetry. Deng, Feng, and Vasil'ev prove that a linear MDS code with $4\leq k\leq q-3$ and a two-transitive permutation automorphism group has $n\leq q+1$ \cite[Theorem~1.3]{DFV}. They propose cyclic symmetry as a weaker hypothesis to investigate \cite[Section~5]{DFV}. A regular cyclic action is transitive, but need not be two-transitive; the length problem therefore requires an additional argument.

Motivated by this distinction between transitivity and two-transitivity, we study \emph{irreducible cyclic codes}, meaning that the \emph{check polynomial}
\[
 h(X)=\frac{X^n-1}{g(X)}
\]
is irreducible, where $g$ generates the cyclic ideal. Thus $k=\deg h$, not $\deg g$. After establishing full coordinate period, we identify a generator matrix with the elements of a subgroup $H\leq\F_{q^k}^{\times}$ of order $n$. This is a full cyclic orbit, rather than an arbitrary truncated companion orbit.

Within this irreducible class, permutation-group structure provides a separate line of inquiry. Theorem~1.1 of the preprint of Feng, Hollmann, Li, and Xiang states a classification of nonstandard nondegenerate irreducible cyclic codes \cite{FHLX}. Section~\ref{sec:standard} proves what its classification statement would imply for the permutation group. Because the cited proof contains an unresolved semilinear-to-linear step, this consequence is explicitly conditional; the local correction given here is not a verification of the whole classification. In a related geometric direction, Li and Yuan give a companion-orbit obstruction to rational-normal-curve containment \cite[Corollary~4.6]{LiYuan}; this is separate from the length arguments developed here. In dimension two our elementary criterion also agrees with the irreducible recurrence-code case of Shi, Zhang, and Sol\'e \cite[Theorems~1 and~5]{SZS}.

For comparison with recent general MDS bounds, a preprint of Fan \cite[Theorem~1.1]{Fan} states the odd-characteristic range
\[
\begin{gathered}
 2\leq k\leq B(p,q)\quad\hbox{or}\quad q+2-B(p,q)\leq k\leq q,\\
 B(p,q)=\left\lfloor\frac{(p-2)q+6p-10}{2p-3}\right\rfloor.
\end{gathered}
\]
It covers the eleven odd-characteristic triples in Table~\ref{tab:residual}. It also implies the odd-characteristic composite-length assertion after subgroup rigidity: for $p\geq3$ and $q\geq k^2+k+2$,
\[
 \frac{(p-2)q+6p-10}{2p-3}-\frac{q+8}{3}
 =\frac{(p-3)(q+2)}{3(2p-3)}\geq0,
 \qquad \frac{k^2+k+10}{3}\geq k.
\]
Thus $B(p,q)\geq k$. The length proofs below do not use this recent preprint: they retain independent arguments from the earlier geometric bounds and explicit polynomial arithmetic. Accordingly, the overlapping odd-characteristic length conclusions are not presented as conclusions absent from the current literature.

Table~\ref{tab:comparison} places these external results alongside the deductions developed here. In particular, neither the automorphism classification nor the companion-orbit obstruction is an input to either length theorem.
\begin{table}[htbp]
\centering\small
\caption{External inputs and the conclusions obtained here.}\label{tab:comparison}
\begin{tabular}{@{}>{\raggedright\arraybackslash}p{.24\textwidth}>{\raggedright\arraybackslash}p{.34\textwidth}>{\raggedright\arraybackslash}p{.36\textwidth}@{}}
\toprule
Source & Scope or conclusion & Role in this paper\\
\midrule
Ball \cite{Ball}; Ball--De Beule \cite{BallDeBeule}
& Prime alphabets; $k\leq2p-2$ over nonprime alphabets.
& Remove these cases before the arithmetic sieve.\\[3pt]
Hirschfeld--Korchm\'aros, Storme--Thas, Voloch \cite{HK,ST,Voloch}
& Geometric bounds in specified ranges of $q,k$.
& Combine with subgroup rigidity to handle all composite lengths.\\[3pt]
Deng--Feng--Vasil'ev \cite{DFV}
& Two-transitive MDS codes, $4\leq k\leq q-3$.
& Motivation; not applicable under the conditional standardness conclusion.\\[3pt]
Feng--Hollmann--Li--Xiang \cite{FHLX}
& Classification of nonstandard irreducible cyclic codes.
& Conditional structural consequence only; see Remark~\ref{rem:classification-audit}.\\[3pt]
Li--Yuan \cite{LiYuan}
& Companion-orbit obstruction to GRS type.
& Related structural result; not used in either length proof.\\[3pt]
Fan \cite{Fan}
& A linear-in-$q$ range in odd characteristic.
& Covers overlapping odd-characteristic cases; not used in our proofs.\\[3pt]
Arguments developed here
& Subgroup restrictions, exact coefficient-field descent, and polynomial reduction.
& Independent length proofs; unrestricted congruence bounds; explicit separation of structural dependencies.\\
\bottomrule
\end{tabular}
\end{table}

\subsection{Contributions and main results}
The distinction between established facts and deductions is used throughout. The trace and column descriptions, elementary MDS operations, Frobenius-orbit coefficient-field argument, and dimension-two criterion have the references indicated at their statements, with their proofs retained. The factor-count, congruence, normality, and high-dimension restrictions are proved as consequences of these tools and subgroup rigidity. The automorphism conclusion is conditional. No assertion of priority for every elementary reformulation is intended.

The deductions begin with a restriction on the column orbit: every nonidentity subgroup must span the whole extension field. In multiplicative-order notation this becomes
\begin{equation}\label{eq:intro-rigidity}
 \ord_d(q)=k\qquad(d\mid n,\ d>1).
\end{equation}
Indeed, a subgroup of degree $r<k$ supplies $r+1\leq k$ dependent columns. For every prime $\ell\mid n$, this gives
\[
 k\mid\ell-1,
 \qquad n\equiv1\pmod k,
 \qquad n\text{ odd}.
\]
Thus the even exceptional length $q+2$ cannot occur in our class. Intersecting the length congruence with the elementary Singleton-type interval also gives
\begin{align}
 k\mid(q-1)&\ \Longrightarrow\ n\leq q-k,\label{eq:intro-family1}\\
 k\mid q,\quad k\geq3&\ \Longrightarrow\ n\leq q-k+1.\label{eq:intro-family2}
\end{align}
These two implications have no upper restriction on $k$.

Subgroup rigidity constrains the length; field descent constrains the alphabet. Writing $q=p^f$, we show that a hypothetical counterexample $n>q+1$ must have
\[
 \ord_n(p)=fk.
\]
Finally, with $t=n-q$, the counterexample interval and the subgroup restrictions imply
\[
 2\leq t\leq k-1,
 \qquad n\mid\lvert\Phi_k(-t)\rvert.
\]
For fixed $k$ this is a finite problem. For $k\leq30$, the geometric bounds also restrict $q$ to a small finite range. The surviving parameters are excluded by the coefficient matrix of a polynomial multiple of degree at most $k+3$.

The resulting length statements are collected in the following theorem.
\begin{theorem}\label{thm:main}
Every nontrivial irreducible cyclic MDS code of composite length satisfies $n\leq q+1$. The same conclusion holds at every length when its dimension is at most thirty. In either case, if the dimension is at least three, then $n\leq q-1$.
\end{theorem}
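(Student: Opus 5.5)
The plan follows the chain sketched in the introduction. First I would realize the code via the trace: an irreducible cyclic code of dimension $k$ with check polynomial $h$ has generator columns indexed by a subgroup $H\leq\F_{q^k}^\times$ of order $n$, viewed as vectors in $\F_q^k$. MDS then means that any $k$ elements of $H$ are $\F_q$-linearly independent.

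\textbf{Step 1: subgroup rigidity.} If $1\neq H'\leq H$ had $\F_q(H')$ of degree $r<k$, then $H'$ would contain at least $r+1$ elements. Any $r+1$ of them are dependent in an $r$-dimensional space, which contradicts $k\geq r+1$. Hence $\ord_d(q)=k$ for all $d\mid n$ with $d>1$. For each prime $\ell\mid n$ this gives $k\mid\ell-1$. Consequently every divisor of $n$ is $\equiv1\pmod k$, so $n\equiv1\pmod k$ and $n$ is odd (for $k\geq2$). The same rigidity also yields $n\mid q^k-1$ with $n\nmid q^r-1$ for $r<k$.

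\textbf{Step 2: the composite case.} Let $n=\ell m$ be composite, with $\ell$ the smallest prime factor. Then $\ell\geq k+1$ and $n\geq(k+1)^2$. Assuming $n\geq q+2$ for contradiction, I would use a Singleton/Bush-type bound (for $k\geq q$ any nontrivial MDS code has $n\leq k+1$) to reduce to $k<q$. The target is $q\geq k^2+k+2$, or at least a range in which the quoted geometric bounds (Ball for prime $q$, Ball--De Beule for $k\leq 2p-2$, and Hirschfeld--Korchm\'aros, Storme--Thas, Voloch for $k$ small relative to $\sqrt q$) force $n\leq q+1$.

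The mechanism is a counting argument. Write $n=q+t$. If $t\geq k$, a standard dual argument applies: puncturing the dual of an MDS code of length $n$ gives a bound $n\leq q+k-1$. That bound is the Singleton-type interval the paper alludes to, and it forces $2\leq t\leq k-1$. Combining $q\equiv -t\pmod n$ with $n\mid q^k-1$ and $n\nmid q^r-1$ shows that $n$ divides $\Phi_k(q)$, hence $n\mid|\Phi_k(-t)|$. Since $|\Phi_k(-t)|<(t+1)^{\varphi(k)}\leq k^{k}$, this bounds $n$, but only weakly. For composite $n$, the better estimate comes from $n\geq(k+1)^2$ together with $q=n-t\geq (k+1)^2-k+1=k^2+k+2$, which lands in Voloch's or Hirschfeld--Korchm\'aros's range. \emph{This is the main obstacle:} each cited theorem must be matched exactly to its hypotheses (even versus odd $q$, excluded small $q$, strict endpoints). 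I would first handle $q$ prime via Ball, then use the $k\leq\sqrt q$-type bounds, and check the leftover small cases by hand.

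\textbf{Step 3: prime length, $k\leq30$.} For each such $k$, the conditions $2\leq t\leq k-1$ and $n\mid\Phi_k(-t)$ with $n$ a prime $\equiv1\pmod k$ leave finitely many $(k,n,q)$. The descent condition $\ord_n(p)=fk$ and the geometric bounds prune these further. The remaining twelve triples I would exclude by exhibiting $k$ elements of $H$ that are dependent over $\F_q$. Concretely, I would multiply the minimal polynomial by a low-degree multiplier to obtain a polynomial of degree at most $k+3$ whose coefficients yield a vanishing $k\times k$ determinant.

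\textbf{Step 4: the sharpening to $n\leq q-1$ for $k\geq3$.} By Step 1, $n$ is odd and $n\equiv1\pmod k$; given $n\leq q+1$ it remains to exclude $n=q$ and $n=q+1$. The case $n=q$ is impossible since $\gcd(n,q)=1$ because $n\mid q^k-1$. For $n=q+1$, we have $n\mid q^2-1$, so $\ord_n(q)\leq2<k$, which contradicts rigidity.
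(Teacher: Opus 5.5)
\textbf{Steps 1, 3 and 4.} These follow the paper's outline: subgroup rigidity; the cyclotomic and descent sieve for $k\le30$, ending in a multiplier of degree at most $k+3$; and removal of $n=q$ and $n=q+1$. One point is assumed rather than shown. The realization with $|H|=n$, and hence $\gcd(n,q)=1$ in Step 4, needs the full-period argument. If a root of $h$ had order $s<n$, every codeword would be $s$-periodic and the generator matrix would have equal columns, which MDS forbids for $k\ge2$.

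\textbf{The gap is Step 2.} Your claim that $q\ge k^2+k+2$ lands in the Hirschfeld--Korchm\'aros or Voloch range is false by a constant factor.
\begin{itemize}
\item Hirschfeld--Korchm\'aros and Storme--Thas give $n\le q+1$ only for $k\le\tfrac12\sqrt q+2$ (roughly). That means $q\ge4(k-2)^2$, not $q\gtrsim k^2$.
\item Voloch's bound is available only for odd nonsquare $q=p^{2a+1}$, where $\tfrac14\sqrt{pq}=\tfrac14\sqrt p\,\sqrt q$. At $q\approx k^2$ it therefore needs roughly $p>16$. It says nothing for even $q$ or square $q$.
\end{itemize}
If you redo your counting with the correct threshold, $q\ge n-k+1\ge4(k-2)^2$ holds when $\Omega(n)\ge3$, when both prime factors are at least $2k+1$, or when one factor is $k+1$ and the other is at least $4k+1$. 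What survives is
\[
n=\ell^2,\qquad n=\ell(2k+1),\qquad n=\ell(3k+1),
\]
with $\ell=k+1$ prime and $k$ even. These shapes occur for unboundedly many $k$, so they are not ``leftover small cases'' to check by hand. Separately, the size restrictions needed to apply these theorems (the paper uses $q>3125$) are why the paper needs its own finite enumeration of composite $n$ with $k\le30$. Your Step 3 sieve would absorb this if you do not restrict it to prime $n$.

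\textbf{The missing ingredient is exact descent in the composite case.} You invoke it only in Step 3. If $n>q+1$ with $q=p^f$, then $\ord_n(p)=fk$. Otherwise the code is defined over $\F_{p^a}$ with $a<f$, and the Bush bound over that field gives $n\le2\sqrt q-2$. Ball's theorem gives $f>1$. Rigidity at $\ell=k+1$, together with $\ord_\ell(p)\mid k$, forces $\ord_\ell(p)=k$ and $\gcd(k,f)=1$, so $f$ is odd. Now take the three shapes in turn:
\begin{itemize}
\item For $n=\ell^2$: $\ord_{\ell^2}(p)\in\{k,\ell k\}$ forces $f=\ell$, so $q=p^\ell\ge2^\ell>\ell^2=n$, a contradiction.
\item For $n=\ell r$ with $r=2k+1$: the value $\ord_r(p)=2k$ would give $\ord_r(q)=2k\ne k$. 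Hence $\ord_n(p)=\lcm(k,k)=k$, contradicting $f>1$.
\item For $n=\ell r$ with $r=3k+1$: one gets $f=3$. Only now does Voloch's bound apply, because $\sqrt{pq}=p^2$, $p\ge17$, and $k<p^{3/2}/\sqrt3$.
\end{itemize}
Without this case analysis, Step 2 does not prove the composite-length assertion for $k\ge31$. That is exactly the part of the theorem the finite computation does not cover.
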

The two parts of the theorem use different final reductions. The composite-length proof combines the subgroup and descent restrictions with the cited geometric bounds and a small integer enumeration. It does not use the polynomial calculation for the twelve prime-length cases. Section~\ref{sec:composite} gives this independent chain of implications in detail.

The permutation-group conclusion complements these length bounds. Assuming the classification statement of \cite[Theorem~1.1]{FHLX}, we obtain
\[
 \PAut(C)\cong C_n\rtimes_q C_k,
\]
and this group is not two-transitive for a nontrivial code. This structural conclusion is conditional on the cited classification, whereas both length assertions are independent of it. The prime-length problem above dimension thirty remains unresolved by the present arguments.

\subsection{Organization}
Section~\ref{sec:model} develops the linear-algebraic criteria and the trace realization, including full coordinate period. Section~\ref{sec:rigidity} proves subgroup rigidity and its immediate consequences. Section~\ref{sec:standard} treats the conditional permutation-group result and its consequence for two-transitivity. Sections~\ref{sec:descent} and~\ref{sec:finite} establish exact descent and the finite-dimensional length theorem. Section~\ref{sec:composite} proves the composite-length theorem.

\section{Linear-algebraic preliminaries and the trace model}\label{sec:model}
We first put the order calculations and the column criterion in a form that can be reused without implicit changes of field. We then derive the trace representation from the definition of an irreducible cyclic code. This order of presentation also makes the full-period argument independent of any unproved trace-model assumption.

\subsection{Finite fields and multiplicative orders}
Write $q=p^f$, with $p$ prime and $f\geq1$. All finite fields are regarded as subfields of a fixed algebraic closure of $\F_p$. For positive integers $m$ and $u$ with $\gcd(u,m)=1$, put
\[
 \ord_m(u)=\min\{j\geq1:u^j\equiv1\pmod m\}.
\]
We use $\varphi(m)=|(\Z/m\Z)^\times|$, the valuation $v_\ell(m)$, and
\[
 \Omega(m)=\sum_{\ell\mid m}v_\ell(m).
\]
The cyclotomic polynomials are characterized by
\begin{equation}\label{eq:cyclotomic-product}
 X^m-1=\prod_{d\mid m}\Phi_d(X),
 \qquad \Phi_d(X)\in\Z[X]\text{ monic}.
\end{equation}
Here and below, $\ell\mid m$ under a sum or product over primes means that $\ell$ is prime. For vectors, $\langle\cdot\rangle_K$ denotes linear span over $K$; for an element of a multiplicative group, $\langle x\rangle$ denotes its cyclic subgroup.

The following standard order and Frobenius identities will account for every change of alphabet in the proofs; see \cite[Sections~3.5--3.7]{HP} and \cite[Chapter~2]{LN}. Their elementary proofs are included to fix conventions.
\begin{lemma}\label{lem:orders}
Suppose that the indicated bases and moduli are coprime.
\begin{enumerate}[label=(\roman*)]
\item If $e=\ord_m(u)$ and $a\geq1$, then
\begin{equation}\label{eq:order-power}
 \ord_m(u^a)=\frac{e}{\gcd(e,a)}.
\end{equation}
\item If $d\mid m$, then $\ord_d(u)\mid\ord_m(u)$; moreover,
\begin{equation}\label{eq:order-lcm}
 \ord_{\lcm(m_1,m_2)}(u)
 =\lcm\bigl(\ord_{m_1}(u),\ord_{m_2}(u)\bigr).
\end{equation}
\item If $x$ has multiplicative order $d>1$ in a finite extension of $\F_q$, then
\begin{equation}\label{eq:degree-order}
 [\F_q(x):\F_q]=\ord_d(q).
\end{equation}
\end{enumerate}
\end{lemma}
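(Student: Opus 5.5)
I will prove the three parts of Lemma~\ref{lem:orders} directly from the definition of $\ord_m$, using only the basic fact that for a unit $u$ modulo $m$,
\[
 u^j\equiv1\pmod m \iff \ord_m(u)\mid j .
\]
This follows from division with remainder: if $j=se+r$ with $0\leq r<e$, then $u^r\equiv1$, so minimality of $e$ forces $r=0$. This equivalence is the single tool used throughout.

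For (i), I compute directly. We have $(u^a)^j\equiv1\pmod m$ if and only if $e\mid aj$. Writing $g=\gcd(e,a)$, this holds if and only if $\frac{e}{g}\mid\frac{a}{g}j$, hence if and only if $\frac{e}{g}\mid j$, since $\gcd(e/g,a/g)=1$. The least such positive $j$ is $e/g$.

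For (ii), suppose $d\mid m$. Then $u^{\ord_m(u)}\equiv1\pmod m$ implies the same congruence modulo $d$, so $\ord_d(u)\mid\ord_m(u)$ by the basic fact. For the lcm formula, note that $u^j\equiv1$ modulo $\lcm(m_1,m_2)$ holds exactly when it holds modulo both $m_1$ and $m_2$. That happens exactly when $\ord_{m_1}(u)\mid j$ and $\ord_{m_2}(u)\mid j$, that is, when $\lcm(\ord_{m_1}(u),\ord_{m_2}(u))\mid j$. The minimal such $j$ is this lcm. Coprimality of $u$ with both moduli is assumed throughout.

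For (iii), the only step needing field theory, I will set $r=[\F_q(x):\F_q]$. Then $\F_q(x)=\F_{q^r}$. By Galois theory of finite fields, $\F_{q^j}$ is the fixed field of $y\mapsto y^{q^j}$. Hence $x\in\F_{q^j}$ if and only if $x^{q^j}=x$, which (since $x\neq0$) holds if and only if $x^{q^j-1}=1$. Because $x$ has order $d$, this is equivalent to $d\mid q^j-1$, i.e.\ $q^j\equiv1\pmod d$. Here $\gcd(q,d)=1$ automatically, since $d$ divides $|\F_{q^s}^\times|=q^s-1$ for the field $\F_{q^s}$ containing $x$.

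On the other hand, $x\in\F_{q^j}$ if and only if $\F_q(x)\subseteq\F_{q^j}$, and since subfields of $\F_{q^s}$ containing $\F_q$ correspond to divisors of $s$, this holds if and only if $r\mid j$. Combining the two equivalences, $r\mid j\iff\ord_d(q)\mid j$ for all $j\geq1$, so $r=\ord_d(q)$. The hypothesis $d>1$ is not really needed here; it only excludes the degenerate case $x=1$, where both sides are $1$.

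The main, mild obstacle is step (iii). It requires correctly invoking the subfield lattice of finite fields, namely $\F_{q^r}\subseteq\F_{q^j}\iff r\mid j$ inside a common algebraic closure. This is where the paper's convention of working in a fixed algebraic closure of $\F_p$ is used.
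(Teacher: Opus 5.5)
Your proof is correct and follows the paper's argument essentially step for step: the same $\gcd$ cancellation in (i), reduction modulo $d$ and the simultaneous-congruence characterization of the $\lcm$ in (ii), and the chain $x\in\F_{q^j}\iff x^{q^j-1}=1\iff d\mid q^j-1$ in (iii). The only difference is in (iii), where you conclude via $r\mid j\iff\ord_d(q)\mid j$ using the subfield lattice in the fixed algebraic closure, while the paper identifies both quantities as the least such $j$; this is an inessential variation.
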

\begin{proof}
For (i), write $g=\gcd(e,a)$, $e=ge_0$, $a=ga_0$, with $\gcd(e_0,a_0)=1$. For every $j\geq1$,
\[
 (u^a)^j\equiv1\pmod m
 \iff e\mid aj
 \iff e_0\mid a_0j
 \iff e_0\mid j.
\]
The smallest such $j$ is $e_0$.

For (ii), reduction modulo $d$ gives $u^{\ord_m(u)}\equiv1\pmod d$, proving the divisibility. Also
\[
 u^j\equiv1\pmod{\lcm(m_1,m_2)}
 \iff
 \begin{cases}
  u^j\equiv1\pmod{m_1},\\
  u^j\equiv1\pmod{m_2},
 \end{cases}
\]
so the least possible $j$ is the least common multiple of the two orders.

For (iii), membership in a finite subfield is given by
\[
 x\in\F_{q^r}
 \iff x^{q^r}=x
 \iff x^{q^r-1}=1
 \iff d\mid q^r-1.
\]
The least such $r$ is both the degree of $x$ over $\F_q$ and $\ord_d(q)$. Equivalently, this is the size of its Frobenius orbit.
\end{proof}

\subsection{The MDS column criterion, duality, and shortening}
With these field identities in place, we turn to the linear-algebraic properties that will translate the MDS condition into restrictions on the coordinate subgroup.
For $c=(c_0,\ldots,c_{n-1})\in\F_q^n$, define
\[
 \supp(c)=\{i:c_i\neq0\},\qquad
 \wt(c)=|\supp(c)|,\qquad
 d(C)=\min_{0\neq c\in C}\wt(c).
\]
A generator matrix $G\in\F_q^{k\times n}$ has rank $k$ and row space $C$. For a set $I$ of coordinate indices, $G_I$ denotes its column submatrix. If $|I|=k$ and $G_I$ is invertible, then $I$ is an \emph{information set}, because
\[
 c=uG\quad\Longrightarrow\quad u=c_I G_I^{-1}.
\]

To relate information sets to minimum distance, we first recall the Singleton bound. Deleting $d-1$ coordinates is injective on a code of minimum distance $d$: two words with the same remaining coordinates would have a nonzero difference of weight at most $d-1$. Hence
\[
 q^k\leq q^{n-d+1}
 \quad\Longrightarrow\quad d\leq n-k+1.
\]
The equality case can now be expressed entirely by ranks. This is the classical MDS column criterion \cite[Section~2.4]{HP}, equivalently the arc correspondence \cite[Theorem~17]{BallLavrauw}; the rank formula and proof are retained below.
\begin{lemma}\label{lem:general-column}
For a linear $[n,k]_q$ code $C$ with generator matrix $G$, the following are equivalent:
\begin{enumerate}[label=(\roman*)]
\item $C$ is MDS;
\item $\det G_I\neq0$ for every $|I|=k$;
\item every set of at most $k$ columns of $G$ is independent.
\end{enumerate}
More generally,
\begin{equation}\label{eq:distance-rank}
 d(C)=n-\max\{|I|:\rank G_I<k\}.
\end{equation}
\end{lemma}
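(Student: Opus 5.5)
The plan is to prove the general rank formula \eqref{eq:distance-rank} first, and then read off the equivalences from it.

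\textbf{Step 1: the distance formula.} A nonzero codeword $c=uG$ with $u\neq0$ vanishes exactly on a set $I$ of coordinates with $uG_I=0$. Such a $u$ exists if and only if $\rank G_I<k$.
\begin{itemize}
\item Given $c\neq0$, put $Z=\{i:c_i=0\}$. Then $uG_Z=0$ with $u\neq0$, since $G$ has rank $k$ and $c\neq0$. Hence $\rank G_Z<k$, so $\wt(c)=n-|Z|\geq n-\max\{|I|:\rank G_I<k\}$.
\item Conversely, let $I$ be a maximal set with $\rank G_I<k$, and pick $u\neq0$ in the left kernel of $G_I$. Then $c=uG$ is nonzero because $\rank G=k$, and $c$ vanishes on $I$. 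So $\wt(c)\leq n-|I|$.
\end{itemize}
Together these give equality in \eqref{eq:distance-rank}. The empty set has rank $0<k$, so the maximum is over a nonempty family.

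\textbf{Step 2: (ii)$\Leftrightarrow$(iii).} If every $k$-subset of columns is independent, then every subset of at most $k$ columns is independent, since a subset of an independent set is independent. Conversely, $k$ independent columns form an invertible $k\times k$ matrix. This uses $n\geq k$, which holds because $\rank G=k$.

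\textbf{Step 3: (i)$\Leftrightarrow$(ii).} By the Singleton bound, $C$ is MDS iff $d=n-k+1$. By \eqref{eq:distance-rank}, this holds iff the largest rank-deficient column set has size exactly $k-1$.
\begin{itemize}
\item Any $k-1$ columns have rank at most $k-1<k$, so the maximum is always at least $k-1$.
\item Every rank-deficient set has size at most $k-1$ iff every $I$ with $|I|\geq k$ satisfies $\rank G_I=k$.
\item It suffices to check $|I|=k$: if every $k$-subset has full rank, then every larger set contains one and also has full rank. Full rank of a $k\times k$ block is $\det G_I\neq0$.
\end{itemize}

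No step is a real obstacle. The only point needing care is in Step 1: one must use $\rank G=k$ to guarantee that a nonzero $u$ yields a nonzero codeword.
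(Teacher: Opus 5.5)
Your proof is correct and follows the paper's route: first the rank formula via the left kernel of $G_I$, using full row rank of $G$ to make $uG\neq0$, and then the equivalences read off from it, with (ii)$\Leftrightarrow$(iii) obtained by passing between subsets and $k$-subsets of columns. The only differences are cosmetic. You get $d\le n-k+1$ directly from the formula (any $k-1$ columns are rank-deficient) instead of citing the Singleton bound, and the remark that $n\ge k$ is needed really belongs to the direction that embeds a set of fewer than $k$ columns into a $k$-set.
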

\begin{proof}
For every coordinate set $I$,
\begin{align*}
 \rank G_I<k
 &\iff \exists\,u\in\F_q^{1\times k}\setminus\{0\}:uG_I=0\\
 &\iff \exists\,c=uG\in C\setminus\{0\}:c_i=0\quad(i\in I).
\end{align*}
Here $uG\neq0$ because $G$ has full row rank. Maximizing the possible number of zero coordinates proves \eqref{eq:distance-rank}. Therefore
\[
 \bigl(\det G_I\neq0\text{ for all }|I|=k\bigr)
 \iff d(C)\geq n-k+1
 \iff d(C)=n-k+1.
\]
The last equivalence uses the Singleton bound. Finally, a dependent set of $s<k$ columns can be enlarged to a dependent set of $k$ columns. This proves (ii)$\Leftrightarrow$(iii).
\end{proof}

Duality and shortening are the two operations needed for elementary length bounds. These standard facts appear in \cite[Sections~1.5.3 and~2.4]{HP}; we state their precise parameters and retain their proofs.
\begin{lemma}\label{lem:dual-short}
Let $C$ be an $[n,k,n-k+1]_q$ MDS code with $1\leq k<n$.
\begin{enumerate}[label=(\roman*)]
\item Its Euclidean dual
\[
 C^\perp=\{y\in\F_q^n:\textstyle\sum_i c_i y_i=0\text{ for every }c\in C\}
\]
is an $[n,n-k,k+1]_q$ MDS code.
\item Shortening on any $s<k$ specified coordinates produces an
\[
 [n-s,k-s,n-k+1]_q
\]
MDS code.
\end{enumerate}
\end{lemma}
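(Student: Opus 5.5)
For (i), I plan to use the rank formula \eqref{eq:distance-rank} from Lemma~\ref{lem:general-column}. The first step is to fix a generator matrix $G$ of $C$ and let $H\in\F_q^{(n-k)\times n}$ be a parity-check matrix. By definition, $C=\{x:Hx^{T}=0\}$, and $H$ has full rank $n-k$, so $H$ generates $C^\perp$, which therefore has dimension $n-k$.

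The main step is to show that every $n-k$ columns of $H$ are independent. Suppose instead that a set $J$ of $n-k$ columns of $H$ is dependent. Then some nonzero vector $x$ supported in $J$ satisfies $Hx^T=0$. This $x$ lies in $C$ and has weight at most $n-k<n-k+1=d(C)$, a contradiction. Lemma~\ref{lem:general-column}(iii) applied to $H$ then shows that $C^\perp$ is MDS, hence has minimum distance $n-(n-k)+1=k+1$. The case $k=n-1$ does not need separate treatment: here $n-k=1$, and the argument only requires that no single column of $H$ is zero, which is what it gives.

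For (ii), let $S$ be the set of $s<k$ specified coordinates. The shortened code is $C_S=\{c\in C: c_i=0\ (i\in S)\}$ with the coordinates in $S$ deleted. Deleting them does not change weights, because they are zero.

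Every subset of $S$ extends to an information set, since $|S|<k$ and every $k$ columns of $G$ are independent by Lemma~\ref{lem:general-column}. So the projection $C\to\F_q^{S}$, $c\mapsto c_S$, is surjective: choose a $k$-set $I\supseteq S$, and the map $u\mapsto uG_I$ is a bijection onto $\F_q^{I}$. The kernel of this projection therefore has dimension $k-s$, which gives $\dim C_S=k-s$.

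Every nonzero word of $C_S$ comes from a nonzero codeword of $C$, so its weight is at least $n-k+1$. The Singleton bound for length $n-s$ and dimension $k-s$ gives weight at most $(n-s)-(k-s)+1=n-k+1$. Equality follows, so $C_S$ is an $[n-s,k-s,n-k+1]_q$ MDS code.

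I expect no real obstacle. The only care needed is in identifying $C^\perp$ as the row space of the parity-check matrix, and in the surjectivity step of (ii), which is where the hypothesis $s<k$ is used.
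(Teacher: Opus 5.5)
Your proof is correct. Part (ii) is essentially the paper's argument. The paper counts $\dim\{u:uG_S=0\}=k-s$ from $\rank G_S=s$, and your surjectivity of $c\mapsto c_S$ through an information set $I\supseteq S$ is the same fact, seen on codewords rather than on message vectors.

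Part (i) takes a genuinely different route. The paper fixes an arbitrary $k$-set $I$, row-reduces to $G=(I_k\mid A)$, and writes down $H=(-A^{\mathsf T}\mid I_{n-k})$, whose columns on $J=I^c$ form an identity matrix. It thus shows directly that the complement of every information set of $C$ is an information set of $C^\perp$. It never uses the value of $d(C)$, and it identifies a generator of $C^\perp$ by explicit construction plus a dimension count.

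Your argument instead combines the kernel description $C=\{x:Hx^{\mathsf T}=0\}$ with the minimum distance: a dependence among $n-k$ columns of $H$ is a codeword of $C$ of weight at most $n-k<d(C)$. This is shorter and avoids putting $G$ in systematic form for each $I$. Its one implicit input is that a matrix $H$ whose rows span $C^\perp$ also satisfies $\{x:Hx^{\mathsf T}=0\}=C$. This follows from $\dim C^\perp=n-k$ (rank--nullity for $y\mapsto Gy^{\mathsf T}$) together with $C\subseteq(C^\perp)^\perp$ and a dimension comparison. You should state that justification rather than invoke it \emph{by definition}. Your separate remark on $k=n-1$ is correct but unnecessary, since the general argument already covers it.
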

\begin{proof}
For (i), choose any $k$-set $I$ and write $J=I^c$. By Lemma~\ref{lem:general-column}, a coordinate permutation and invertible row operations give
\[
 G=(I_k\mid A),\qquad
 H=(-A^{\mathsf T}\mid I_{n-k}),\qquad GH^{\mathsf T}=0.
\]
The rows of $H$ are independent, so they generate $C^\perp$. Its columns indexed by $J$ form $I_{n-k}$. Changing the generator matrix of $C^\perp$ by an invertible row operation does not change its information sets. Because $I$ was arbitrary, every $(n-k)$-set is therefore an information set for $C^\perp$. Lemma~\ref{lem:general-column} gives the claimed parameters.

For (ii), let $S$ be the shortening set, $|S|=s$, and define
\[
 U_S=\{u\in\F_q^{1\times k}:uG_S=0\}.
\]
Since $\rank G_S=s$,
\[
 \dim U_S=k-s.
\]
The map $u\mapsto(uG)_{S^c}$ is injective on $U_S$: its kernel would give $uG=0$. Thus the shortened code has dimension $k-s$. Its nonzero words have the same weight as the corresponding words of $C$, because the deleted coordinates are zero. Consequently,
\[
 n-k+1\leq d(\operatorname{Short}_S C)
 \leq(n-s)-(k-s)+1=n-k+1,
\]
and equality follows.
\end{proof}

The projective line now provides the needed numerical interval. The first inequality is the elementary linear-code form of the classical Bush bound \cite{Bush}; see also \cite[Section~2.4]{HP}. The other two inequalities follow from duality, as shown below.
\begin{lemma}\label{lem:bush}
Every nontrivial linear $[n,k]_q$ MDS code satisfies
\begin{equation}\label{eq:elementary-bounds}
 n\leq q+k-1,\qquad k\leq q-1,\qquad n\leq2q-2.
\end{equation}
\end{lemma}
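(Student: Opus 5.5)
The plan is to prove the first inequality $n\leq q+k-1$ directly by reducing to dimension two, and then obtain the other two inequalities from it by duality, exactly as the preamble to the statement suggests.

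\emph{Step 1: the base case $k=2$.} Let $C$ be an $[n,2]_q$ MDS code with generator matrix $G$. By Lemma~\ref{lem:general-column}(iii), any two columns of $G$ are linearly independent. In particular each column is nonzero, so each spans a point of $\PG(1,q)$, and distinct columns span distinct points. There are only $q+1$ such points, so $n\leq q+1=q+k-1$.

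\emph{Step 2: reduction of general $k$ to $k=2$.} Let $k\geq3$. Shorten $C$ on any $s=k-2<k$ coordinates. Lemma~\ref{lem:dual-short}(ii) produces an $[n-k+2,2]_q$ MDS code. To apply Step 1 we need this code to be nontrivial, that is $2\leq n-k$; this holds because $k\leq n-2$. Step 1 then gives $n-k+2\leq q+1$, which is $n\leq q+k-1$.

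\emph{Step 3: the bound on $k$ via duality.} By Lemma~\ref{lem:dual-short}(i), $C^\perp$ is an $[n,n-k]_q$ MDS code. It is again nontrivial, since $2\leq n-k\leq n-2$. Applying Step 2 to $C^\perp$ gives
\[
n\leq q+(n-k)-1,
\]
that is, $k\leq q-1$.

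\emph{Step 4: the bound on $n$.} Combining Steps 2 and 3 gives
\[
n\leq q+k-1\leq q+(q-1)-1=2q-2.
\]

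I expect no real obstacle here. The only care needed is the nontriviality bookkeeping: checking that the shortened code keeps dimension exactly $2$ with length at least $4$, and that the dual also satisfies $2\leq n-k\leq n-2$. Both checks follow immediately from the hypothesis $2\leq k\leq n-2$.
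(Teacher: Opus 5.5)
Your proposal is correct and follows the paper's proof: shorten on $k-2$ coordinates to get a two-dimensional MDS code whose columns are distinct points of $\PG(1,q)$, apply the resulting bound to the nontrivial MDS dual to get $k\leq q-1$, and combine the two inequalities. Your separate treatment of $k=2$ and your explicit check that the dual is nontrivial are slightly more careful bookkeeping than the paper's, but they are not a different argument.
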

\begin{proof}
Shortening on $k-2$ coordinates gives an $[n-k+2,2,n-k+1]_q$ MDS code. Its columns are nonzero and pairwise nonproportional. They represent distinct one-dimensional subspaces of $\F_q^2$, of which there are
\[
 |\PG(1,q)|=\frac{q^2-1}{q-1}=q+1.
\]
Therefore $n-k+2\leq q+1$, which is the first inequality.

Nontriviality gives $n-k\geq2$, so the same inequality applies to the MDS dual:
\[
 n\leq q+(n-k)-1
 \quad\Longrightarrow\quad k\leq q-1.
\]
Substitution yields $n\leq q+(q-1)-1=2q-2$.
\end{proof}

We shall also change scalar fields. The next observation is a direct application of the standard minor characterization of matrix rank; the coding-theoretic distance conclusion follows from \eqref{eq:distance-rank}. No novelty is claimed for this linear-algebraic fact.
\begin{lemma}\label{rem:scalar}
Let $K\subseteq L$ be fields, let $G\in K^{k\times n}$ have rank $k$, and put $C_K=\operatorname{rowspan}_K G$, $C_L=\operatorname{rowspan}_L G$. Then
\[
 \rank_K G_I=\rank_L G_I\quad\text{for every }I,
 \qquad d(C_K)=d(C_L).
\]
In particular, $C_K$ is MDS if and only if $C_L$ is MDS.
\end{lemma}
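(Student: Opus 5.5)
The plan is to reduce everything to the minor characterization of rank, which is insensitive to the field extension $K\subseteq L$. Determinants of square submatrices of $G_I$ are polynomial expressions in the entries of $G$, which lie in $K$. Hence a given minor is the same element whether it is computed in $K$ or in $L$, and in particular it vanishes in $K$ exactly when it vanishes in $L$.

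The rank of a matrix over any field is the largest size of a nonvanishing square minor. Applying this over both $K$ and $L$ to the same matrix $G_I$ gives $\rank_K G_I=\rank_L G_I$ for every coordinate set $I$. Taking $I$ to be the full index set, $G$ also has rank $k$ over $L$. So $G$ is a generator matrix of a $k$-dimensional code $C_L$, and Lemma~\ref{lem:general-column} applies to it.

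For the distance, the plan is to apply \eqref{eq:distance-rank} separately to $C_K$ and $C_L$ with the same matrix $G$. The two right-hand sides are
\[
 n-\max\{|I|:\rank_K G_I<k\}
 \quad\text{and}\quad
 n-\max\{|I|:\rank_L G_I<k\}.
\]
These index families coincide by the first step, so $d(C_K)=d(C_L)$. Both codes have length $n$ and dimension $k$. The MDS condition $d=n-k+1$ therefore holds for one exactly when it holds for the other, which gives the final assertion.

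There is no real obstacle here. The only point needing care is that \eqref{eq:distance-rank} was proved for an arbitrary field, which it was, since its proof uses nothing beyond linear algebra. One must also confirm that the full-rank hypothesis transfers to $L$, and the minor argument supplies this.
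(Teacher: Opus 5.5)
Your proposal is correct and is essentially the paper's proof: minors of $G_I$ lie in $K$ and vanish over $K$ exactly when they vanish over $L$, so the ranks agree, and then \eqref{eq:distance-rank} gives $d(C_K)=d(C_L)$. Your explicit checks that $G$ keeps rank $k$ over $L$ and that \eqref{eq:distance-rank} holds over any field are points the paper leaves implicit; they are correct and worth stating.
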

\begin{proof}
For every square submatrix $B$ of $G_I$, its determinant is an element of $K$, and
\[
 \det_K B=0\iff\det_L B=0.
\]
Since rank is the largest size of a nonzero minor, all ranks agree. Formula~\eqref{eq:distance-rank} then gives equality of minimum distances.
\end{proof}

\subsection{Cyclic ideals and full coordinate period}
Having established the general MDS criteria, we now examine how the cyclic structure restricts the coordinate period.
The ideal description of cyclic codes is standard \cite[Chapter~4]{HP}. Identify $(c_0,\ldots,c_{n-1})$ with $\sum_{i=0}^{n-1}c_iX^i$ modulo $X^n-1$. The right cyclic shift becomes multiplication by $X$. A cyclic code is therefore an ideal
\[
 C=\langle g(X)\rangle\subseteq R_n:=\F_q[X]/(X^n-1),
\]
where $g$ is a unique monic divisor of $X^n-1$.
\begin{definition}\label{def:cyclic}
The check polynomial is $h=(X^n-1)/g$. The code is \emph{irreducible} if $h$ is irreducible over $\F_q$. It is \emph{faithful} if a root of $h$ has multiplicative order $n$.
\end{definition}
The check polynomial also records the action of the cyclic shift, as the following module description shows. The map
\begin{equation}\label{eq:cyclic-module}
 \F_q[X]/(h)\longrightarrow C,\qquad
 a\bmod h\longmapsto ag\bmod(X^n-1)
\end{equation}
is an isomorphism of $\F_q[X]$-modules: its kernel is zero because
\[
 X^n-1\mid ag\iff gh\mid ag\iff h\mid a.
\]
Consequently,
\[
 \dim C=\deg h=k,
 \qquad C=\{ag:\deg a<k\},
\]
and the minimal polynomial of the right shift on $C$ is $h$.

We have not assumed $\gcd(n,q)=1$. The periodic structure underlying the next proof is classical for irreducible cyclic codes; compare \cite{Hollmann} and \cite[Section~2.5]{FHLX}. We give the argument directly, including the repeated-root case, before applying the MDS condition.
\begin{lemma}\label{lem:faithful}
An irreducible cyclic MDS code of dimension $k\geq2$ is faithful. In particular, $\gcd(n,q)=1$.
\end{lemma}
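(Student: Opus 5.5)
Let $h$ be the irreducible check polynomial of degree $k\geq2$, and let $\beta$ be a root of $h$ in $\F_{q^k}$; write $e$ for the multiplicative order of $\beta$. Since $h$ is irreducible and $h\mid X^n-1$, we have $\beta\neq0$, so $\beta^n=1$ and $e\mid n$. The plan is to show that the code has coordinate period $e$: each codeword is the length-$e$ block repeated $n/e$ times. Then, if $e<n$, the MDS property fails.

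First step: the periodicity. I would use the module isomorphism \eqref{eq:cyclic-module}. The minimal polynomial of the shift $X$ on $C$ is $h$. Since $\beta^e=1$, we have $h\mid X^e-1$, so $X^e$ acts as the identity on $C\cong\F_q[X]/(h)$. Concretely, every $c\in C$ satisfies $c_{i+e}=c_i$, indices modulo $n$. This step needs no separability assumption: it uses only that $h$ is irreducible, hence equal to the minimal polynomial of $\beta$, and that $h$ divides $X^e-1$. This handles the repeated-root case, where $h$ may occur in $X^n-1$ with multiplicity.

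Second step: the MDS contradiction when $e<n$. Coordinates $0$ and $e$ then carry identical columns $G_{\{0\}}=G_{\{e\}}$ of any generator matrix. These columns are nonzero, since otherwise every column would vanish by periodicity and shift-invariance. Two equal nonzero columns form a dependent set of size $2\leq k$, contradicting Lemma~\ref{lem:general-column}(iii). Hence $e=n$, which is faithfulness. Equivalently, in distance terms, a nonzero codeword vanishing on a coordinate set $I$ with $|I|=k-1$ also vanishes on the shifted copies of $I$, giving too many zeros. The direct column argument is cleaner, and it is exactly where $k\geq2$ is used.

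Third step: $\gcd(n,q)=1$. The element $\beta$ lies in $\F_{q^k}^\times$, a group of order $q^k-1$, so $n=e\mid q^k-1$. This order is coprime to $p$, hence to $q$.

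The only delicate point is the first step in the repeated-root case. We must make sure that ``minimal polynomial of the shift is $h$'' really implies $X^e$ acts trivially, and that we are not silently using a root of $X^n-1$ of higher multiplicity. The module isomorphism makes this rigorous: multiplication by $X^e-1$ on $\F_q[X]/(h)$ is zero because $h\mid X^e-1$. I expect no other obstacle.
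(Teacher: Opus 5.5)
Your proof is correct and follows essentially the paper's argument: the period of every codeword is the order of a root of $h$, two equal columns then contradict the column criterion because $k\geq2$, and $n=e\mid q^k-1$ gives coprimality. The only difference is cosmetic: you get the periodicity from $X^e-1$ annihilating $C\cong\F_q[X]/(h)$, whereas the paper writes each codeword explicitly as repeated blocks via $g=(1+X^e+\cdots+X^{(n/e-1)e})(X^e-1)/h$. Your check that the two columns are nonzero is unnecessary, since two equal columns are dependent in any case.
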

\begin{proof}
Let $\alpha$ be a root of $h$, and put $s=\ord(\alpha)$. Since $h\mid X^n-1$,
\[
 s\mid n,\qquad h\mid X^s-1,\qquad\gcd(s,p)=1.
\]
Write $n=rs$. Then
\begin{align*}
 g(X)
 &=\frac{X^n-1}{h(X)}\\
 &=\bigl(1+X^s+\cdots+X^{(r-1)s}\bigr)
     \frac{X^s-1}{h(X)}.
\end{align*}
For a codeword $ag$, $\deg a<k$, put $b=a(X)(X^s-1)/h(X)$. We have
\[
 \deg b\leq(k-1)+(s-k)=s-1,
 \qquad ag=\sum_{j=0}^{r-1}X^{js}b.
\]
Thus the coefficient vector of every word is
\[
 (\boldsymbol b,\boldsymbol b,\ldots,\boldsymbol b)
 \quad(r\text{ identical blocks of length }s).
\]
If $r>1$, any generator matrix has equal columns at positions $i$ and $i+s$. This contradicts Lemma~\ref{lem:general-column}, because $k\geq2$. Hence $r=1$, $s=n$, and $\gcd(n,p)=1$.
\end{proof}

\subsection{Derivation of the trace representation}
With full coordinate period established, we can pass from the cyclic ideal to a field-theoretic model.
The trace description is classical, rather than a new representation theorem; see \cite{Hollmann} and \cite[Section~2.5]{FHLX}. The nondegenerate trace pairing is standard finite-field theory \cite[Section~3.8]{HP}. We retain a full derivation because the inverse-root convention matters later.
The representation argument applies to every faithful irreducible cyclic code; after faithfulness is known, it does not require the MDS hypothesis. Let $C$ be such a code. Choose a root $\alpha$ of its check polynomial, and set
\[
 \beta=\alpha^{-1},\qquad L=\F_{q^k},\qquad H=\langle\beta\rangle.
\]
Then $|H|=n$, $\F_q(\beta)=L$, and Lemma~\ref{lem:orders} gives $k=\ord_n(q)$. The inverse is chosen to match the right-shift convention, not to replace $h$ silently by its reciprocal.

The trace map and its bilinear pairing are
\[
 T(x)=\Tr_{L/\F_q}(x)=\sum_{j=0}^{k-1}x^{q^j},
 \qquad B(a,x)=T(ax).
\]
They are $\F_q$-linear in each relevant argument. Also $T(x)^q=T(x)$, so $T(L)\subseteq\F_q$. The polynomial defining $T$ is nonzero and has degree $q^{k-1}<q^k$; hence it cannot vanish on all of $L$. Choose $z$ with $T(z)\neq0$. For $x\neq0$,
\begin{equation}\label{eq:trace-nondegenerate}
 B(z/x,x)=T(z)\neq0.
\end{equation}
Thus $B$ is nondegenerate, even when $p\mid k$ and $T(1)=0$.

We can now identify the original code, rather than merely construct a cyclic code with similar parameters.
\begin{proposition}\label{prop:trace-model}
With the preceding choices,
\begin{equation}\label{eq:trace}
 C=C(q,n)=\left\{
 \bigl(\Tr_{\F_{q^k}/\F_q}(a\beta^i)\bigr)_{i=0}^{n-1}:
 a\in\F_{q^k}\right\}.
\end{equation}
Every primitive choice of $\beta$ in the subgroup of order $n$ gives a permutation-equivalent code.
\end{proposition}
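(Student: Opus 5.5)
The plan is to show that the right-hand side $D$ of \eqref{eq:trace} is a $k$-dimensional $\F_q$-subspace of $\F_q^n$ contained in $C^{\perp\perp}$-type constraints, more concretely contained in $C$, and then conclude by dimension count. Linearity of $D$ follows from the $\F_q$-linearity of $T$ in $a$. The map
\[
 \varepsilon\colon L\to\F_q^n,\qquad a\mapsto\bigl(T(a\beta^i)\bigr)_{i=0}^{n-1}
\]
is injective. Indeed, if $T(a\beta^i)=0$ for all $i$, then $T(ax)=0$ for every $x$ in the $\F_q$-span of $H$. That span is a subring containing $\beta$, hence equals $\F_q(\beta)=L$. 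The nondegeneracy \eqref{eq:trace-nondegenerate} then forces $a=0$. So $\dim_{\F_q}D=k=\dim C$.

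For the inclusion $D\subseteq C$, I would use the characterization of $C$ as the set of $c\in R_n$ with $h(X)c(X)\equiv0\pmod{X^n-1}$. This is equivalent to the ideal description: $C=gR_n$ is the annihilator of $h$ because $gh=X^n-1$, and the annihilator of $h$ has dimension $n-\deg g=k$ in $R_n$. Writing $h=\sum_{j}h_jX^j$ and $c_i=T(a\beta^i)$ with indices modulo $n$ (legitimate because $\beta^n=1$), the coefficient of $X^m$ in $hc$ modulo $X^n-1$ is
\[
 \sum_j h_j c_{m-j}=T\Bigl(a\beta^m\sum_j h_j\beta^{-j}\Bigr)=T\bigl(a\beta^m h(\alpha)\bigr)=0.
\]
Here I use that the $h_j$ lie in $\F_q$ and can therefore be pulled inside the trace, and that $\beta^{-1}=\alpha$. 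This is exactly where the inverse-root convention matters; with $\beta=\alpha$ one would instead obtain the code of the reciprocal polynomial. Thus $D\subseteq C$, and equality follows from the dimension count.

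For the final statement, any other generator of $H$ has the form $\beta'=\beta^u$ with $\gcd(u,n)=1$. The word $(T(a\beta^{ui}))_i$ is the word $(T(a\beta^i))_i$ read through the coordinate permutation $i\mapsto ui\bmod n$, which is a bijection of $\Z/n\Z$. Hence the two codes are permutation-equivalent. Note that $\beta'^{-1}$ is a root of a possibly different irreducible factor, so the code is not literally equal, only equivalent.

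The only delicate step is the annihilator characterization together with the index bookkeeping modulo $n$. In particular one must check the sign convention, that is, that $\sum_j h_jc_{m-j}$ rather than $\sum_j h_jc_{m+j}$ is the coefficient, so that the root evaluated is $\beta^{-1}=\alpha$. I would verify this carefully using the right-shift convention $X\cdot c$ fixed earlier.
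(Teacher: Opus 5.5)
Your proof is correct, but you identify the image of $a\mapsto\bigl(T(a\beta^i)\bigr)_i$ with $C$ differently from the paper. The paper never checks parity relations. It observes that the right shift satisfies $S\Theta(a)=\Theta(\alpha a)$, so $\Theta(L)$ is shift-invariant and the shift on it has the minimal polynomial of $\alpha$, namely $h$. It then invokes the module isomorphism \eqref{eq:cyclic-module}: on a cyclic ideal the shift has minimal polynomial equal to the check polynomial. Equality follows because only one ideal has check polynomial $h$.

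You instead describe $C=gR_n$ as the annihilator of $h$. Your justification is fine; equivalently, cancel $h$ in $gh\mid hc$. You then verify directly that $\sum_j h_jc_{m-j}=T\bigl(a\beta^m h(\beta^{-1})\bigr)=0$ and finish by a dimension count using injectivity.

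What each route buys: yours shows exactly where the inverse-root convention enters, through the index $m-j$, and needs no fact about the minimal polynomial of the shift. The paper's avoids the index and sign bookkeeping by reusing the structural description of cyclic ideals it has already set up.

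The remaining steps match the paper. Your injectivity argument (the $\F_q$-span of $H$ is a subring, hence the field $\F_q(\beta)=L$) is interchangeable with the paper's appeal to the basis $1,\beta,\ldots,\beta^{k-1}$. The permutation-equivalence step via $i\mapsto ui$ is the same.
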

\begin{proof}
Let $\Theta:L\to\F_q^n$ denote the displayed map. If $\Theta(a)=0$, then
\[
 T(a\beta^i)=0\quad(0\leq i<k).
\]
The powers $1,\beta,\ldots,\beta^{k-1}$ form a basis of $L$, so $T(ax)=0$ for all $x\in L$. By \eqref{eq:trace-nondegenerate}, $a=0$. Therefore $\Theta$ is injective and $\dim\Theta(L)=k$.

To identify the image with $C$, it remains to compare the shift actions. If $S$ is the right shift, then, with indices modulo $n$,
\[
 (S\Theta(a))_i=T(a\beta^{i-1})=T(\alpha a\beta^i),
 \qquad S\Theta(a)=\Theta(\alpha a).
\]
Thus $\Theta(L)$ is cyclic, and the shift on it has minimal polynomial $h$, the minimal polynomial of $\alpha$. By \eqref{eq:cyclic-module}, a cyclic ideal has shift minimal polynomial equal to its check polynomial. The unique ideal with check polynomial $h$ is $C$; hence $\Theta(L)=C$.

Every element of order $n$ in $L^\times$ is $\beta^u$ with $\gcd(u,n)=1$. For such $u$,
\[
 \bigl(T(a(\beta^u)^i)\bigr)_i
 =\bigl(T(a\beta^{ui})\bigr)_i,
\]
which is the coordinate permutation $i\mapsto ui$. This proves the final assertion.
\end{proof}

The remaining step is to express the MDS condition in terms of the subgroup $H$. The next lemma is the classical MDS column criterion \cite[Section~2.4]{HP} transported through the trace pairing; it is not a separate novelty claim. Its explicit formulation for the subgroup drives the rest of the paper.
\begin{lemma}\label{lem:column}
The code in \eqref{eq:trace} is MDS if and only if every $k$ distinct elements of $H$ are independent over $\F_q$. In that case every set of at most $k$ distinct elements of $H$ is independent.
\end{lemma}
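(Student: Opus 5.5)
The plan is to apply Lemma~\ref{lem:general-column} to a concrete generator matrix for \eqref{eq:trace} and then translate column dependence into $\F_q$-linear dependence of the elements $\beta^i$ through the nondegenerate trace pairing. First choose an $\F_q$-basis $\omega_1,\ldots,\omega_k$ of $L$. The rows $\Theta(\omega_j)$ form a generator matrix $G$, since $\Theta$ is injective. The column of $G$ indexed by $i$ is then
\[
 \bigl(T(\omega_1\beta^i),\ldots,T(\omega_k\beta^i)\bigr)^{\mathsf T}.
\]
Equivalently, it is the coordinate vector of the linear functional $B(\cdot,\beta^i)$ with respect to the dual basis of $L^*=\operatorname{Hom}_{\F_q}(L,\F_q)$. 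The map
\[
 \Psi:L\to L^*,\qquad x\mapsto B(\cdot,x),
\]
is $\F_q$-linear. By \eqref{eq:trace-nondegenerate} it is injective, and because both spaces have dimension $k$ it is an isomorphism.

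The key step is the equivalence, for any index set $I$, between linear independence of the columns $G_i$ ($i\in I$) and $\F_q$-independence of the elements $\beta^i$ ($i\in I$). This follows because composing $\Psi$ with the dual-basis coordinate isomorphism $L^*\to\F_q^k$ is an $\F_q$-linear isomorphism sending $\beta^i$ to the $i$-th column. Explicitly, for $c_i\in\F_q$,
\[
 \sum_{i\in I} c_iG_i=0\iff T\Bigl(\omega_j\sum_{i\in I}c_i\beta^i\Bigr)=0\ (1\le j\le k)\iff \sum_{i\in I}c_i\beta^i=0.
\]
The last equivalence holds because the $\omega_j$ span $L$ and $B$ is nondegenerate.

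Since $|H|=n$ and $\beta$ has order $n$, the map $i\mapsto\beta^i$ for $0\le i<n$ is a bijection onto $H$. Column sets of size $k$ therefore correspond exactly to $k$-subsets of distinct elements of $H$. Lemma~\ref{lem:general-column}, via (i)$\Leftrightarrow$(ii), then gives the first claim. The second claim follows from (i)$\Rightarrow$(iii) of the same lemma, transported through this correspondence, or directly from the fact that any subset of an independent set is independent. There is no real obstacle. The only points needing care are using the inverse-root convention consistently, which is automatic here because $H=\langle\beta\rangle$ is the set of column labels, and noting that nondegeneracy of $B$ holds even when $T(1)=0$, which \eqref{eq:trace-nondegenerate} already supplies.
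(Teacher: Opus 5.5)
Your proposal is correct and follows essentially the same route as the paper: its isomorphism $\Lambda(x)=(T(a_1x),\ldots,T(a_kx))^{\mathsf T}$ is exactly your map $\Psi$ followed by dual-basis coordinates. Like you, it identifies the columns of the generator matrix with the images of the $\beta^i$, transfers linear dependence through this isomorphism, and concludes with Lemma~\ref{lem:general-column}.
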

\begin{proof}
Choose a basis $a_1,\ldots,a_k$ of $L$ and define
\[
 \Lambda:L\longrightarrow\F_q^k,
 \qquad x\longmapsto(T(a_1x),\ldots,T(a_kx))^{\mathsf T}.
\]
Nondegeneracy of $B$ gives $\ker\Lambda=0$, so $\Lambda$ is an isomorphism. The generator matrix of the trace code with rows $\Theta(a_j)$ is
\[
 G=\bigl(\Lambda(1),\Lambda(\beta),\ldots,\Lambda(\beta^{n-1})\bigr).
\]
For any distinct $i_1,\ldots,i_s$ and any $c_1,\ldots,c_s\in\F_q$,
\[
 \sum_{j=1}^s c_j\Lambda(\beta^{i_j})=0
 \iff
 \Lambda\!\left(\sum_{j=1}^s c_j\beta^{i_j}\right)=0
 \iff
 \sum_{j=1}^s c_j\beta^{i_j}=0.
\]
The assertion follows from Lemma~\ref{lem:general-column}.
\end{proof}

\section{Subgroup rigidity and length congruences}\label{sec:rigidity}
The trace model identifies each coordinate with a vector in $L=\F_{q^k}$. A subgroup of the coordinate orbit is especially useful because its span is a subfield. The MDS condition will force that subfield to be all of $L$.

\subsection{Rigidity of every nonidentity subgroup}
We begin with the order constraint from which the arithmetic restrictions in this section follow.
\begin{theorem}\label{thm:rigidity}
Let $C$ be a nontrivial irreducible cyclic $[n,k]_q$ MDS code. Then
\begin{equation}\label{eq:rigidity}
 \ord_d(q)=k\qquad(d\mid n,\ d>1).
\end{equation}
In particular, every prime $\ell\mid n$ satisfies
\begin{equation}\label{eq:prime-order}
 \ord_\ell(q)=k,\qquad k\mid\ell-1,
\end{equation}
and
\begin{gather}
 n\text{ is odd},\qquad
 \gcd(n,q-1)=\gcd(n,k)=1,\label{eq:basic-consequences}\\
 n=mk+1\quad\text{for an integer }m\geq2.\label{eq:dimension-congruence}
\end{gather}
\end{theorem}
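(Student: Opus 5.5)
Use the trace model of Proposition~\ref{prop:trace-model} together with the column criterion of Lemma~\ref{lem:column}. By Lemma~\ref{lem:faithful} the code is faithful, so $H=\langle\beta\rangle\leq L^\times$ has order $n$ and $k=\ord_n(q)$. Fix $d\mid n$ with $d>1$ and let $H_d\leq H$ be the unique subgroup of order $d$. Put $r=\ord_d(q)$. By Lemma~\ref{lem:orders}(iii), every element of $H_d$ lies in $\F_{q^r}$, and some element has degree exactly $r$, so the $\F_q$-span of $H_d$ is the subfield $\F_{q^r}$. Lemma~\ref{lem:orders}(ii) gives $r\mid k$, hence $r\leq k$.

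Suppose $r<k$. Since $d>1$ and $\F_{q^r}$ contains the $d$ distinct elements of $H_d$, we have $d\geq r+1$. Any $r+1$ distinct elements of $H_d$ lie in an $r$-dimensional $\F_q$-space, so they are dependent. Because $r+1\leq k$, this contradicts the second sentence of Lemma~\ref{lem:column}. Therefore $r=k$, which proves \eqref{eq:rigidity}.

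The consequences follow arithmetically.
\begin{enumerate}[label=(\alph*)]
\item Taking $d=\ell$ prime gives $\ord_\ell(q)=k$. Since this order divides $|(\Z/\ell\Z)^\times|=\ell-1$, we get $k\mid \ell-1$.
\item Nontriviality gives $k\geq2$. If $\ell\mid q-1$, then $\ord_\ell(q)=1$, which is impossible; so $\gcd(n,q-1)=1$.
\item If $\ell=2$ divided $n$, then $q$ would be odd by faithfulness, and $\ord_2(q)=1\neq k$. Hence $n$ is odd.
\item If $\ell\mid\gcd(n,k)$, then $k\mid\ell-1$ gives $\ell\leq k$ and also $\ell\equiv1\pmod k$, hence $\ell\geq k+1$. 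This is a contradiction, so $\gcd(n,k)=1$.
\item Every prime factor of $n$ is $\equiv1\pmod k$, so $n\equiv1\pmod k$. Write $n=mk+1$.
\end{enumerate}

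It remains to exclude $m\in\{0,1\}$. The case $m=0$ gives $n=1$, which contradicts $n\geq k+2$. The case $m=1$ gives $n=k+1$, which contradicts nontriviality $k\leq n-2$. Hence $m\geq2$.

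The main obstacle is the rigidity step. One must identify the $\F_q$-span of the subgroup $H_d$ with the subfield $\F_{q^r}$, so that its dimension is exactly $r$, and then check that $d\geq r+1$. This second point holds because $H_d\subseteq \F_{q^r}^\times$ forces $d\mid q^r-1$, while $r$ dependent-free elements are still needed to span. Only after both facts are in place does Lemma~\ref{lem:column} yield the dependent set of size $r+1\leq k$.
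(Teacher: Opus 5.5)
Your proof follows the paper's route step for step. You use the same subgroup $H_d$, the same fact that its $\F_q$-span lies in $\F_{q^r}$ with $r=\ord_d(q)\mid k$, and the same choice of $r+1$ dependent elements to contradict Lemma~\ref{lem:column}. The arithmetic consequences are also derived the same way.

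The one step not proved as written is $d\geq r+1$, which you rightly flag as essential, since otherwise $r+1$ distinct elements of $H_d$ need not exist. Neither of your justifications gives it:
\begin{itemize}
\item the facts that $\F_{q^r}$ contains the $d$ elements of $H_d$, or that $d\mid q^r-1$, put no lower bound on $d$;
\item the fact that $H_d$ spans an $r$-dimensional space gives only $d\geq r$.
\end{itemize}

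The missing $+1$ is easy to supply, and the paper does it in one line: $r=\ord_d(q)$ divides $\varphi(d)=|(\Z/d\Z)^\times|$, and $\varphi(d)\leq d-1$ for $d>1$. Two equivalent fixes in the spirit of your argument also work:
\begin{itemize}
\item the minimal polynomial of a generator $\gamma\neq1$ of $H_d$ has degree $r$, and it and $X-1$ are distinct irreducible factors of $X^d-1$, so $r+1\leq d$;
\item $\sum_{i=0}^{d-1}\gamma^i=(\gamma^d-1)/(\gamma-1)=0$, so the $d$ elements of $H_d$ are dependent and cannot form a basis of their $r$-dimensional span, forcing $d>r$.
\end{itemize}
With any of these inserted, the proof is complete.

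A smaller slip is in (d): the bound $\ell\leq k$ comes from $\ell\mid k$, not from $k\mid\ell-1$.
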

\begin{proof}
Let $H=\langle\beta\rangle$ be the column subgroup. For $d>1$ dividing $n$, put
\[
 \gamma=\beta^{n/d},\qquad H_d=\langle\gamma\rangle,
 \qquad r=\ord_d(q).
\]
Then $|H_d|=d$, and Lemma~\ref{lem:orders} gives
\[
 \F_q(\gamma)=\F_{q^r}\subseteq\F_{q^k},\qquad r\mid k.
\]
In fact $\langle H_d\rangle_{\F_q}=\F_q(\gamma)$: powers of $\gamma$ lie in $H_d$, and they span its minimal-polynomial quotient. The column criterion rules out a proper subfield here. If $r<k$, then
\[
 r\leq\varphi(d)\leq d-1,
 \qquad r+1\leq\min\{d,k\}.
\]
Choose $r+1$ distinct elements of $H_d$. They lie in the $r$-dimensional space $\F_{q^r}$ and are dependent over $\F_q$, contradicting Lemma~\ref{lem:column}. Hence $r=k$, proving \eqref{eq:rigidity}.

Now take $d=\ell$ prime. Lagrange's theorem in $(\Z/\ell\Z)^\times$ gives
\[
 k=\ord_\ell(q)\mid\ell-1,
 \qquad\ell\geq k+1\geq3.
\]
All prime factors of $n$ are therefore odd and larger than $k$. This proves that $n$ is odd and $\gcd(n,k)=1$. If a prime $\ell$ divided both $n$ and $q-1$, then
\[
 k=\ord_\ell(q)=1,
\]
contrary to nontriviality. Thus $\gcd(n,q-1)=1$.

These restrictions on the individual prime factors also impose a congruence on the length itself. Writing $n=\prod_{\ell\mid n}\ell^{v_\ell(n)}$, we obtain
\[
 n\equiv\prod_{\ell\mid n}1^{v_\ell(n)}=1\pmod k.
\]
Since $n\geq k+2$, the representation $n=mk+1$ has $m\neq0,1$, and hence $m\geq2$.
\end{proof}

The conclusion is stronger than irreducibility alone. Irreducibility determines the order modulo $n$; the MDS condition determines it modulo every nonidentity divisor of $n$. Two useful numerical consequences follow immediately.
\begin{corollary}\label{cor:factor-count}
Under the hypotheses of Theorem~\ref{thm:rigidity},
\begin{equation}\label{eq:factor-bounds}
 n\geq(k+1)^{\Omega(n)},\qquad
 q\geq(k+1)^{\Omega(n)}-k+1.
\end{equation}
If $n$ is composite, then $q\geq k^2+k+2$. If $k$ is odd, every prime divisor of $n$ is at least $2k+1$.
\end{corollary}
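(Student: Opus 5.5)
The corollary follows directly from Theorem~\ref{thm:rigidity}: every prime factor of $n$ is large, and the Bush-type bound of Lemma~\ref{lem:bush} then transfers the resulting lower bound on $n$ to a lower bound on $q$.

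\emph{Step 1: the bound on $n$.} By \eqref{eq:prime-order}, every prime $\ell\mid n$ satisfies $k\mid\ell-1$, and $\ell\neq1$. Hence $\ell\geq k+1$. Writing $n=\prod_{\ell\mid n}\ell^{v_\ell(n)}$ and bounding each prime factor below by $k+1$ gives
\[
 n\geq(k+1)^{\sum_\ell v_\ell(n)}=(k+1)^{\Omega(n)}.
\]

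\emph{Step 2: the bound on $q$.} The code is nontrivial, so the first inequality of \eqref{eq:elementary-bounds} applies and gives $n\leq q+k-1$. Combining this with Step~1 yields
\[
 q\geq n-k+1\geq(k+1)^{\Omega(n)}-k+1.
\]

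\emph{Step 3: composite length.} If $n$ is composite, then $\Omega(n)\geq2$, and Step~2 gives
\[
 q\geq(k+1)^2-k+1=k^2+k+2.
\]

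\emph{Step 4: odd $k$.} Let $\ell\mid n$ be prime. By \eqref{eq:basic-consequences}, $n$ is odd, so $\ell$ is odd and $\ell-1$ is even. We also have $k\mid\ell-1$. Since $k$ is odd, $\gcd(k,2)=1$, so $2k\mid\ell-1$. As $\ell-1>0$, this forces $\ell-1\geq2k$, that is, $\ell\geq2k+1$.

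No step is a genuine obstacle. The only points needing care are to check that the nontriviality hypothesis $2\leq k\leq n-2$ required by Lemma~\ref{lem:bush} holds, which it does under the hypotheses of Theorem~\ref{thm:rigidity}, and to use the parity of $\ell$ in Step~4.
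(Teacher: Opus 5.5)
Your proposal is correct and follows the paper's proof essentially step for step: you bound every prime factor below by $k+1$ using $k\mid\ell-1$, transfer the bound to $q$ through $n\leq q+k-1$, and handle odd $k$ by a parity argument. For the last step the paper writes $\ell=ak+1$ and shows $a$ is even, while you show $2k\mid\ell-1$; these are the same argument.
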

\begin{proof}
Theorem~\ref{thm:rigidity} and Lemma~\ref{lem:bush} give
\[
 n=\prod_{\ell\mid n}\ell^{v_\ell(n)}
 \geq\prod_{\ell\mid n}(k+1)^{v_\ell(n)}
 =(k+1)^{\Omega(n)},
 \qquad q\geq n-k+1.
\]
For composite $n$, $\Omega(n)\geq2$, so
\[
 q\geq(k+1)^2-k+1=k^2+k+2.
\]
The prime-divisor congruence gives a further improvement when $k$ is odd. If $k$ is odd and $\ell=ak+1$ is an odd prime, then $ak=\ell-1$ is even. Thus $a$ is even and $a\geq2$, giving $\ell\geq2k+1$.
\end{proof}

\subsection{Congruence bounds and the dimension-two endpoint}
The preceding corollary is most effective when $n$ has several factors. Without factoring $n$, one can instead intersect $n\equiv1\pmod k$ with the interval $n\leq q+k-1$.
\begin{corollary}\label{cor:residue}
For a nontrivial irreducible cyclic $[n,k]_q$ MDS code,
\begin{align}
 k\mid q-1&\quad\Longrightarrow\quad n\leq q-k,\label{eq:residue-one}\\
 k\mid q,\quad k\geq3&\quad\Longrightarrow\quad n\leq q-k+1.\label{eq:residue-zero}
\end{align}
In particular, the MDS conjecture holds whenever $q\equiv0$ or $1\pmod k$.
\end{corollary}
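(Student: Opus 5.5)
The plan is to intersect the length congruence \eqref{eq:dimension-congruence} of Theorem~\ref{thm:rigidity}, namely $n\equiv1\pmod k$ with $n\geq 2k+1$, with the upper bound $n\leq q+k-1$ of Lemma~\ref{lem:bush}. The idea is that the numbers $\equiv1\pmod k$ in the window $[\,q-k+1,\,q+k-1\,]$ are very few and can be listed explicitly once $q\bmod k$ is known. Those near the top of the window are then excluded by the divisibility restrictions \eqref{eq:basic-consequences}.

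\emph{Case $k\mid q-1$.} Here $q\equiv1\pmod k$, so the integers $\le q+k-1$ that are $\equiv1\pmod k$ are $q,\ q-k,\ q-2k,\dots$. It suffices to rule out $n=q$. This is immediate, because Lemma~\ref{lem:faithful} gives $\gcd(n,q)=1$, and $\gcd(q,q)=q>1$. (If $k=2$ one could also simply note $n$ odd and $q$ odd, but the coprimality argument works uniformly.) Hence $n\leq q-k$.

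\emph{Case $k\mid q$, $k\geq3$.} Now $q\equiv0\pmod k$, so the candidates $\equiv1\pmod k$ below $q+k-1$ are $q+1,\ q-k+1,\dots$. We must exclude $n=q+1$. Since $k\mid q$ with $k\ge3$, the characteristic $p$ divides $k$. If $p=2$, then $q+1$ is odd, so parity does not help. Instead use $\gcd(n,q-1)=1$ together with rigidity: for any prime $\ell\mid q+1$, we have $q^2\equiv1\pmod\ell$, so $\ord_\ell(q)\mid2$. Theorem~\ref{thm:rigidity} then forces $k=\ord_\ell(q)\le2$, contradicting $k\geq3$. This argument applies in every characteristic, provided $q+1>1$. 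Hence $n\leq q-k+1$.

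For the final sentence, both bounds are at most $q+1$. When $k\mid q$ and $k=2$, we have $q$ even and $k\mid q-1$ fails, and the unrestricted bound is not claimed there. The conjecture still holds in that case, however: $n\equiv1\pmod 2$ with $n\le q+1$ and $n$ odd gives $n\le q+1$ directly from Lemma~\ref{lem:bush}, since $n\le q+k-1=q+1$. The main obstacle is only the bookkeeping of the residues in the window. The one genuinely arithmetic step is excluding $n=q+1$ via $\ord_\ell(q)\mid 2$, and I expect this to go through cleanly.
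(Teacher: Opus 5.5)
Your proposal is correct and follows the paper's proof. You intersect $n\equiv1\pmod k$ with $n\leq q+k-1$, exclude $n=q$ by $\gcd(n,q)=1$, exclude $n=q+1$ by an order argument, and handle $k=2$ via Lemma~\ref{lem:bush}. The only difference is cosmetic: you rule out $n=q+1$ by applying rigidity at a prime $\ell\mid q+1$ to get $\ord_\ell(q)\mid 2$, whereas the paper notes directly that $q\equiv-1\pmod n$ forces $k=\ord_n(q)=2$. Your aside about $\gcd(n,q-1)=1$ is never actually used.
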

\begin{proof}
If $q=uk+1$, then $n\equiv1\pmod k$ and
\[
 n\leq q+k-1=(u+1)k
 \quad\Longrightarrow\quad n\leq uk+1=q.
\]
Equality would give $n=q$, contrary to $\gcd(n,q)=1$. The next admissible integer is $q-k$.

If $q=uk$ and $k\geq3$, then
\[
 n\equiv1\pmod k,\qquad n\leq(u+1)k-1
 \quad\Longrightarrow\quad n\leq uk+1=q+1.
\]
At $n=q+1>2$,
\[
 q\equiv-1\pmod n,
 \qquad\ord_n(q)=2\neq k.
\]
Hence $n\leq q-k+1$. When $k=2$, Lemma~\ref{lem:bush} already gives $n\leq q+1$.
\end{proof}

The following parameter choices illustrate how these congruences can improve the general length bound.
\begin{example}
For $q=2^{10}=1024$ and $k=32$, \eqref{eq:residue-zero} gives $n\leq993$. For $q=2^{12}=4096$ and $k=65$, the divisibility $65\mid4095$ gives $n\leq4031$. These are necessary bounds, not existence assertions at the endpoints.
\end{example}

Dimension two deserves separate treatment: the column condition then tests only pairs, so the intersection with the scalar field gives a complete classification. The criterion below is the irreducible recurrence-code case of Shi, Zhang, and Sol\'e \cite[Theorems~1 and~5]{SZS}. In their notation, the projective period is $e=n/\gcd(n,q-1)$ and the repetition factor is $K=n/e=\gcd(n,q-1)$; the MDS case is $K=1$. We retain the direct subgroup proof and derive the endpoint consequence in the same proposition.
\begin{proposition}\label{prop:dimension-two}
A faithful irreducible cyclic code of dimension two is MDS if and only if $\gcd(n,q-1)=1$. Its possible lengths are exactly the odd divisors $n>1$ of $q+1$. Thus $n=q+1$ occurs in this dimension exactly when $q$ is even.

For a nontrivial irreducible cyclic MDS code of dimension $k\geq3$,
\begin{equation}\label{eq:strict-endpoint}
 n\leq q+1\quad\Longrightarrow\quad n\leq q-1.
\end{equation}
\end{proposition}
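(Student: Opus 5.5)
For dimension two I will apply Lemma~\ref{lem:column} directly. Since $k=2$, the code in \eqref{eq:trace} is MDS if and only if every two distinct elements $x,y\in H$ are independent over $\F_q$. This is equivalent to saying that $y/x\notin\F_q^\times$ whenever $x\neq y$. Because $H$ is a group, $y/x$ runs over $H\setminus\{1\}$. So the condition becomes $H\cap\F_q^\times=\{1\}$. Now $H$ and $\F_q^\times$ are subgroups of the cyclic group $L^\times$, of orders $n$ and $q-1$. Their intersection is therefore the unique subgroup of order $\gcd(n,q-1)$. Hence the code is MDS exactly when $\gcd(n,q-1)=1$.

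For the possible lengths, faithfulness gives $k=\ord_n(q)=2$, so $n\mid q^2-1=(q-1)(q+1)$. Combined with $\gcd(n,q-1)=1$, this forces $n\mid q+1$. Since $\gcd(q+1,q-1)$ divides $2$, the coprimality amounts to $n$ being odd when $q$ is odd. When $q$ is even, $q+1$ is itself odd, so the condition is automatic.

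For the converse, take any odd divisor $n>1$ of $q+1$. Such an $n$ does not divide $q-1$: otherwise $n\mid 2$, which is impossible. Hence $\ord_n(q)=2$, and the subgroup of order $n$ in $\F_{q^2}^\times$ yields a faithful irreducible cyclic code via Proposition~\ref{prop:trace-model}. Its MDS property follows because $\gcd(n,q-1)\mid\gcd(q+1,q-1)\mid 2$ and $n$ is odd. In particular $n=q+1$ is attainable exactly when $q+1$ is odd, that is, when $q$ is even.

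For \eqref{eq:strict-endpoint}, with $k\geq3$, I only need to exclude $n=q+1$ and $n=q$.
\begin{itemize}
\item The case $n=q$ is excluded by Lemma~\ref{lem:faithful}, which gives $\gcd(n,q)=1$.
\item If $n=q+1$, then $q\equiv-1\pmod n$ and $n>2$, so $\ord_n(q)=2$. This contradicts $k=\ord_n(q)\geq3$, which follows from the trace model (or from Theorem~\ref{thm:rigidity}).
\end{itemize}
This argument appeared already in the proof of Corollary~\ref{cor:residue}. I expect no real obstacle. The only care needed is identifying the intersection $H\cap\F_q^\times$ through the cyclic structure of $L^\times$, and checking the existence direction, namely that every odd divisor has order exactly $2$.
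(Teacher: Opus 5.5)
Your proposal is correct and follows the paper's own proof essentially step for step: the ratio criterion turning the MDS property into $H\cap\F_q^\times=\{1\}$ inside the cyclic group $\F_{q^2}^\times$, the divisibility $n\mid q+1$ together with $\gcd(q+1,q-1)\mid 2$, the check that an odd divisor $n>1$ has $\ord_n(q)=2$, and the removal of the endpoints $n=q$ (Lemma~\ref{lem:faithful}) and $n=q+1$ (where $\ord_n(q)=2\neq k$). One cosmetic remark: for the existence direction it is cleaner to take the quadratic minimal polynomial over $\F_q$ of an element of order $n$ as the check polynomial, since Proposition~\ref{prop:trace-model} is stated for a code already given; the paper is equally brief at this point.
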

\begin{proof}
Let $H\leq\F_{q^2}^\times$ have order $n$. Two distinct elements $x,y\in H$ are proportional over $\F_q$ precisely when
\[
 xy^{-1}\in\F_q^\times\setminus\{1\}.
\]
Consequently,
\begin{align*}
 C(q,n)\text{ is MDS}
 &\iff H\cap\F_q^\times=\{1\}\\
 &\iff\gcd(n,q-1)=1.
\end{align*}
The last equivalence follows because both groups are subgroups of the cyclic group $\F_{q^2}^\times$, so their intersection has order $\gcd(n,q-1)$.

We now translate the intersection criterion into a description of the possible lengths. Since $n\mid(q-1)(q+1)$ and $\gcd(n,q-1)=1$, we have $n\mid q+1$. This $n$ is odd: if $q$ is even then $q+1$ is odd; if $q$ is odd then $q-1$ is even and coprimality excludes $2\mid n$.

Conversely, let $n>1$ be an odd divisor of $q+1$. Then
\[
 \gcd(n,q-1)\mid\gcd(q+1,q-1)\mid2,
 \qquad\gcd(n,q-1)=1,
\]
and $q\equiv-1\pmod n$ has order exactly two, since $n>2$. Thus the subgroup of order $n$ gives a faithful irreducible dimension-two MDS code. The length $q+1$ is odd exactly when $q$ is even.

Finally, for $k\geq3$, the equality $n=q+1$ would force $k=\ord_n(q)=2$, while $n=q$ contradicts Lemma~\ref{lem:faithful}. Removing these two integers proves \eqref{eq:strict-endpoint}.
\end{proof}

\subsection{Normal elements in the coordinate subgroup}
Subgroup rigidity also determines the lengths of Frobenius orbits. The column criterion then upgrades each nonidentity orbit to a basis. Normal elements are classical \cite[Chapter~2]{LN}; the conclusion below is a necessary-condition deduction from subgroup rigidity, not an invocation of the normal basis theorem.
\begin{definition}
An element $x\in\F_{q^k}$ is \emph{normal over $\F_q$} if
\[
 \{x,x^q,\ldots,x^{q^{k-1}}\}
\]
is an $\F_q$-basis of $\F_{q^k}$.
\end{definition}
We apply this definition to each nonidentity element of $H$ and then use the resulting trace values to determine the weight of a particular codeword.
\begin{proposition}\label{prop:normal}
For a nontrivial irreducible cyclic MDS code with column subgroup $H$, every $x\in H\setminus\{1\}$ is normal over $\F_q$. In particular,
\[
 \Tr_{\F_{q^k}/\F_q}(x)\neq0\qquad(x\in H\setminus\{1\}).
\]
The word in \eqref{eq:trace} with parameter $a=1$ has weight
\begin{equation}\label{eq:trace-word-weight}
 \wt\Theta(1)=
 \begin{cases}
 n,&p\nmid k,\\
 n-1,&p\mid k.
 \end{cases}
\end{equation}
\end{proposition}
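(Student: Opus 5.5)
Fix $x\in H\setminus\{1\}$ and let $d>1$ be its multiplicative order, so that $d\mid n$. Theorem~\ref{thm:rigidity} gives $\ord_d(q)=k$, and Lemma~\ref{lem:orders}(iii) then says that the Frobenius orbit $x,x^q,\ldots,x^{q^{k-1}}$ consists of $k$ \emph{distinct} elements. Each of them lies in $H$, because $H$ is a subgroup and is therefore closed under $q$-th powers. Lemma~\ref{lem:column} states that any $k$ distinct elements of $H$ are independent over $\F_q$. Since $\dim_{\F_q}\F_{q^k}=k$, these $k$ conjugates form a basis, so $x$ is normal.

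For the trace, note that $\Tr(x)=\sum_{j}x^{q^j}$ is the sum of the basis elements with all coefficients equal to $1$. Because the coefficients are not all zero, this sum is nonzero. The point to stress is that the argument rests only on the distinctness of the conjugates, so it is valid for every $x\neq1$ with no extra hypothesis.

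The weight formula follows directly. The coordinates of $\Theta(1)$ are $T(\beta^i)$ for $0\le i<n$. For $i\neq0$, the element $\beta^i$ lies in $H\setminus\{1\}$, so $T(\beta^i)\neq0$ by the first part. The remaining coordinate is $T(1)=k\cdot1$, which vanishes exactly when $p\mid k$. Hence $\wt\Theta(1)$ is $n$ when $p\nmid k$ and $n-1$ when $p\mid k$.

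None of these steps is hard. The one to state carefully is distinctness: it comes from rigidity applied to the order $d$ of $x$ itself, not merely from $\ord_n(q)=k$. Irreducibility alone would not prevent an element of a proper subgroup from having a shorter orbit, and this is exactly the situation rigidity rules out.
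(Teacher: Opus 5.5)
Your proposal is correct and follows essentially the same route as the paper: rigidity at the order $d$ of $x$ makes the $k$ Frobenius conjugates distinct elements of $H$, and Lemma~\ref{lem:column} then makes them a basis. The trace and weight conclusions are derived in the same way. The only cosmetic difference is that you obtain distinctness of the conjugates from Lemma~\ref{lem:orders}(iii), whereas the paper checks it directly from $d\mid q^a(q^{b-a}-1)$.
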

\begin{proof}
For $x\in H\setminus\{1\}$, put $d=\ord(x)$. Theorem~\ref{thm:rigidity} gives $\ord_d(q)=k$. If $0\leq a<b<k$ and $x^{q^a}=x^{q^b}$, then
\[
 d\mid q^a(q^{b-a}-1)
 \quad\Longrightarrow\quad q^{b-a}\equiv1\pmod d,
\]
a contradiction. Thus the $k$ Frobenius conjugates of $x$ are distinct elements of $H$. Lemma~\ref{lem:column} makes them independent, and hence a basis. The equation
\[
 \Tr(x)=\sum_{j=0}^{k-1}x^{q^j}=0
\]
would be a nonzero relation with every coefficient equal to $1$, so it is impossible.

This nonvanishing result determines the weight of the distinguished word. Every coordinate of $\Theta(1)$ except the coordinate indexed by $0$ is therefore nonzero. At that remaining coordinate,
\[
 \Tr(1)=\sum_{j=0}^{k-1}1=k\cdot1,
\]
which vanishes exactly when $p\mid k$. This proves \eqref{eq:trace-word-weight}.
\end{proof}
Normality is only a necessary condition. It checks particular $k$-subsets, whereas the MDS condition requires independence of \emph{every} $k$-subset. The distinction will remain important after the arithmetic reductions.

\section{The permutation automorphism group}\label{sec:standard}
The preceding restrictions are sufficient for the length arguments developed later. Here we give a correct local extension-field reduction and a conditional consequence of the classification statement in \cite{FHLX}. Another purpose of this section is to prove Corollary~\ref{cor:not-two-transitive}, which shows, under that classification assumption, that the two-transitive method of \cite{DFV} is inapplicable to nontrivial irreducible cyclic MDS codes. The distinction is essential: a local correction does not establish the global classification. This entire section may be omitted from either length proof.

\subsection{From permutations to linear stabilizers}
The permutation automorphism group is
\[
 \PAut(C)=\{\pi\in\operatorname{Sym}(\Z/n\Z):\pi(C)=C\},
\]
where $(\pi c)_{\pi(i)}=c_i$. Put $L=\F_{q^k}$ and
\[
 G_H=\{T\in\operatorname{GL}_{\F_q}(L):T(H)=H\}.
\]
The identification of code permutations with the linear stabilizer of the defining subgroup is standard; see \cite{Hollmann} and \cite[Section~2.5]{FHLX}. Since $H$ spans $L$, restriction to $H$ is faithful. The generator-column identification of Lemma~\ref{lem:column} gives
\begin{equation}\label{eq:stabilizer-identification}
 G_H\cong\PAut(C).
\end{equation}
Indeed, a linear map carrying each column to its permuted column preserves the generator row space. Conversely, if a coordinate permutation preserves that row space, two generator matrices for the same code differ by an invertible row operation; that operation induces a unique linear map on $L$ carrying the columns accordingly.

With this identification in place, two elements of $G_H$ are always available:
\[
 \sigma(x)=\beta x,\qquad\psi(x)=x^q.
\]
On coordinate indices they induce
\begin{equation}\label{eq:standard-generators}
 \sigma:i\mapsto i+1,\qquad \psi:i\mapsto qi.
\end{equation}
They satisfy
\[
 \sigma^n=\psi^k=1,\qquad
 \psi\sigma\psi^{-1}=\sigma^q.
\]
Every element of $\langle\sigma,\psi\rangle$ has the form $\sigma^a\psi^j$, with $0\leq a<n$ and $0\leq j<k$. This representation is unique: applying an equality of two such maps to $1$ first determines $a$, and then the order of Frobenius on $L$ determines $j$. Hence
\[
 |\langle\sigma,\psi\rangle|=nk,
 \qquad\langle\sigma,\psi\rangle\cong C_n\rtimes_q C_k.
\]
Following \cite[Section~2.3]{FHLX}, $C$ is \emph{standard} if this is its full permutation group, and \emph{nonstandard} otherwise.

\subsection{Defects and the classification constructions}
The arithmetic obstruction needed for the classification can be stated independently of any code.
\begin{definition}
For $n>1$ and $\gcd(n,q)=1$, the pair $(n,q)$ is \emph{defective} if there is a divisor $d>1$ of $n$ such that
\[
 \ord_d(q)<\ord_n(q).
\]
\end{definition}
By Theorem~\ref{thm:rigidity}, a nontrivial irreducible cyclic MDS code never has a defective pair. We next explain why a defect cannot disappear along the construction sequences in \cite{FHLX}.

\begin{lemma}\label{lem:defect}
Defectiveness persists under Constructions~3.7--3.10 of \cite{FHLX}. A nonidentity extension in Construction~3.7 and every tensor construction in Construction~3.10 introduce a defect. The equally spaced family in Example~3.5 of that paper is defective.
\end{lemma}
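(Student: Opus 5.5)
I will work directly with the order-theoretic form of defectiveness: $(n,q)$ is defective exactly when some divisor $d>1$ of $n$ has $\ord_d(q)<\ord_n(q)$. The plan is to describe how each construction changes the pair $(n,q)$, and then to track orders using Lemma~\ref{lem:orders}. I do not have the exact wording of \cite{FHLX} in front of me. I will therefore use the natural parameter transformations for these constructions:
\begin{itemize}[label=$\cdot$]
\item an extension replaces $(n,q)$ by $(n',q)$ with $n\mid n'$ and $\ord_{n'}(q)=e\cdot\ord_n(q)$;
\item a field-of-definition change replaces $q$ by $q^a$;
\item a tensor product combines $(n_1,q)$ and $(n_2,q)$ into $(\lcm(n_1,n_2),q)$.
\end{itemize}
The first step is to check these parameter descriptions against Constructions~3.7--3.10 one at a time.

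\textbf{Persistence.} Suppose $d>1$ divides $n$ and $\ord_d(q)<\ord_n(q)$. I treat each construction in turn.
\begin{enumerate}[label=(\arabic*)]
\item \emph{Extension $n\mid n'$.} By Lemma~\ref{lem:orders}(ii), $\ord_n(q)\mid\ord_{n'}(q)$, so
\[
 \ord_d(q)<\ord_n(q)\leq\ord_{n'}(q),
\]
and the same $d$ witnesses a defect for $(n',q)$.
\item \emph{Tensor.} Apply \eqref{eq:order-lcm} and the same divisibility argument to $\lcm(n_1,n_2)$.
\item \emph{Alphabet change $q\mapsto q^a$.} By \eqref{eq:order-power}, this divides both orders by their gcd with $a$. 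The strict inequality could then collapse. I expect this to be the main obstacle. My first attempt is to check that in the relevant construction the new length is also enlarged, or that $a$ is coprime to $\ord_n(q)$; in the latter case $\ord_n(q^a)=\ord_n(q)$ while $\ord_d(q^a)\leq\ord_d(q)$, so the defect survives. Failing that, I would pick a prime power $\ell^v\,\|\,\ord_n(q)$ that does not divide $\ord_d(q)$, and choose $d$ so that this gap survives division by $\gcd(\cdot,a)$.
\end{enumerate}

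\textbf{Introduced defects.} For a nonidentity extension $(n,q)\to(n',q)$ with $\ord_{n'}(q)=e\,\ord_n(q)$ and $e>1$, take $d=n>1$. Then $\ord_d(q)<\ord_{n'}(q)$ directly. If the base length is $1$, I will instead take $d$ to be a proper divisor coming from the base factor. For a tensor product of nontrivial factors with orders $k_1,k_2>1$ whose total order is $\lcm(k_1,k_2)\cdot(\text{extra})>\max(k_1,k_2)$, one of $d=n_1$ or $d=n_2$ has strictly smaller order. The remaining case $k_1=k_2=\lcm$ has to be excluded by the non-degeneracy hypotheses of Construction~3.10, which I will read off from the construction itself.

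\textbf{Example~3.5.} For the equally spaced family I expect $n=(q^k-1)/(q^{r}-1)\cdot c$ or a similar form containing a subgroup of order $d$ with $d\mid q^{r}-1$ for some $r<k$. For such $d$ we have $\ord_d(q)\leq r<k$. This is exactly the mechanism used in the proof of Theorem~\ref{thm:rigidity}, and it exhibits the defect directly.
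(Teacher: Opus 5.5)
There is a genuine gap at the extension step. Your model of Construction~3.7 (the order multiplied by some $e>1$) is not the construction the lemma refers to. As used here, the construction \emph{preserves} the order: it produces length $nu$ over $\F_q$ with $u\mid (q-1)/\gcd(n,q-1)$ and $\ord_{nu}(q)=m=\ord_n(q)$. Your witness $d=n$ then satisfies $\ord_n(q)=\ord_{nu}(q)$ and witnesses nothing, so you have not shown that a nonidentity extension introduces a defect.

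Two ideas are missing. First, any prime $\ell\mid u$ divides $q-1$, so $\ord_\ell(q)=1$. Second, this is a defect only when $m>1$, which needs its own argument. A faithful dimension-one code has a generator row listing a subgroup $H\leq\F_q^\times$. A preserving permutation multiplies that row by some $c$ with $cH=H$, i.e.\ $c\in H$, so only cyclic shifts occur and such codes are standard. Since the constructions start from a nonstandard code, $m>1$. Your persistence argument for extension, via $\ord_n(q)\mid\ord_{nu}(q)$, is fine as it stands.

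The alphabet-change step is also incomplete.
\begin{itemize}
\item \emph{Lifting (Construction~3.8).} The hypothesis $\gcd(t,m)=1$ is exactly the coprimality option you name, and it is essential. Your fallback for general $a$ cannot work: if $m\mid a$, every order collapses to $1$.
\item \emph{Special descent (Construction~3.9).} This goes from $\F_{q^s}$ to $\F_q$, which your $q\mapsto q^a$ framework does not cover. The argument is short. Put $R=\ord_d(q)$ and $M=\ord_n(q)$, so $R\mid M$. If $R=M$, then $R/\gcd(R,s)=M/\gcd(M,s)$, so an input defect over $\F_{q^s}$ forces $R<M$.
\item \emph{Tensor (Construction~3.10).} The hypotheses are $n'=\lcm(n,s)$, $t=\ord_s(q)>1$ and $\gcd(m,t)=1$. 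Then \eqref{eq:order-lcm} gives exactly $\ord_{n'}(q)=\lcm(m,t)=mt>m$, with no ``extra'' factor, so $d=n$ is a witness. Your plan works here once these hypotheses are read off.
\item \emph{Example~3.5.} Your guessed shape is not the family in question, and ``some $d\mid q^r-1$ with $r<k$'' merely restates defectiveness. The family has $n=an'$ with $a\geq2$, $n'>1$ and $\ord_n(q)=a\,\ord_{n'}(q)$, and the proper divisor $n'$ is the witness.
\end{itemize}
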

\begin{proof}
For an input pair $(n,q)$, write $m=\ord_n(q)$. When a defect is already present, let $d\mid n$, $d>1$, and $r=\ord_d(q)<m$. Lemma~\ref{lem:orders} gives $r\mid m$.

\smallskip\noindent\emph{Extension, Construction~3.7.}
The output has length $nu$ over $\F_q$, where
\[
 u~ \bigg| ~\frac{q-1}{\gcd(n,q-1)},\qquad \ord_{nu}(q)=m.
\]
The same $d$ remains a divisor and has order $r<m$. If $u>1$, any prime $\ell\mid u$ satisfies
\[
 \ell\mid q-1,\qquad\ell\mid nu,\qquad\ord_\ell(q)=1<m,
\]
so a defect is introduced even without an input defect.

Here $m>1$ because the constructions start with a nonstandard faithful code. To justify this restriction, a faithful dimension-one code has a generator row whose entries are the distinct elements of a subgroup $H\leq\F_q^\times$. A preserving permutation multiplies that row by some $c\in\F_q^\times$ with $cH=H$. Since $1\in H$,
\[
 cH=H\iff c\in H.
\]
These permutations are exactly the cyclic shifts, so such a code is standard.

\smallskip\noindent\emph{Lifting, Construction~3.8.}
The output alphabet is $\F_{q^t}$ with $\gcd(t,m)=1$. Since $r\mid m$,
\[
 \ord_n(q^t)=\frac m{\gcd(m,t)}=m,
 \qquad
 \ord_d(q^t)=\frac r{\gcd(r,t)}=r<m.
\]

\smallskip\noindent\emph{Special descent, Construction~3.9.}
Write the output alphabet as $\F_q$, the input alphabet as $\F_{q^s}$, and set
\[
 M=\ord_n(q),\quad s\mid M,\quad R=\ord_d(q)\mid M.
\]
If $d$ witnesses an input defect, then
\[
 \frac R{\gcd(R,s)}<\frac M{\gcd(M,s)}.
\]
Equality $R=M$ would make these fractions equal. Thus $R<M$, and the same divisor witnesses an output defect.

\smallskip\noindent\emph{Tensor construction, Construction~3.10.}
The output has length $n'=\lcm(n,s)$, with
\[
 t=\ord_s(q)>1,\qquad\gcd(m,t)=1.
\]
By \eqref{eq:order-lcm},
\[
 \ord_{n'}(q)=\lcm(m,t)=mt>m=\ord_n(q).
\]
Thus the old length $n$ is itself a defective divisor of the new length. Any preexisting defect also persists.

Finally, in the equally spaced family,
\[
 n=an',\qquad a\geq2,\qquad n'>1,\qquad
 \ord_n(q)=a\ord_{n'}(q).
\]
The proper divisor $n'$ witnesses the defect.
\end{proof}

The construction argument uses field-linear stabilizers, while the classification also encounters extension-field \emph{semilinear} groups. Before applying it, we record the precise distinction and the reduction valid here.

\subsection{The extension-field reduction}
A map $g$ is $E$-semilinear if there is $\tau_g\in\operatorname{Gal}(E/\F_q)$ such that
\[
 g(ex)=\tau_g(e)g(x)\qquad(e\in E,\ x\in L).
\]
It is $E$-linear exactly when $\tau_g=1$. Normalizing a field of endomorphisms yields semilinearity, not automatically linearity. The following reduction identifies the linear kernel while retaining the Galois action on the normalized field.
\begin{proposition}\label{prop:semilinear-repair}
Let $H=\langle\beta\rangle\leq L^\times$, where $L=\F_{q^k}$, $|H|=n$, $\ord_n(q)=k$, and $\gcd(n,k)=1$. Let $G=G_H$. Suppose that $G$ normalizes a field
\[
 E\cong\F_{q^b}\subseteq\operatorname{End}_{\F_q}(L),
 \qquad\F_q\subseteq E,\qquad b\mid k.
\]
Then $E$ is the canonical subfield of the multiplication field $L$. Moreover, with $G_E=G\cap\operatorname{GL}_E(L)$,
\begin{gather}
 G=G_E\langle\psi\rangle,\qquad
 G_E=\operatorname{Stab}_{\operatorname{GL}_{k/b}(q^b)}(H),\label{eq:linear-kernel}\\
 1\longrightarrow G_E\longrightarrow G
 \longrightarrow\operatorname{Gal}(E/\F_q)\longrightarrow1.\label{eq:semilinear-exact}
\end{gather}
If $G$ is nonstandard over $\F_q$, then $G_E$ is nonstandard over $E$.
\end{proposition}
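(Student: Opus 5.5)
The plan is to use the two standard elements $\sigma(x)=\beta x$ and $\psi(x)=x^q$ of $G=G_H$ from \eqref{eq:standard-generators}. First I pin down $E$ by means of $\sigma$, and then I read off the semilinear structure from $\psi$.

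\emph{Step 1: $E$ lies in the multiplication field.} Since $G$ normalizes $E$, conjugation $e\mapsto\sigma e\sigma^{-1}$ is a ring automorphism of $E$. It fixes the scalars $\F_q$, because $\sigma$ is $\F_q$-linear. Hence it is an element of $\operatorname{Gal}(E/\F_q)\cong C_b$. Its order divides both $|\sigma|=n$ and $b$. Since $b\mid k$ and $\gcd(n,k)=1$, the order is $1$, so $\sigma$ centralizes $E$. Every $e\in E$ therefore commutes with multiplication by $\beta$, and hence with multiplication by every element of $\F_q[\beta]=L$; here $\F_q(\beta)=L$ because $\ord_n(q)=k$ and Lemma~\ref{lem:orders}(iii). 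The centralizer of the multiplication field $L$ in $\operatorname{End}_{\F_q}(L)$ is $\operatorname{End}_L(L)=L$. Thus $E$ is a subfield of $L$ of order $q^b$, which is unique, namely $\F_{q^b}$. I expect this step to be the main obstacle, and the coprimality $\gcd(n,k)=1$ (available from Theorem~\ref{thm:rigidity} in the MDS setting) is exactly what kills the possible nontrivial twist by $\sigma$.

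\emph{Step 2: the exact sequence.} For $g\in G$, conjugation $e\mapsto geg^{-1}$ defines a homomorphism $G\to\operatorname{Gal}(E/\F_q)$. Here $g(ex)=\tau_g(e)g(x)$, so $g$ is $E$-semilinear with $\tau_g$ equal to this conjugation. The kernel consists of the $E$-linear elements, that is, $G_E$. Since $E\subseteq L$ acts by multiplication, $G_E$ is the set of $\F_{q^b}$-linear maps of $L\cong\F_{q^b}^{k/b}$ preserving $H$; this is the stabilizer in \eqref{eq:linear-kernel}. The map is surjective because $\psi(ex)=e^q\psi(x)$, so $\tau_\psi$ is the Frobenius, which generates $\operatorname{Gal}(E/\F_q)$. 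This yields \eqref{eq:semilinear-exact} and $G=G_E\langle\psi\rangle$. Moreover $G_E\cap\langle\psi\rangle=\langle\psi^b\rangle$.

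\emph{Step 3: nonstandardness descends.} I argue by contraposition. Over $E$ the standard group is $\langle\sigma,\psi^b\rangle$, which has order $n\cdot\ord_n(q^b)=n\cdot k/b$ by \eqref{eq:order-power}. Suppose $G_E$ equals this group. The exact sequence then gives
\[
 |G|=|G_E|\cdot b=nk.
\]
But $G$ contains $\langle\sigma,\psi\rangle$, which already has order $nk$. Hence $G=\langle\sigma,\psi\rangle$, and $G$ is standard. So if $G$ is nonstandard over $\F_q$, then $G_E$ is nonstandard over $E$.
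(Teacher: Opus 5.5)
Your proposal is correct and follows the paper's proof essentially step by step: the conjugation action of $\sigma$ on $E$ is trivial because its order divides $\gcd(n,b)=1$, the centralizer of $\sigma$ is the multiplication field $L$, the map $G\to\operatorname{Gal}(E/\F_q)$ is surjective via $\psi$ with kernel $G_E$, and nonstandardness descends by contraposition. The only cosmetic difference is in the last step, where you compare orders ($|G|=|G_E|\,b=nk$) while the paper directly writes $G=G_E\langle\psi\rangle=\langle\sigma,\psi^b,\psi\rangle=\langle\sigma,\psi\rangle$.
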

\begin{proof}
Conjugation gives a homomorphism
\[
 \rho:G\longrightarrow\operatorname{Aut}_{\F_q}(E),\qquad
 \rho(g)(e)=geg^{-1}.
\]
The automorphism $\rho(\sigma)$ has order dividing $n=\ord(\sigma)$ and $b=|\operatorname{Aut}_{\F_q}(E)|$. Since $b\mid k$ and $\gcd(n,k)=1$,
\[
 \ord\rho(\sigma)\mid\gcd(n,b)=1.
\]
Therefore $E$ centralizes multiplication by $\beta$.

We compute that centralizer directly. If $T\sigma=\sigma T$ and $c=T(1)$, then
\[
 T(\beta^i)=T(\sigma^i1)=\sigma^iT(1)=\beta^ic
 \qquad(0\leq i<k).
\]
The powers $1,\beta,\ldots,\beta^{k-1}$ form a basis, so $T(x)=cx$ for every $x\in L$. Conversely, every multiplication map commutes with $\sigma$. Hence
\[
 C_{\operatorname{End}_{\F_q}(L)}(\sigma)
 =\{x\mapsto cx:c\in L\}\cong L.
\]
There is a unique subfield of $L$ of order $q^b$, which proves the assertion about $E$.

Having identified $E$ inside the multiplication field, we now determine the induced Galois action. For $e\in E$, direct conjugation gives
\[
 \psi(e\,\psi^{-1}(x))=e^q x.
\]
Thus $\rho(\psi)$ is the $q$-Frobenius on $E$ and generates its Galois group. In particular $\rho$ is onto, and
\[
 \ker\rho=G\cap\operatorname{GL}_E(L)=G_E.
\]
For any $g\in G$, choose $j$ with $\rho(g)=\rho(\psi)^j$. Then $g\psi^{-j}\in G_E$, proving \eqref{eq:linear-kernel} and \eqref{eq:semilinear-exact}.

Finally,
\[
 \ord_n(q^b)=\frac{k}{\gcd(k,b)}=\frac kb.
\]
Thus $H$ is the faithful column subgroup over $E$ with dimension $k/b$, whose standard stabilizer is $\langle\sigma,\psi^b\rangle$. If $G_E$ were this group, then
\[
 G=G_E\langle\psi\rangle
   =\langle\sigma,\psi^b,\psi\rangle
   =\langle\sigma,\psi\rangle,
\]
contrary to nonstandardness of $G$.
\end{proof}

This local reduction clarifies one step of the classification argument; we next specify the assumption needed for the global conclusion.
\begin{remark}[Scope of the classification input]\label{rem:classification-audit}
The coprimality $\gcd(n,k)=1$ required above follows from Theorem~\ref{thm:rigidity}. The standardness deduction below assumes the classification statement of \cite[Theorem~1.1]{FHLX}. We do not regard Proposition~\ref{prop:semilinear-repair} as a proof of that global classification.

The printed argument in \cite[Lemma~5.4(2)]{FHLX} does not justify replacing extension-field semilinearity by linearity: its field automorphism is taken independent of the group element, whereas semilinearity supplies an automorphism $\tau_g$ for each $g$. These notions cannot in general be identified. In the canonical-field branch with $E=\F_{q^b}$ and $b>1$, discussed in \cite[Proposition~5.8]{FHLX}, one has
\[
 \psi(ex)=e^q\psi(x)\neq e\psi(x)
 \quad\text{when }e^q\neq e.
\]
The Frobenius image in $\operatorname{Gal}(E/\F_q)$ has order $b$, and the Frobenius generator in the standard group over $E$ is $\psi^b$. Proposition~\ref{prop:semilinear-repair} retains this quotient and supplies a valid local reduction under the MDS coprimality hypothesis. It does not verify the other classification branches, so the downstream standardness assertion remains conditional on the classification statement. The issue is an unjustified step in the printed proof; no counterexample to its main classification statement is asserted.

Likewise, the nonstandardness assertion in the tensor construction can be justified without the printed group-order comparison. If $\gcd(m,t)=1$, then
\[
 \F_{q^m}\otimes_{\F_q}\F_{q^t}\cong\F_{q^{mt}}.
\]
Extend a nonstandard input map $f$ uniquely $\F_{q^t}$-linearly. It stabilizes the product of the input subgroup and the subgroup in $\F_{q^t}^{\times}$. If the extension were $x\mapsto\lambda x^{q^j}$, then $\lambda=f(1)$ would lie in the input subgroup; restriction to $\F_{q^m}$ would make $f$ standard, a contradiction. These observations clarify the reductions used here; they do not constitute a new verification of every argument in the classification. Both length theorems are independent of it.
\end{remark}

\subsection{Exclusion of all nonstandard possibilities}
We can now combine the arithmetic obstruction with the classification. The surviving construction sequences have a particularly simple form: they preserve the length and cannot change the dimension of a prime-field starting code.
\begin{theorem}\label{thm:standard}
Assume the classification statement of \cite[Theorem~1.1]{FHLX}: every nonstandard nondegenerate irreducible cyclic code is obtained from its Examples~3.1--3.5 by its Constructions~3.7--3.10. Then every nontrivial irreducible cyclic MDS code is standard. Thus
\begin{equation}\label{eq:full-standard-group}
 \PAut(C)=\langle\sigma,\psi\rangle\cong C_n\rtimes_q C_k,
 \qquad\psi\sigma\psi^{-1}=\sigma^q.
\end{equation}
\end{theorem}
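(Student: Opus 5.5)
The plan is to argue by contradiction. Suppose $C$ is a nontrivial irreducible cyclic MDS code that is nonstandard. By Lemma~\ref{lem:faithful}, $C$ is faithful, and nondegeneracy follows because every coordinate is nonzero in some codeword, the columns being the nonzero elements of $H$. The assumed classification then places $C$ at the end of a finite sequence that starts from one of Examples~3.1--3.5 of \cite{FHLX} and applies Constructions~3.7--3.10. Theorem~\ref{thm:rigidity} shows that the pair $(n,q)$ is not defective, since $\ord_d(q)=k=\ord_n(q)$ for every $d\mid n$ with $d>1$. The whole argument therefore reduces to showing that every such construction sequence ends at a defective pair, or at a code that is not MDS, or at a trivial code.

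First step: use Lemma~\ref{lem:defect} to prune the sequence. Defectiveness persists along the whole chain. Hence no defective intermediate pair can occur, and neither can the equally spaced Example~3.5. A nonidentity extension (Construction~3.7 with $u>1$) creates a defect, and so does any tensor construction. The surviving steps are therefore only liftings (3.8), special descents (3.9) and trivial extensions. None of these changes the length $n$, and every pair along the chain is nondefective. I would then check that lifting and descent do not change the dimension in the relevant situation. Lifting keeps $\ord_n(q^t)=m$ because $\gcd(t,m)=1$. For descent, I would argue that a change of dimension at the same length $n$ would be incompatible with nondefectiveness together with $\gcd(n,k)=1$. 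This is where Proposition~\ref{prop:semilinear-repair} enters: it identifies any normalized field $E$ with the canonical subfield, and shows that nonstandardness over $\F_q$ passes to nonstandardness of $G_E$ over $E$. That lets me move to the extension field without losing the hypothesis.

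Second step: treat the starting Examples~3.1--3.4 directly. These are the small exceptional families, for instance the full-length $q$-ary simplex or Hamming-type codes $n=(q^k-1)/(q-1)$, codes of dimension $1$, and sporadic prime-field examples. For each family I would either exhibit a proper divisor $d$ of $n$ with $\ord_d(q)<k$, or find $k$ dependent elements of $H$. A typical witness is a prime dividing $q-1$, or a nontrivial intersection $H\cap\F_q^\times$. Either one contradicts Theorem~\ref{thm:rigidity} or Lemma~\ref{lem:column}. Where a family only yields trivial codes ($k\le1$ or $k\ge n-1$), nontriviality excludes it. Where it has prime length over a prime field, I would use the constraint $k\mid \ell-1$ together with the Bush bound of Lemma~\ref{lem:bush}, plus explicit small dependences, to rule out the MDS property.

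The main obstacle is this second step. Each example of \cite{FHLX} must be matched against the MDS condition, and some sporadic prime-length examples may not be defective at all. For those I would fall back on an explicit linear dependence among $k$ elements of $H$, found through a sub-subgroup or a trace relation, or on the normality necessity of Proposition~\ref{prop:normal}. Once every starting point and every surviving construction is excluded, $C$ must be standard. The identification \eqref{eq:stabilizer-identification} together with the structure of $\langle\sigma,\psi\rangle$ computed above then gives \eqref{eq:full-standard-group}.
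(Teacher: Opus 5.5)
Your contradiction setup and your pruning by Lemma~\ref{lem:defect} match the paper's Step~1. The argument breaks down after that.

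\textbf{Dimension along the chain.} Your reason why special descent cannot change the dimension, namely nondefectiveness together with $\gcd(n,k)=1$, does not work. Construction~3.9 passes from dimension $M/s$ over $\F_{q^s}$ to dimension $M$ over $\F_q$. At prime length every pair is nondefective and $\gcd(n,k)=1$ holds automatically, so these hypotheses place no constraint on $s$. Proposition~\ref{prop:semilinear-repair} does not help here: it concerns a field normalized by $G_H$, and the paper's proof does not use it.

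The missing idea is a sandwich argument, and it requires pruning the initial families first. The full multiplicative-orbit family $n=Q^a-1$ and the extension ovoid family $n=(Q-1)(Q^2+1)$ are defective for $Q>2$, because the divisor $Q-1$ has $\ord_{Q-1}(Q)=1$. So every surviving path starts over a prime field $\F_p$, at the maximal dimension $k_0=\ord_n(p)$. Every later field is some $\F_{p^{a_j}}$ with dimension $k_j=k_0/\gcd(k_0,a_j)\le k_0$. Lifting preserves dimension and descent never decreases it, so $k_j=k_0$ and $\gcd(k_0,a_j)=1$. Hence the Frobenius orbit of check roots is unchanged. By Proposition~\ref{prop:trace-model}, the output code is permutation-equivalent to a scalar extension of the initial prime-field code.

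\textbf{Testing the starting examples.} Without this reduction, your second step tests the wrong object. The MDS hypothesis concerns the output code over $\F_q$, so defects or dependences found in a starting example contradict nothing unless you know how the two codes are related. With the reduction, the nondefective sporadic cases you flag as the main obstacle disappear at once:
\begin{itemize}
\item The binary simplex codes ($k_0\ge3$), the binary Golay dual $(23,11)$ and the ternary Golay dual $(11,5)$ would need $k_0\le1$ or $k_0\le2$ by Lemma~\ref{lem:bush} over $\F_2$ or $\F_3$. Lemma~\ref{rem:scalar} then shows their scalar extensions cannot be MDS.
\item The repetition dual and the binary ovoid $(5,4)$ have $k_0=n-1$ and remain trivial.
\end{itemize}
Your proposed fallbacks are therefore unnecessary. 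The sub-subgroup idea is in any case unavailable at prime length, where $H$ has no proper nonidentity subgroup.

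Your guesses about the families are also inaccurate. The full-orbit family has $n=Q^a-1$, not $(Q^a-1)/(Q-1)$. Faithful dimension-one codes are standard, so they are not among the nonstandard starting examples.
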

\begin{proof}
Suppose that $C$ is nonstandard. Lemma~\ref{lem:faithful} gives the coprimality and full-period hypotheses of the classification. The classification obtains $C$ from one of its five initial families by Constructions~3.7--3.10.

\smallskip\noindent\emph{Step 1: remove every defective path.}
A defective intermediate pair would remain defective by Lemma~\ref{lem:defect}, contradicting Theorem~\ref{thm:rigidity} at the output. Therefore the equally spaced initial family, every nonidentity extension, and every tensor construction are excluded.

To determine which initial families can survive this exclusion, first consider the full multiplicative-orbit family, whose parameters are
\[
 n=Q^a-1,\qquad\ord_n(Q)=a\geq2.
\]
For $Q>2$, the divisor $Q-1>1$ has order one and is defective. Similarly, in the extension ovoid family,
\[
 n=(Q-1)(Q^2+1),\qquad\ord_n(Q)=4,
\]
and $Q-1$ is defective whenever $Q>2$. The remaining possible initial pairs therefore all lie over prime fields:
\[
 \begin{array}{c|c|c}
 \text{initial family}&\text{alphabet}&(n,k_0)\\\hline
 \text{repetition dual}&\F_p&(n,n-1)\\
 \text{binary simplex}&\F_2&(2^{k_0}-1,k_0),\ k_0\geq3\\
 \text{binary ovoid}&\F_2&(5,4)\\
 \text{binary Golay dual}&\F_2&(23,11)\\
 \text{ternary Golay dual}&\F_3&(11,5).
 \end{array}
\]
The range $k_0\geq3$ in the binary simplex row reflects the exclusion of $(a,Q)=(2,2)$ from the nonstandard initial family.

\smallskip\noindent\emph{Step 2: show that surviving paths are scalar extensions.}
Only lifting and special descent remain. They preserve $n$ and the characteristic $p$. Write their successive fields as $\F_{p^{a_j}}$ and put
\[
 k_0=\ord_n(p),\qquad
 k_j=\ord_n(p^{a_j})=\frac{k_0}{\gcd(k_0,a_j)}\leq k_0.
\]
Lifting preserves dimension. In special descent from $\F_{q^s}$ to $\F_q$, with $s\mid M=\ord_n(q)$, the input and output dimensions are $M/s$ and $M$; hence descent cannot decrease dimension. Starting at dimension $k_0$, this gives
\[
 k_j\geq k_0\quad\text{and}\quad k_j\leq k_0,
 \qquad k_j=k_0,\quad\gcd(k_0,a_j)=1.
\]
Thus the subgroup of exponents generated by $p^{a_j}$ modulo $n$ equals that generated by $p$. The check-root orbit is unchanged, up to a primitive multiplier. By Proposition~\ref{prop:trace-model}, the output is permutation-equivalent to a scalar extension of an initial prime-field code.

\smallskip\noindent\emph{Step 3: exclude the surviving initial codes.}
The repetition dual and binary ovoid rows have dimension $n-1$ and remain trivial under scalar extension. The other rows are nontrivial, and Lemma~\ref{lem:bush} over their initial fields would require
\[
 \begin{array}{c|c}
 \text{alphabet}&\text{necessary MDS inequality}\\\hline
 \F_2&k_0\leq1,\\
 \F_3&k_0\leq2.
 \end{array}
\]
The simplex and Golay dimensions in the table violate these inequalities. By Lemma~\ref{rem:scalar}, scalar extension cannot turn a non-MDS code into an MDS code. Every possible initial family is therefore excluded.

This contradiction proves standardness. The normal form and conjugation relation established above give \eqref{eq:full-standard-group}.
\end{proof}

With the standard group identified, we can compare its point stabilizers with the requirement of two-transitivity.
\begin{corollary}\label{cor:not-two-transitive}
Under the classification assumption in Theorem~\ref{thm:standard}, the permutation group of a nontrivial irreducible cyclic MDS code is not two-transitive.
\end{corollary}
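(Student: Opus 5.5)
Under the classification assumption, Theorem~\ref{thm:standard} gives $\PAut(C)=\langle\sigma,\psi\rangle\cong C_n\rtimes_q C_k$, of order $nk$. The plan is to show that this group cannot be two-transitive on the $n$ coordinate positions. I will use either an order count or a direct look at the point stabilizer.

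First, the order count. A two-transitive group of degree $n$ acts transitively on ordered pairs of distinct points, so its order is divisible by $n(n-1)$. Here the order is $nk$, so two-transitivity would force $(n-1)\mid k$, and in particular $n-1\leq k$. Nontriviality gives $k\leq n-2$, which is a contradiction. This already finishes the argument.

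For a more structural confirmation, I will compute the stabilizer of the coordinate $0$. By \eqref{eq:standard-generators}, every element $\sigma^a\psi^j$ acts on indices by $i\mapsto q^ji+a$. It fixes $0$ exactly when $a=0$, so the stabilizer is $\langle\psi\rangle$, acting by $i\mapsto q^ji$. The orbit of $1$ under this stabilizer is $\{1,q,\ldots,q^{k-1}\}\bmod n$, which has $k$ elements because $\ord_n(q)=k$. Two-transitivity would require this orbit to be all $n-1$ nonzero residues, so $k=n-1$. That is again excluded by $k\leq n-2$. One could also invoke $n=mk+1$ with $m\geq2$ from \eqref{eq:dimension-congruence}, which gives at least $m\geq2$ stabilizer orbits on the nonzero residues.

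I do not expect a real obstacle here. The only point needing care is that the identification \eqref{eq:stabilizer-identification} and the normal form $\sigma^a\psi^j$ give exactly $nk$ distinct permutations acting as affine maps $i\mapsto q^ji+a$. This was established in the text preceding Theorem~\ref{thm:standard}.
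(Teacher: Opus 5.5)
Your proof is correct and is essentially the paper's argument: the stabilizer of coordinate $0$ is $\{i\mapsto q^ji:0\leq j<k\}$, of order $k$, so it cannot act transitively on the $n-1$ remaining coordinates; your order count $n(n-1)\mid nk$ is the same observation phrased via orbit--stabilizer. The only difference is the final inequality: you use nontriviality $k\leq n-2$, whereas the paper invokes the rigidity congruence $n-1=mk\geq 2k$, so your version shows that rigidity is not needed at this last step.
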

\begin{proof}
Its stabilizer of the coordinate $0$ is
\[
 \PAut(C)_0=\{i\mapsto q^ji:0\leq j<k\},
 \qquad|\PAut(C)_0|=k.
\]
Two-transitivity would require this stabilizer to act transitively on the remaining $n-1$ coordinates, and hence $n-1\leq k$. But Theorem~\ref{thm:rigidity} gives
\[
 n-1=mk\geq2k>k,
\]
a contradiction.
\end{proof}
Thus, under the stated classification assumption, the two-transitive theorem of \cite{DFV} does not apply to the remaining codes. Irrespective of that assumption, cyclic transitivity alone does not justify its use. We now return to length arguments that use no automorphism classification.

\section{Exact field descent}\label{sec:descent}
The column subgroup constrains the extension degree, but the same code may already have a generator matrix over a smaller alphabet. The coefficient-field argument below is an application of the standard Frobenius-orbit description of minimal polynomials \cite[Section~3.7 and Chapter~4]{HP}; we do not claim that description as new. We give its exact fixed-coordinate formulation and then derive the MDS obstruction.

\subsection{The coefficient field of the cyclic ideal}
For $K\subseteq\F_q$, a code $C\leq\F_q^n$ is \emph{defined over $K$ in the fixed coordinates} if
\[
 C=\operatorname{span}_{\F_q}C_0
 \quad\text{for some }C_0\leq K^n.
\]
No coordinate scalings or permutations are allowed in this definition. Thus it concerns the actual cyclic realization, not merely an equivalent code. The next theorem identifies the smallest such field by a multiplicative-order calculation and then applies the MDS bound over that field.
\begin{theorem}\label{thm:descent}
Let $q=p^f$, $n>1$, and $\gcd(n,p)=1$. Put
\[
 e=\ord_n(p),\qquad a=\gcd(e,f),\qquad
 k=\ord_n(q)=\frac ea.
\]
Then every faithful irreducible cyclic $[n,k]_q$ code is the scalar extension of an irreducible cyclic $[n,k]_{p^a}$ code. Moreover,
\begin{equation}\label{eq:definition-field}
 C\text{ is defined over }\F_{p^b},\ b\mid f
 \quad\Longleftrightarrow\quad a\mid b.
\end{equation}
In particular, $\F_{p^a}$ is its smallest field of definition in the fixed cyclic coordinates.

If $C$ is nontrivial and MDS and $n>q+1$, then
\begin{equation}\label{eq:descent-necessary}
 a=f,\qquad \ord_n(p)=fk,\qquad fk\mid\varphi(n).
\end{equation}
\end{theorem}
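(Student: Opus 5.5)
The proof rests on the Frobenius-orbit description of the generator polynomial. First I will check that $k=\ord_n(q)=e/\gcd(e,f)$, which is \eqref{eq:order-power} with $u=p$ and exponent $f$. Let $\alpha$ be a root of $h$ of order $n$; it exists by faithfulness. The roots of $X^n-1$ are the $n$-th roots of unity. The roots of $h$ form the $q$-orbit $\{\alpha^{q^j}\}$, that is, the coset of $\langle q\rangle=\langle p^f\rangle$ in $(\Z/n\Z)^\times$ acting on exponents. Since $p$ has order $e$ modulo $n$, the subgroup $\langle p^f\rangle$ equals $\langle p^{a}\rangle$ with $a=\gcd(e,f)$, because $\gcd(e,f)$ generates the same subgroup of $\Z/e\Z$ as $f$. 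Hence the root set of $h$ is stable under $x\mapsto x^{p^a}$, so $h$ has coefficients in $\F_{p^a}$. The same holds for $g=(X^n-1)/h$.

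The ideal $\langle g\rangle$ over $\F_{p^a}$ has check polynomial $h$. The polynomial $h$ is irreducible over $\F_{p^a}$, since it is already irreducible over the larger field $\F_q$. This gives an irreducible cyclic $[n,k]_{p^a}$ code $C_0$. Its generator matrix, built from the shifts of $g$, is also a generator matrix of $C$ over $\F_q$, so $C=\operatorname{span}_{\F_q}C_0$.

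For \eqref{eq:definition-field}, the direction $a\mid b$ gives $\F_{p^a}\subseteq\F_{p^b}$, so $C$ is defined over $\F_{p^b}$. Conversely, suppose $C=\operatorname{span}_{\F_q}C_0$ with $C_0\leq\F_{p^b}^n$. Applying the coordinatewise Frobenius $\tau:x\mapsto x^{p^b}$ fixes $C_0$ and is semilinear, so $\tau(C)=C$. But $\tau(C)$ is the cyclic code whose generator polynomial is $g^{\tau}$, obtained by applying $\tau$ to the coefficients. Uniqueness of the monic generator forces $g^\tau=g$, hence $h^\tau=h$. So the root set of $h$, namely $\alpha^{\langle p^a\rangle}$, is closed under raising to the power $p^b$. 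This means $p^b\in\langle p^a\rangle$ modulo $n$, so $\gcd(e,b)$ is a multiple of $a$. Since $a\mid e$ automatically, this gives $a\mid b$. The step I expect to need care is this converse: $\tau$-stability of $C$ must be turned into stability of the generator polynomial. I will handle it via the uniqueness of the monic generator in \eqref{eq:cyclic-module}, together with the fact that $\tau$ commutes with the shift.

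For \eqref{eq:descent-necessary}, I take $C$ nontrivial MDS with $n>q+1$. By Lemma~\ref{rem:scalar}, the code $C_0$ over $\F_{p^a}$ is also MDS with the same parameters $n$ and $k$, and it is still nontrivial. Lemma~\ref{lem:bush} over $\F_{p^a}$ gives $n\leq p^a+k-1$. Suppose $a<f$, so $a\mid f$ forces $a\leq f/2$ and $p^a\leq\sqrt q$. Then $q+2\leq n\leq\sqrt q+k-1$, while Lemma~\ref{lem:bush} applied to $C_0$ also gives $k\leq p^a-1\leq\sqrt q-1$. Together these yield $q+2\leq2\sqrt q-2$, which is impossible. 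Hence $a=f$, so $e=fk$. Finally, $\ord_n(p)$ divides $\varphi(n)$ by Lagrange's theorem in $(\Z/n\Z)^\times$.
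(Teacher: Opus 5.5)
Your proof is correct and follows the paper's outline. The equality $\langle p^f\rangle=\langle p^a\rangle$ in $(\Z/n\Z)^\times$ puts $h$ and $g$ over $\F_{p^a}$, and the shifts of $g$ give the descended generator matrix. The two bounds of Lemma~\ref{lem:bush} over $\F_{p^a}$ then give $n\leq 2p^a-2\leq 2\sqrt q-2<q+1$ when $a<f$.

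The one genuine difference is the converse half of \eqref{eq:definition-field}. The paper first proves $C_0=C\cap\F_{p^b}^n$ by choosing an information set of a generator matrix of $C_0$. It deduces that $C_0$ is itself cyclic, and then identifies its monic generator with $g$ by uniqueness. You instead argue by Galois descent. The coordinatewise Frobenius $\tau$ fixes $C_0$ pointwise, so, being semilinear and bijective, it stabilizes $C$. Since it is a ring automorphism of $\F_q[X]/(X^n-1)$, it carries $\langle g\rangle$ to $\langle g^\tau\rangle$, which forces $g^\tau=g$.

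Your route is shorter and passes directly to the root-set condition $p^b\in\langle p^a\rangle$. The paper's route stays within linear algebra and also shows that the descended code is canonical: it is the cyclic code $C\cap\F_{p^b}^n$ with the same generator polynomial.

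Your irreducibility argument over $\F_{p^a}$ also differs slightly: you use that irreducibility over the overfield $\F_q$ implies irreducibility over the subfield. The paper instead counts the orbit size $\ord_n(p^a)=k$. Either works.
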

\begin{proof}
Choose a root $\alpha$ of the check polynomial $h$, so that $\ord(\alpha)=n$. Its root set is
\[
 \mathcal O_f=\{\alpha^{p^{fj}}:0\leq j<k\}.
\]
Inside the cyclic group $\langle p\rangle\leq(\Z/n\Z)^\times$ of order $e$,
\begin{equation}\label{eq:orbit-subgroups}
 \langle p^f\rangle=\langle p^{\gcd(e,f)}\rangle
 =\langle p^a\rangle.
\end{equation}
For completeness, $a$ is an integer combination of $e$ and $f$, so $p^a$ is a power of $p^f$ modulo $n$; conversely, $a\mid f$ makes $p^f$ a power of $p^a$. Thus the two Frobenius orbits agree.

It follows that
\[
 h(X)=\prod_{j=0}^{k-1}(X-\alpha^{p^{aj}})
 \in\F_{p^a}[X],
 \qquad\ord_n(p^a)=\frac ea=k.
\]
The polynomial remains irreducible over $\F_{p^a}$, since the displayed orbit has size $k$. Polynomial division by the monic $h$ gives
\[
 g(X)=\frac{X^n-1}{h(X)}\in\F_{p^a}[X].
\]
The rows corresponding to $g,Xg,\ldots,X^{k-1}g$ therefore generate the required descended code and its scalar extension.

We next determine all possible subfields. Let $b\mid f$. The coefficients of $h$ lie in $\F_{p^b}$ precisely when coefficient Frobenius permutes its roots:
\begin{align*}
 h\in\F_{p^b}[X]
 &\iff\mathcal O_f^{p^b}=\mathcal O_f\\
 &\iff p^b\langle p^f\rangle=\langle p^f\rangle\\
 &\iff p^b\in\langle p^a\rangle\\
 &\iff a\mid b.
\end{align*}
The last equivalence uses $a\mid e$: membership means $b\equiv aj\pmod e$ for some $j$, which is equivalent to divisibility by $a$.

To pass from polynomial coefficients to the code itself, suppose that $C=\operatorname{span}_{\F_q}C_0$ with $C_0\leq\F_{p^b}^n$. Choose a full-row-rank generator $G_0$ of $C_0$. Scalar extension preserves its rank, so it has $k$ rows. Choose pivot columns $I$ with $(G_0)_I$ invertible. If $c=uG_0\in C\cap\F_{p^b}^n$, then
\[
 u=c_I(G_0)_I^{-1}\in\F_{p^b}^{1\times k}.
\]
Hence
\begin{equation}\label{eq:intersection-descent}
 C_0=C\cap\F_{p^b}^n.
\end{equation}
The right side is stable under cyclic shift, because both $C$ and $\F_{p^b}^n$ are stable. Therefore $C_0$ is cyclic. Its monic generator $g_0$ also generates its scalar extension, so uniqueness gives $g_0=g$. Then $h=(X^n-1)/g$ also has coefficients in $\F_{p^b}$. The converse follows by taking the rows $g,Xg,\ldots,X^{k-1}g$ over that subfield. This proves \eqref{eq:definition-field}.

Having identified the field of definition, we turn to its consequence for length. Suppose $C$ is nontrivial and MDS. If $a<f$, then $a\mid f$ implies $f\geq2a$. Lemmas~\ref{rem:scalar} and~\ref{lem:bush}, applied over $\F_{p^a}$, give
\begin{align*}
 n&\leq2p^a-2\leq2\sqrt q-2<q+1,\\
 q+1-(2\sqrt q-2)&=(\sqrt q-1)^2+2>0.
\end{align*}
This contradicts $n>q+1$. Thus $a=f$, and $k=e/a$ yields $e=fk$. Lagrange's theorem gives $e\mid\varphi(n)$, proving \eqref{eq:descent-necessary}.
\end{proof}

The theorem isolates a useful obstruction: a code that genuinely descends to a proper subfield is much too short to violate the target bound over the larger field. The remaining case has maximal prime-field order.

\subsection{A dimension restriction and an example}
Combining the maximal-order condition with the elementary upper bound on $n$ gives a restriction on the dimension.
\begin{corollary}\label{cor:high-dim}
For $q=p^f$ with $f\geq2$, every nontrivial irreducible cyclic MDS code satisfies the conjectured bound whenever
\[
 (f-1)k>q-2.
\]
Any prime-length counterexample must also satisfy $fk\mid n-1$.
\end{corollary}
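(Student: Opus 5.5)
Suppose, for contradiction, that $n>q+1$. Theorem~\ref{thm:descent} then gives $\ord_n(p)=fk$. Since $\ord_n(p)$ divides $\varphi(n)\leq n-1$, we obtain $fk\leq n-1$. Next we bound $n$ from above using Lemma~\ref{lem:bush}, which gives $n\leq q+k-1$. Combining the two inequalities,
\[
 fk\leq n-1\leq q+k-2,\qquad\text{so}\qquad (f-1)k\leq q-2 .
\]
This contradicts the hypothesis $(f-1)k>q-2$. Hence $n\leq q+1$, which is the conjectured bound. The case $n=q+2$ needs no separate treatment, because it already lies in the range $n>q+1$ covered by the argument. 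If one also wants the strict form, it follows from Theorem~\ref{thm:rigidity}, which shows that $n$ is odd, together with Proposition~\ref{prop:dimension-two}.

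For the prime-length assertion, let $n=\ell$ be prime with $n>q+1$. Theorem~\ref{thm:descent} again gives $\ord_\ell(p)=fk$. Lagrange's theorem in $(\Z/\ell\Z)^\times$, whose order is $\ell-1=n-1$, gives $fk\mid n-1$.

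The only step needing care is the inequality $\varphi(n)\leq n-1$ for $n>1$. This is immediate, since $\gcd(n,n)=n\neq1$, so $n$ itself is never a unit modulo $n$. Everything else is a direct combination of Theorem~\ref{thm:descent} with Lemma~\ref{lem:bush}, so I expect no real obstacle. The hypothesis $f\geq2$ is needed only to make the condition $(f-1)k>q-2$ satisfiable at all.
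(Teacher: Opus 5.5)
Your proof is correct and is the same argument as the paper's. You assume $n>q+1$, use Theorem~\ref{thm:descent} and Lemma~\ref{lem:bush} to chain $fk=\ord_n(p)\leq\varphi(n)\leq n-1\leq q+k-2$, and for prime $n$ read off $fk\mid n-1$ from $\varphi(n)=n-1$; your closing aside on the strict form is not needed here, and for that form Proposition~\ref{prop:dimension-two} alone suffices, without the oddness of $n$.
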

\begin{proof}
A counterexample would give, successively,
\[
 fk=\ord_n(p)\leq\varphi(n)\leq n-1\leq q+k-2.
\]
Subtracting $k$ gives $(f-1)k\leq q-2$, a contradiction. If $n$ is prime, then $\varphi(n)=n-1$, and the divisibility in \eqref{eq:descent-necessary} becomes $fk\mid n-1$.
\end{proof}

The distinction between the orders of $q$ and $p$ modulo $n$ is already visible in the following small parameter set.
\begin{example}
For $(q,n,k)=(9,11,5)$,
\[
 \ord_{11}(9)=5,
 \qquad\ord_{11}(3)=5,
 \qquad a=\gcd(5,2)=1.
\]
The first order condition is consistent with subgroup rigidity, since $11$ is prime. Nevertheless, the code descends to $\F_3$. A nontrivial MDS code over that field would have
\[
 n\leq2\cdot3-2=4,
\]
which excludes $n=11$ without a distance search.
\end{example}
This example shows why the subsequent sieve must test both $\ord_n(q)$ and the stronger prime-field condition, rather than only the first.

\section{Cyclotomic reduction and finite dimensions}\label{sec:finite}
The elementary MDS bound leaves only a short interval above $q+1$. The substitution $t=n-q$ turns that interval into a bounded set of integers depending on $k$ alone. Subgroup rigidity then assigns every prime-power divisor of $n$ to one cyclotomic factor.

\subsection{A finite arithmetic reduction}
We first record this reduction in a form that retains the full prime-power divisibility of the length.
\begin{theorem}\label{thm:cyclotomic}
Suppose a nontrivial irreducible cyclic $[n,k]_q$ MDS code has $n>q+1$. With $t=n-q$,
\begin{gather}
 2\leq t\leq k-1,\label{eq:finite-interval}\\
 n~\Big|~|\Phi_k(-t)|,\qquad
 \ord_\ell(-t)=k\quad(\ell\mid n\text{ prime}).\label{eq:finite}
\end{gather}
Thus the length conjecture in every fixed dimension reduces to finitely many explicitly specified pairs $(q,n)$.
\end{theorem}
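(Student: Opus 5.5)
The plan is to get the interval~\eqref{eq:finite-interval} from Lemma~\ref{lem:bush} and Lemma~\ref{lem:faithful}, and then rewrite subgroup rigidity modulo each prime divisor of $n$ using $q\equiv-t\pmod n$.

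\emph{The interval.} Lemma~\ref{lem:bush} gives $n\leq q+k-1$, so $t\leq k-1$. The hypothesis $n>q+1$ gives $t\geq2$.

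\emph{The orders.} Since $n=q+t$, we have $q\equiv-t\pmod n$, and hence modulo every divisor of $n$. Note that $\gcd(t,n)=\gcd(q,n)=1$ by Lemma~\ref{lem:faithful}, so $-t$ is a unit modulo $n$. For each prime $\ell\mid n$, Theorem~\ref{thm:rigidity} gives $\ord_\ell(q)=k$, and therefore $\ord_\ell(-t)=k$. This is the second assertion of~\eqref{eq:finite}.

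\emph{The divisibility $n\mid|\Phi_k(-t)|$.} Irreducibility gives $k=\ord_n(q)$, so $n\mid q^k-1$, which means $(-t)^k\equiv1\pmod n$. Factor
\[
(-t)^k-1=\prod_{d\mid k}\Phi_d(-t).
\]
Fix a prime $\ell\mid n$ and put $v=v_\ell(n)$. We must show that the full power $\ell^v$ divides $\Phi_k(-t)$ alone, not just the product. For a proper divisor $d\mid k$, $d<k$, suppose $\ell\mid\Phi_d(-t)$. Then $\ell\mid(-t)^d-1$, so $\ord_\ell(-t)\mid d<k$, contradicting the previous step. Hence $\ell$ divides no $\Phi_d(-t)$ with $d<k$. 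All of $\ell^v$, which divides $(-t)^k-1$, must therefore divide $\Phi_k(-t)$. Taking the product over all $\ell\mid n$ gives $n\mid\Phi_k(-t)$.

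It remains to check that $\Phi_k(-t)\neq0$. This holds because $|-t|\geq2$ and $\Phi_k$ has all its roots on the unit circle. So the divisibility gives an honest bound $n\leq|\Phi_k(-t)|$.

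\emph{Finiteness.} For fixed $k$, the integer $t$ ranges over $\{2,\dots,k-1\}$. Each $t$ gives finitely many candidate divisors $n$ of the nonzero integer $\Phi_k(-t)$, and then $q=n-t$ is determined. This yields finitely many explicit pairs $(q,n)$. One could also prune further using the condition that $q$ is a prime power with $\ord_n(p)=fk$, by Theorem~\ref{thm:descent}.

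The main point needing care is the prime-power step. The argument has to show that $\ell^v$ does not split among several cyclotomic factors. The argument above handles this without any lifting-the-exponent reasoning, because $\ell$ divides none of the other factors at all.
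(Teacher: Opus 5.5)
Your proposal is correct and follows essentially the same route as the paper. It gets the interval from Lemma~\ref{lem:bush} and transfers Theorem~\ref{thm:rigidity} through $q\equiv -t\pmod n$. It then observes that no prime $\ell\mid n$ divides $\Phi_d(-t)$ for a proper divisor $d$ of $k$, so the full power $\ell^{v_\ell(n)}$ lies in $\Phi_k(-t)$; your unit-circle argument for $\Phi_k(-t)\neq 0$ justifies a point the paper only asserts.
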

\begin{proof}
By Lemma~\ref{lem:bush},
\[
 q+2\leq n\leq q+k-1,
\]
which is \eqref{eq:finite-interval}. Since $q\equiv-t\pmod n$ and $\ord_n(q)=k$,
\[
 n\mid(-t)^k-1.
\]
For every prime $\ell\mid n$, Theorem~\ref{thm:rigidity} gives
\[
 \ord_\ell(-t)=\ord_\ell(q)=k,
 \qquad k\mid\ell-1.
\]
In particular $\ell\nmid t$ and $\ell\nmid k$.

To locate each prime-power contribution, evaluate \eqref{eq:cyclotomic-product} at $-t$:
\[
 (-t)^k-1=\prod_{d\mid k}\Phi_d(-t).
\]
If $d<k$ divides $k$ and $\ell\mid\Phi_d(-t)$, then $\Phi_d(X)\mid X^d-1$ gives
\[
 (-t)^d\equiv1\pmod\ell,
\]
contrary to the order $k$. Therefore all the $\ell$-adic valuation is in the factor $\Phi_k(-t)$:
\begin{align*}
 v_\ell(n)
 &\leq v_\ell\bigl((-t)^k-1\bigr)\\
 &=\sum_{d\mid k}v_\ell\bigl(\Phi_d(-t)\bigr)
 =v_\ell\bigl(\Phi_k(-t)\bigr).
\end{align*}
This proves divisibility by the entire prime-power factor $\ell^{v_\ell(n)}$, not only by $\ell$. Multiplication over $\ell\mid n$ proves \eqref{eq:finite}.

Finally, for fixed $k$, there are only $k-2$ possible values of $t$, and $\Phi_k(-t)\neq0$ for $t\geq2$. Each of these nonzero integers has finitely many positive divisors. For each retained divisor $n$, the alphabet is forced to be $q=n-t$.
\end{proof}

There is also a useful uniqueness statement before any cyclotomic factorization is performed.
\begin{corollary}\label{cor:one-length}
Write $q=uk+r$, where $0\leq r<k$. A counterexample is possible only if $2\leq r\leq k-1$, in which case
\begin{equation}\label{eq:unique-length}
 n=(u+1)k+1,\qquad t=k+1-r.
\end{equation}
The displayed integer must still satisfy all order, parity, and field-descent conditions.
\end{corollary}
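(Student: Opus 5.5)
The plan is to intersect two constraints already proved: the congruence $n\equiv1\pmod k$ from Theorem~\ref{thm:rigidity}, and the counterexample window $q+2\leq n\leq q+k-1$ from Lemma~\ref{lem:bush} (equivalently $2\leq t\leq k-1$ in Theorem~\ref{thm:cyclotomic}). Writing $q=uk+r$ with $0\leq r<k$, the window becomes
\[
 uk+r+2\leq n\leq uk+r+k-1 .
\]
This interval has length $k-2<k$, so it contains at most one integer congruent to $1$ modulo $k$. The whole argument therefore reduces to locating that residue, if it exists.

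Next I would split by $r$. The candidates $\equiv1\pmod k$ near the window are $uk+1$ and $(u+1)k+1$. The number $uk+1$ lies below the window because $uk+1<uk+r+2$. The number $(u+2)k+1$ lies above it, since $(u+2)k+1>uk+r+k-1$ is equivalent to $k+2>r$.

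So the only possibility is $n=(u+1)k+1$. It lies in the window exactly when
\[
 uk+r+2\leq (u+1)k+1\leq uk+r+k-1,
\]
that is, when $r\leq k-1$ and $r\geq 2$. This gives the range $2\leq r\leq k-1$ and the formula $n=(u+1)k+1$. Then $t=n-q=(u+1)k+1-uk-r=k+1-r$, which is \eqref{eq:unique-length}.

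For $r\in\{0,1\}$ no admissible length remains, which matches Corollary~\ref{cor:residue}. I would note this as a consistency check rather than rely on it. For $k=2$ the range $2\leq r\leq1$ is empty, consistent with Lemma~\ref{lem:bush}.

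The final sentence of the statement only records that this unique integer must still satisfy the earlier conditions: the rigidity conditions of Theorem~\ref{thm:rigidity}, oddness, and \eqref{eq:descent-necessary}. These hold because any counterexample satisfies them, so nothing beyond a citation is needed.

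There is no real obstacle here. The only care needed is to check both endpoints of the window correctly, in particular to confirm that $(u+2)k+1$ is excluded using $r\leq k-1$.
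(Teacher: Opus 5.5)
Your proposal is correct and follows essentially the paper's argument: both intersect $n\equiv1\pmod k$ from Theorem~\ref{thm:rigidity} with the window $q+2\leq n\leq q+k-1$ from Lemma~\ref{lem:bush}. The paper writes $n=vk+1$ and reads off $v-u=1$ and $2\leq r\leq k-1$ from $2\leq t=(v-u)k+1-r\leq k-1$, which is the same endpoint check you carry out on the neighbouring residues $uk+1$, $(u+1)k+1$, $(u+2)k+1$.
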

\begin{proof}
The interval $[q+2,q+k-1]$ has fewer than $k$ integers, so contains at most one integer congruent to $1$ modulo $k$. If $n=vk+1$, then
\[
 t=n-q=(v-u)k+1-r.
\]
The inequalities $2\leq t\leq k-1$ force $v-u=1$ and $2\leq r\leq k-1$. Substitution gives \eqref{eq:unique-length}.
\end{proof}

The cyclotomic theorem is useful in any fixed dimension. In the range needed here, a second reduction avoids large cyclotomic factorizations altogether. It uses only the established geometric ranges in Lemma~\ref{lem:geometric}, whose proof in Section~\ref{sec:composite} is independent of this section.

\subsection{A bounded arithmetic reduction}
Applying these geometric ranges together with exact field descent yields the following list of possible exceptions.
\begin{lemma}\label{lem:twelve}
If a nontrivial irreducible cyclic $[n,k]_q$ MDS code has $2\leq k\leq30$ and $n>q+1$, then $(q,n,k)$ is one of the triples in Table~\ref{tab:residual}.
\end{lemma}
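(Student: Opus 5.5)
The plan is to turn the hypotheses into a finite search over $(q,k)$ and then sieve it with the necessary conditions already proved.

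\emph{Step 1: eliminate the ranges covered by known results.} By Theorem~\ref{thm:descent}, a counterexample with $n>q+1$ must satisfy $a=f$, $\ord_n(p)=fk$ and $fk\mid\varphi(n)$. We remove alphabets using the cited results:
\begin{itemize}
\item prime $q$ by Ball's theorem;
\item $q=p^f$ with $f>1$ and $k\leq 2p-2$ by Ball--De~Beule;
\item the ranges in Lemma~\ref{lem:geometric} due to Hirschfeld--Korchm\'aros, Storme--Thas and Voloch. These are bounds of the type $k<c\sqrt q$ or $k\leq \sqrt{pq}/4$, which settle large $q$ for fixed $k$.
\end{itemize}
Together with $k\leq 30$, this leaves $q=p^f$ with $f\geq2$, $p<(k+2)/2\leq 16$, and $q$ below an explicit bound depending on $k$, hence below some absolute bound $Q_0$. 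Corollary~\ref{cor:high-dim} also removes every pair with $(f-1)k>q-2$. The point of this step is that a finite list of prime powers $q$ survives for each $k\leq30$.

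\emph{Step 2: find the unique candidate length.} For each surviving pair $(q,k)$, Corollary~\ref{cor:residue} discards $q\equiv 0,1\pmod k$. Otherwise Corollary~\ref{cor:one-length} gives the only possible length $n=(u+1)k+1$, where $q=uk+r$. We then test this $n$ against the following conditions, in increasing order of cost:
\begin{itemize}
\item $n$ is odd and $\gcd(n,q)=1$;
\item $n\mid|\Phi_k(-t)|$ with $t=n-q$, equivalently $\ord_n(q)=k$;
\item every prime $\ell\mid n$ satisfies $\ord_\ell(q)=k$ (Theorem~\ref{thm:rigidity});
\item $\ord_n(p)=fk$ (Theorem~\ref{thm:descent});
\item if $n$ is composite, $q\geq k^2+k+2$, and if $k$ is odd, each prime factor of $n$ is at least $2k+1$ (Corollary~\ref{cor:factor-count}).
\end{itemize}
The triples that pass every test should be exactly the twelve rows of Table~\ref{tab:residual}. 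The proof then consists of stating the search bounds and recording this enumeration.

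\emph{Step 3: control the bookkeeping.} The main difficulty is not conceptual but lies in getting the geometric inputs' boundary cases exactly right: strict versus non-strict inequalities and the excluded field sizes in Lemma~\ref{lem:geometric}. An error there would silently drop a triple from the table or add one. I would therefore first tabulate, for each characteristic $p\leq 13$ and each $k\leq30$, the largest $f$ not covered by any cited bound. Only then would I run the order sieve, which is routine modular arithmetic since all surviving $n$ are of moderate size. If a candidate survives that is not in the table, the first thing to check is whether the prime-field order condition $\ord_n(p)=fk$ was actually applied to it.
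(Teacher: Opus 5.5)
Your plan follows the paper's proof: Lemma~\ref{lem:geometric}(i), with $q>3125$ and $q>4(k-2)^2$, bounds $q\le3136$, Ball and Ball--De~Beule leave $f\ge2$ and $k>2p-2$ (so $p\le13$), and the unique candidate $n=(u+1)k+1$ from Corollary~\ref{cor:one-length} is then sieved by $\gcd(n,p)=1$, the prime-wise condition $\ord_\ell(q)=k$, and $\ord_n(p)=fk$; your extra tests are also necessary, so they can only shrink the survivor set. One point to tighten is that Lemma~\ref{lem:geometric} assumes $k\ge6$ (for $k=3$ and even $q$ the geometric input only gives $q+2$), so for $k\le5$ finiteness must come from your test $n\mid|\Phi_k(-t)|$, as the paper does with the values $\{3\}$, $\{5,10\}$, $\{11,61,205\}$; also, that divisibility follows from the prime-wise rigidity and is not equivalent to $\ord_n(q)=k$.
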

\begin{proof}
The case $k=2$ is excluded by Proposition~\ref{prop:dimension-two}. For $k=3,4,5$, the cyclotomic values are respectively
\[
 \{3\},\qquad \{5,10\},\qquad \{11,61,205\},
\]
where $205=5\cdot41$. Every prime factor of $n$ must be $1$ modulo $k$, and $n\geq2k+1$. The first two sets give no admissible length. For $k=5$, the only possibilities are
\[
 (q,n)=(9,11),\quad(58,61),\quad(37,41).
\]
The middle alphabet is not a prime power; the last is excluded by Ball's prime-field theorem \cite{Ball}; the first descends to $\F_3$ by Theorem~\ref{thm:descent} and is excluded by Lemma~\ref{lem:bush}.

We may therefore assume $6\leq k\leq30$. If $q>3136$, then
\[
 q>3125,\qquad q>4(30-2)^2\geq4(k-2)^2,
\]
contradicting Lemma~\ref{lem:geometric}(i). Hence $q\leq3136$. Write $q=p^f$. For a prime alphabet, Ball's bound \cite[Theorem~1.7]{Ball} gives $n\leq q+k+1-\min\{k,p\}=q+1$, since Lemma~\ref{lem:bush} gives $k\leq q-1=p-1$. For $f>1$, \cite[Theorem~5.2]{BallDeBeule} gives $n\leq q+1$ whenever $k\leq2p-2$. Thus we may assume
\[
 f\geq2,\qquad k>2p-2.
\]
These restrictions leave the following complete set of alphabets to be considered:
\[
\begin{array}{c|l}
p&\{p^f:2\leq f,\ p^f\leq3136\}\\\hline
2&4,8,16,32,64,128,256,512,1024,2048\\
3&9,27,81,243,729,2187\\
5&25,125,625,3125\\
7&49,343,2401\\
11&121,1331\\
13&169,2197.
\end{array}
\]
We now impose the restrictions on the length and the relevant multiplicative orders. For each displayed $q=p^f$ and each integer
\[
 \max\{6,2p-1\}\leq k\leq\min\{30,q-1\},
\]
write $q=uk+r$. Corollary~\ref{cor:one-length} leaves at most the single integer $n=(u+1)k+1$, and only when $2\leq r\leq k-1$. The remaining necessary tests are exactly
\begin{gather}
 n\geq2k+1,\qquad \gcd(n,p)=1,\label{eq:bounded-tests-a}\\
 \ord_\ell(q)=k\quad(\ell\mid n\text{ prime}),
 \qquad \ord_n(p)=fk.\label{eq:bounded-tests-b}
\end{gather}
Their simultaneous solutions in this displayed finite domain are the twelve rows of Table~\ref{tab:residual}. This last calculation requires only trial division of integers $n\leq3165$ and modular powers: for any coprime $a,m$, an asserted order $e$ is checked by
\[
 a^e\equiv1\pmod m,\qquad
 a^{e/r}\not\equiv1\pmod m\quad(r\mid e\text{ prime}).
\]
Every possible alphabet and dimension is included in the ranges above, and every discarded condition was necessary. This proves the claimed exhaustion.
\end{proof}

\begin{table}[htbp]
\centering
\caption{The complete residual parameter set from Lemma~\ref{lem:twelve}. All displayed lengths are prime.}\label{tab:residual}
\begin{tabular}{@{}rrrrrr@{}}
\toprule
$q$&$n$&$k$&$p$&$f$&$n-q$\\\midrule
27&31&10&3&3&4\\
64&67&11&2&6&3\\
49&53&13&7&2&4\\
125&127&14&5&3&2\\
25&37&18&5&2&12\\
25&43&21&5&2&18\\
81&89&22&3&4&8\\
25&47&23&5&2&22\\
125&139&23&5&3&14\\
81&101&25&3&4&20\\
243&251&25&3&5&8\\
49&61&30&7&2&12\\
\bottomrule
\end{tabular}
\end{table}

\subsection{A uniform polynomial obstruction}
A nonzero multiple of the minimal polynomial of $\beta$ gives a dependence among powers of $\beta$. The next argument arranges sufficiently many zero coefficients in a multiple of small degree. It is an application of ordinary polynomial multiplication and the column criterion; the underlying polynomial description of cyclic-code duality is standard \cite[Chapter~4]{HP}.

To formulate the obstruction simultaneously for all primitive roots, we first work in a polynomial quotient ring. Fix a prime $n\neq p$ and $q=p^f$ with $k=\ord_n(q)$. In
\[
 R_{p,n}=\F_p[Y]/(Y^n-1)
\]
define $c_0(Y),\ldots,c_k(Y)$ by
\begin{equation}\label{eq:orbit-polynomial}
 \prod_{h=0}^{k-1}\bigl(X-Y^{q^h\bmod n}\bigr)
    =\sum_{j=0}^k c_j(Y)X^j\quad\text{in }R_{p,n}[X].
\end{equation}
Set $c_j=0$ for $j\notin\{0,\ldots,k\}$. Every residue class in $R_{p,n}$ has a unique representative of degree less than $n$, which is used whenever polynomial divisibility is stated. Form the coefficient matrix
\begin{equation}\label{eq:toeplitz}
 T_{q,n}(Y)=\bigl(c_{r-j}(Y)\bigr)_{
             \substack{0\leq r\leq k+3\\0\leq j\leq3}}
\end{equation}
and the product of its square minors of order four:
\begin{equation}\label{eq:minor-product}
 \Delta_{q,n}(Y)=
 \prod_{0\leq r_0<r_1<r_2<r_3\leq k+3}
       \det\bigl(c_{r_i-j}(Y)\bigr)_{0\leq i,j\leq3}
       \quad\text{in }R_{p,n}.
\end{equation}
Since $\Phi_n(Y)$ divides $Y^n-1$, reduction modulo $\Phi_n$ is well-defined on this ring. No choice of an extension-field basis occurs in these definitions.

Vanishing of this product at a primitive root forces a suitable four-row submatrix to be singular, as the next lemma explains.
\begin{lemma}\label{lem:low-multiple}
Suppose $k+3<n$ and
\begin{equation}\label{eq:polynomial-obstruction}
 \Phi_n(Y)\mid\Delta_{q,n}(Y)\quad\text{in }\F_p[Y].
\end{equation}
Then no faithful irreducible cyclic code with parameters $(q,n,k)$ is MDS.
\end{lemma}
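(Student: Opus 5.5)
Assume for contradiction that a faithful irreducible cyclic code with parameters $(q,n,k)$ is MDS. By Proposition~\ref{prop:trace-model} and Lemma~\ref{lem:faithful} its column subgroup is $H=\langle\beta\rangle\leq\F_{q^k}^\times$ of prime order $n$. The check-root orbit of $\alpha=\beta^{-1}$ has the same $q$-power structure as that of $\beta$, so we may work with the minimal polynomial of $\beta$ over $\F_q$. Choose a ring homomorphism $\varepsilon:R_{p,n}\to\overline{\F}_p$ with $Y\mapsto\beta$. It is well defined because $\beta^n=1$. It factors through $\F_p[Y]/(\Phi_n)$ because $\beta$ is a primitive $n$th root of unity and $n\neq p$.

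Applying $\varepsilon$ to \eqref{eq:orbit-polynomial} gives
\[
 \prod_{h=0}^{k-1}\bigl(X-\beta^{q^h}\bigr)=\sum_j\varepsilon(c_j)X^j=:m(X),
\]
which is the minimal polynomial of $\beta$ over $\F_q$. In particular $\varepsilon(c_j)\in\F_q$ and $\varepsilon(c_k)=1$. Hypothesis \eqref{eq:polynomial-obstruction} gives $\varepsilon(\Delta_{q,n})=0$. So some factor vanishes: there are rows $r_0<r_1<r_2<r_3\leq k+3$ for which the $4\times4$ matrix $\bigl(\varepsilon(c_{r_i-j})\bigr)$ is singular over $\F_q$.

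Singularity gives a nonzero vector $(s_0,\ldots,s_3)\in\F_q^4$ killing those four rows. Put $s(X)=\sum_{j=0}^3 s_jX^j\neq0$. Row $r$ of $T_{q,n}$ evaluated at $\beta$ pairs with $s$ to give exactly the coefficient of $X^r$ in the product $m(X)s(X)$. Hence $P=ms$ is a nonzero polynomial of degree at most $k+3$ whose coefficients at $X^{r_0},\ldots,X^{r_3}$ vanish. Consequently $P$ has at most $(k+4)-4=k$ nonzero coefficients. Since $m(\beta)=0$, we get $P(\beta)=0$, which is a nontrivial $\F_q$-linear relation among at most $k$ of the elements $1,\beta,\ldots,\beta^{k+3}$. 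Because $k+3<n$, these are distinct elements of $H$. This contradicts Lemma~\ref{lem:column}, which says every set of at most $k$ distinct elements of $H$ is independent.

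The main point needing care is the identification of the minors. One must check that the entries $c_{r-j}$ of \eqref{eq:toeplitz} are indexed so that row $r$ dotted with $(s_j)$ really is the $X^r$-coefficient $\sum_j c_{r-j}s_j$ of $m\cdot s$, with the convention that $c_j=0$ outside $0\leq j\leq k$. It also matters that the product in $R_{p,n}$ specializes to a product after applying $\varepsilon$; this holds because $\varepsilon$ is a ring map. Divisibility by $\Phi_n$ in $\F_p[Y]$ of the degree-$<n$ representative then transfers to vanishing at $\beta$. The argument works for any primitive $\beta$, so it applies uniformly to every primitive root.
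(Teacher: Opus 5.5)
Your proof is correct and follows the paper's argument essentially step for step. Your ring map $\varepsilon$ is just evaluation at a primitive $n$th root $\beta$; from there you extract a singular $4\times4$ minor of $T_{q,n}(\beta)$, multiply $m_\beta$ by the cubic kernel polynomial to get a nonzero multiple of degree at most $k+3$ with at most $k$ nonzero coefficients, and contradict Lemma~\ref{lem:column} using $k+3<n$. Your remarks about $\alpha=\beta^{-1}$ and Lemma~\ref{lem:faithful} are harmless but unnecessary, since faithfulness is a hypothesis and the vanishing holds at every primitive root.
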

\begin{proof}
Let $\beta$ be any primitive $n$th root in an algebraic closure of $\F_p$. The polynomial
\[
 m_\beta(X)=\prod_{h=0}^{k-1}(X-\beta^{q^h})
          =\sum_{j=0}^k c_j(\beta)X^j
\]
is the minimal polynomial of $\beta$ over $\F_q$: its $k$ roots are distinct, and $q$-Frobenius permutes them transitively. In particular $c_j(\beta)\in\F_q$ and $c_k(\beta)=1$.

Equation~\eqref{eq:polynomial-obstruction} gives $\Delta_{q,n}(\beta)=0$. A product in a field is zero only if one factor is zero. Thus some four-row submatrix of $T_{q,n}(\beta)$ is singular over $\F_q$, and it has a nonzero kernel vector $(b_0,b_1,b_2,b_3)^{\mathsf T}\in\F_q^4$. Put
\[
 b(X)=b_0+b_1X+b_2X^2+b_3X^3\neq0,
 \qquad a(X)=m_\beta(X)b(X).
\]
The coefficient of $X^r$ in $a$ is
\[
 [X^r]a=\sum_{j=0}^3 c_{r-j}(\beta)b_j.
\]
Four distinct coefficients, at the chosen rows, are zero. Since $a\neq0$ and $\deg a\leq k+3<n$, it follows that
\[
 \wt(a)\leq(k+4)-4=k,\qquad a(\beta)=0.
\]
The nonzero coefficients of $a$ therefore give a nonzero $\F_q$-linear dependence among at most $k$ distinct elements of $\langle\beta\rangle$. Lemma~\ref{lem:column} rules out the MDS property. The argument holds for every primitive choice of $\beta$.
\end{proof}

It remains to verify the divisibility hypothesis for the twelve residual triples. The following identities do this over the prime field.
\begin{lemma}\label{lem:finite-identity}
For every row of Table~\ref{tab:residual}, the coefficients defined by
\eqref{eq:orbit-polynomial} satisfy
\begin{equation}\label{eq:finite-identity}
 \Delta_{q,n}(Y)\equiv0\pmod{1+Y+\cdots+Y^{n-1}}
 \quad\text{over }\F_p.
\end{equation}
\end{lemma}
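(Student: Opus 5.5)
The plan is to reduce the divisibility \eqref{eq:finite-identity} to a finite list of explicitly checkable certificates, one for each Galois orbit of primitive $n$th roots of unity. Every $n$ in Table~\ref{tab:residual} is a prime different from $p$, so $1+Y+\cdots+Y^{n-1}=\Phi_n(Y)$. Over $\F_p$ it is separable and factors as
\[
 \Phi_n=P_1\cdots P_s,\qquad \deg P_i=\ord_n(p)=fk,\qquad s=\frac{n-1}{fk},
\]
where we use $\ord_n(p)=fk$ from the descent condition \eqref{eq:bounded-tests-b}. Because $\Phi_n$ is squarefree, it suffices to show that $P_i\mid\Delta_{q,n}$ for each $i$. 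For that it is enough to exhibit, for each $i$, a single quadruple $r_0<r_1<r_2<r_3\le k+3$ whose $4\times4$ minor is divisible by $P_i$.

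First I would record a compatibility with Frobenius. Since $c_j(Y)$ has coefficients in $\F_p$, we have $c_j(Y)^p=c_j(Y^p)$ in $R_{p,n}$. Hence if a minor vanishes at a primitive root $\beta$, it also vanishes at $\beta^p$. The roots of $P_i$ are exactly one coset $\beta^{u\langle p\rangle}$ of $\langle p\rangle$ in $(\Z/n\Z)^\times$. So for each coset representative $u$ I only need to check one root $\beta^u$.

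The counts of cosets are small. From the table, $s=(n-1)/(fk)$ equals $1$ for $(27,31)$, $(64,67)$, $(25,37)$ and several other rows, equals $2$ for $(49,53)$, and equals $3$ for $(125,127)$. At most three certificates per row are therefore needed, which matches the description in the abstract.

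Next, for each row and each coset representative $u$, I would compute $m_{\beta^u}(X)=\prod_{h<k}(X-\beta^{uq^h})$ inside $\F_p[Y]/(P_i)\cong\F_{p^{fk}}$. By construction this is the specialization of the coefficients $c_j$. I would then search the $\binom{k+4}{4}$ row quadruples for one whose minor of $T_{q,n}$ vanishes. Equivalently, I would look for a nonzero cubic $b(X)$ such that $m_{\beta^u}b$ has at least four zero coefficients. The certificate recorded for each factor would be:
\begin{itemize}
\item the factor $P_i$ (or the representative $u$);
\item the chosen quadruple of rows;
\item the kernel vector $b$.
\end{itemize}
Each certificate is then verifiable by a single polynomial multiplication and a reduction modulo $P_i$ (or modulo $\Phi_n$). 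Taking the product of the at most three certified minors gives a divisor of $\Delta_{q,n}$ that is already divisible by $\Phi_n$, and this proves \eqref{eq:finite-identity}.

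The main obstacle is the existence and discovery of the witnesses. Nothing structural guarantees that a degree-$\le k+3$ multiple of $m_\beta$ with four zero coefficients exists. Heuristically, the $\binom{k+4}{4}$ quadruples each vanish with probability about $q^{-1}$, which suggests witnesses should exist for these small $q$, but it has to be checked row by row. I would run the exhaustive search by computer, ordering quadruples so that small-degree and sparse $b$ are tried first. The remaining task is then to present the resulting certificates compactly enough to be independently verified, for instance as explicit polynomials $b(X)$ over $\F_q$ together with the zero positions of $m_\beta b$.
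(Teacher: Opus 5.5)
Your reduction is correct and follows the same skeleton as the paper. Each $n$ in Table~\ref{tab:residual} is a prime with $\ord_n(p)=fk$, so $1+Y+\cdots+Y^{n-1}=\Phi_n$ splits over $\F_p$ into $s=(n-1)/(fk)$ distinct irreducible factors, and one minor divisible by each factor suffices. The paper's sets $\mathcal R$ have exactly $s$ members in every row: three for $(125,127,14)$, two for $(49,53,13)$, $(125,139,23)$ and $(243,251,25)$, and one otherwise. Your Frobenius remark is harmless but unnecessary, since a polynomial over $\F_p$ vanishing at one root of an irreducible $P_i$ is already divisible by $P_i$.

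The gap is that your proposal stops exactly where the content of the lemma begins. The statement asserts that a specific finite computation succeeds. You neither exhibit the witnesses nor give any reason they exist, and you say yourself that nothing structural forces a vanishing minor. The heuristic that about $\binom{k+4}{4}/q$ minors should vanish on each Galois orbit proves nothing. Since each $P_i$ is prime, your reduction is in fact an equivalence: a single orbit in a single row with no vanishing minor would make \eqref{eq:finite-identity} false for that row.

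What must be supplied is the certificate itself. The paper lists, for every row, explicit quadruples, for instance $(0,1,4,7)$ for $(27,31,10)$ and $(1,3,8,10)$ for $(64,67,11)$. It states the corresponding identity $\prod_{R\in\mathcal R}D_R=\eta S_n$ in $\F_p[Y]/(Y^n-1)$. That form can be checked by circular convolution alone, with no factoring of $\Phi_n$ and no model of $\F_q$ or $\F_{p^{fk}}$; your certificates, made of factors $P_i$ and kernel vectors $b\in\F_q^4$, would need such a model. The paper also uses the boundary reduction $D_{0,1,u,v}=c_0^2(c_{u-2}c_{v-3}-c_{u-3}c_{v-2})$, which turns most checks into a $2\times2$ determinant, that is, a linear multiplier $b$. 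It verifies the first row by hand with the periods $E_i$ over $\F_3$, obtaining $c_2c_4=2E_1+E_2=c_1c_5$. Until you run the search and record certificates of this kind for all twelve rows, what you have is a verification procedure, not a proof.
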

\begin{proof}
Write
\[
 S_n(Y)=1+Y+\cdots+Y^{n-1},\qquad
 D_{r_0,r_1,r_2,r_3}(Y)
 =\det\bigl(c_{r_i-j}(Y)\bigr)_{0\leq i,j\leq3}.
\]
All polynomial equalities in this proof are in
$R_{p,n}=\F_p[Y]/(Y^n-1)$.
The following are the required explicit identities. In each row,
$\mathcal R$ is a set of increasing four-tuples, and
\begin{equation}\label{eq:small-minor-product}
 \prod_{R\in\mathcal R}D_R(Y)=\eta S_n(Y).
\end{equation}
\begin{center}\small
\begin{tabular}{@{}rrrlr@{}}
\toprule
$q$&$n$&$k$&$\mathcal R$&$\eta\in\F_p$\\\midrule
27&31&10&$\{(0,1,4,7)\}$&0\\
64&67&11&$\{(1,3,8,10)\}$&1\\
49&53&13&$\{(0,1,3,15),(0,1,6,12)\}$&0\\
125&127&14&$\{(0,1,3,9),(0,2,4,11),(0,3,9,10)\}$&0\\
25&37&18&$\{(0,1,2,5)\}$&4\\
25&43&21&$\{(0,1,2,5)\}$&0\\
81&89&22&$\{(0,1,3,13)\}$&2\\
25&47&23&$\{(0,1,3,10)\}$&1\\
125&139&23&$\{(0,1,4,21),(0,1,5,10)\}$&0\\
81&101&25&$\{(0,1,3,27)\}$&0\\
243&251&25&$\{(0,1,6,14),(0,1,9,11)\}$&2\\
49&61&30&$\{(0,1,3,6)\}$&0\\
\bottomrule
\end{tabular}
\end{center}
We give the coefficient rules and an expansion to specify the arithmetic in
\eqref{eq:small-minor-product} completely. Put $e_h=q^h\bmod n$,
$c^{(0)}_0=1$, and $c^{(0)}_j=0$ for $j\neq0$. Then
\begin{equation}\label{eq:coefficient-recursion}
 c^{(h+1)}_j(Y)=c^{(h)}_{j-1}(Y)-Y^{e_h}c^{(h)}_j(Y),
 \qquad 0\leq h<k,
\end{equation}
with $c_j=c^{(k)}_j$. Equivalently,
\[
 c_j(Y)=(-1)^{k-j}
 \sum_{\substack{A\subseteq\{0,\ldots,k-1\}\\|A|=k-j}}
 Y^{\sum_{h\in A}e_h\bmod n}.
\]
To multiply the resulting coefficient polynomials, use the circular coefficient multiplication rule
\begin{equation}\label{eq:circular-convolution}
 [Y^a](uv)=\sum_{b=0}^{n-1}[Y^b]u\,[Y^{a-b\bmod n}]v.
\end{equation}
Since $n$ is prime, $q\not\equiv1\pmod n$, and $\ord_n(q)=k$,
\[
 \sum_{h=0}^{k-1}q^h\equiv0\pmod n,
 \qquad c_0=(-1)^k.
\]
The boundary rows of the coefficient matrix simplify most displayed determinants at once to
\begin{equation}\label{eq:minor-boundary-reduction}
 D_{0,1,u,v}
   =c_0^2\bigl(c_{u-2}c_{v-3}-c_{u-3}c_{v-2}\bigr).
\end{equation}
The remaining three determinants have their ordinary four-by-four
expansion. These rules yield exactly the identities
\eqref{eq:small-minor-product}.

For example, consider $(q,n,k)=(27,31,10)$ over $\F_3$. Set
\[
 H=\{1,2,4,8,15,16,23,27,29,30\},\qquad
 E_i(Y)=\sum_{h\in3^iH\bmod31}Y^h\quad(i\in\Z/3\Z).
\]
Direct multiplication of these sums gives
\[
 E_i^2=1+E_{i+1}+2E_{i+2},\qquad
 E_iE_{i+1}=E_i+2E_{i+1}+E_{i+2}.
\]
The coefficient recursion gives
\begin{align*}
 c_1&=2E_0,&
 c_2&=2+E_0+2E_1+E_2,\\
 c_4&=1+2E_0+2E_1+E_2,&
 c_5&=1+2E_1.
\end{align*}
Consequently,
\[
 c_2c_4=2E_1+E_2=c_1c_5,
 \qquad D_{0,1,4,7}=0.
\]
This is the first identity in \eqref{eq:small-minor-product}; the other
rows use the same coefficient rules, with the one to three determinants
explicitly indicated.

Finally, every $D_R$ appearing in \eqref{eq:small-minor-product} is a
factor of $\Delta_{q,n}$. Reducing that identity modulo $S_n$ therefore
gives $\Delta_{q,n}\equiv0\pmod{S_n}$, as required.
\end{proof}

For completeness, we record Rabin's classical irreducibility criterion \cite{Rabin}, which is useful when working with factors over the prime field. The multiplier argument above does not require a particular irreducible-polynomial model.
\begin{lemma}\label{lem:rabin}
Let $F\in\F_p[X]$ be monic of degree $D\geq1$. It is irreducible if and only if
\begin{gather}
 X^{p^D}\equiv X\pmod F,\label{eq:rabin-main}\\
 \gcd\bigl(F,X^{p^{D/r}}-X\bigr)=1
 \quad\text{for every prime }r\mid D.\label{eq:rabin-gcd}
\end{gather}
\end{lemma}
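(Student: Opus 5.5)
The plan is the standard two-directional argument, built on the factorization of $X^{p^m}-X$ over $\F_p$. I would first recall, in the spirit of Lemma~\ref{lem:orders}(iii), that for every $m\geq1$
\[
 X^{p^m}-X=\prod_{\substack{G\text{ monic irreducible}\\ \deg G\mid m}}G(X)
 \quad\text{in }\F_p[X].
\]
This holds for two reasons. The polynomial is separable, because its derivative is $-1$. Its roots are exactly the elements of $\F_{p^m}$. An element of degree $d$ over $\F_p$ lies in $\F_{p^m}$ precisely when $d\mid m$. Note that condition \eqref{eq:rabin-main} says exactly that $F\mid X^{p^D}-X$.

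For the forward direction, assume $F$ is irreducible of degree $D$ and let $x$ be a root. Then $\F_p(x)=\F_{p^D}$, so $F$ is one of the factors in the display with $m=D$, which gives \eqref{eq:rabin-main}. For a prime $r\mid D$, the only irreducible divisor of $F$ is $F$ itself, and $\deg F=D\nmid D/r$. So $F$ does not divide $X^{p^{D/r}}-X$, and the gcd is $1$, which is \eqref{eq:rabin-gcd}.

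For the converse, assume both conditions. Since $F\mid X^{p^D}-X$ and the right-hand side is squarefree, $F$ is a product of distinct monic irreducibles $G$, each with $\deg G\mid D$. Suppose some factor $G$ has degree $d<D$. Then $d$ is a proper divisor of $D$, so $D/d>1$ has a prime divisor $r$, and then $d\mid D/r$. By the display with $m=D/r$, $G$ divides $X^{p^{D/r}}-X$. It also divides $F$, so it divides the gcd in \eqref{eq:rabin-gcd}, which is a contradiction. Hence every irreducible factor has degree $D=\deg F$, so there is exactly one factor and $F$ is irreducible.

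I do not expect a genuine obstacle here. The only points needing care are justifying the squarefree factorization of $X^{p^m}-X$, which uses separability and the degree--subfield correspondence from Lemma~\ref{lem:orders}(iii), and the small number-theoretic step that a proper divisor $d$ of $D$ divides $D/r$ for some prime $r$. The converse is the more delicate half, because it is the only place where squarefreeness matters: without it, a power of a degree-$D$ irreducible would also satisfy the gcd condition.
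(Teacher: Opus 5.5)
Your proof is correct and follows the paper's argument essentially step for step: roots of $X^{p^m}-X$ lie in $\F_{p^m}$, the forward direction uses $D\mid D$ and $D\nmid D/r$, and the converse picks a prime $r\mid D/d$ so that $d\mid D/r$ and then counts degrees. One small aside: your closing remark overstates the role of squarefreeness, since once every irreducible factor has degree $D=\deg F$, degree counting alone already forces $F$ to be a single irreducible appearing to the first power (and a power $G^e$, $e\geq2$, of a degree-$D$ irreducible cannot have degree $D$ in the first place).
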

\begin{proof}
An irreducible polynomial of degree $d$ divides $X^{p^s}-X$ exactly when its roots lie in $\F_{p^s}$, equivalently when $d\mid s$. Also
\[
 (X^{p^s}-X)'=-1,
\]
so $X^{p^s}-X$ is square-free. If $F$ is irreducible of degree $D$, then $D\mid D$ gives \eqref{eq:rabin-main}, while $D\nmid D/r$ gives each coprimality condition.

Conversely, \eqref{eq:rabin-main} makes $F$ square-free and every irreducible factor degree $d$ a divisor of $D$. If some such factor had $d<D$, choose a prime $r\mid D/d$. Then $d\mid D/r$, so that factor would divide $X^{p^{D/r}}-X$, contradicting \eqref{eq:rabin-gcd}. Thus every irreducible factor has degree $D$. Since $\deg F=D$ and $F$ is monic, it has exactly one such factor and is irreducible.
\end{proof}

\subsection{The finite-dimensional conclusion}
The bounded parameter list and the polynomial obstruction now combine to settle all dimensions in the stated range.
\begin{theorem}\label{thm:thirty}
Every nontrivial irreducible cyclic MDS code with $2\leq k\leq30$ satisfies $n\leq q+1$. For $3\leq k\leq30$, the stronger inequality $n\leq q-1$ holds.
\end{theorem}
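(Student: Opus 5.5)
The plan is to argue by contradiction and let the earlier results in this section do the work. Suppose a nontrivial irreducible cyclic $[n,k]_q$ MDS code with $2\leq k\leq30$ has $n>q+1$.

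The case $k=2$ is excluded directly by Proposition~\ref{prop:dimension-two}. In dimension two the MDS lengths are the odd divisors of $q+1$, so $n\leq q+1$ holds automatically.

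For $3\leq k\leq30$, Lemma~\ref{lem:twelve} applies. It places $(q,n,k)$ among the twelve rows of Table~\ref{tab:residual}. That lemma already contains all the reductions:
\begin{itemize}
\item the geometric bound of Lemma~\ref{lem:geometric}(i), which gives $q\leq3136$;
\item Ball's prime-field theorem and the Ball--De Beule range $k\leq2p-2$;
\item the uniqueness of the candidate length from Corollary~\ref{cor:one-length};
\item the order and descent tests \eqref{eq:bounded-tests-a}--\eqref{eq:bounded-tests-b}, drawn from Theorem~\ref{thm:rigidity} and Theorem~\ref{thm:descent}.
\end{itemize}
Nothing new is needed at this stage beyond citing the lemma.

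Next I eliminate the twelve triples. In every row $n$ is prime, so $\Phi_n(Y)=1+Y+\cdots+Y^{n-1}=S_n(Y)$. One checks $k+3<n$ in each row; the tightest cases are $(27,31,10)$ and $(49,61,30)$, where $13<31$ and $33<61$. Lemma~\ref{lem:finite-identity} gives $\Delta_{q,n}\equiv0\pmod{S_n}$ over $\F_p$. This congruence holds modulo $Y^n-1$ after reduction, and $\Phi_n\mid Y^n-1$, so the divisibility \eqref{eq:polynomial-obstruction} holds. Lemma~\ref{lem:low-multiple} then shows that no faithful irreducible cyclic code with these parameters is MDS.

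By Lemma~\ref{lem:faithful}, every irreducible cyclic MDS code with $k\geq2$ is faithful, so each residual triple is excluded. This contradiction gives $n\leq q+1$ for all $2\leq k\leq30$.

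For the stronger statement with $k\geq3$, I apply \eqref{eq:strict-endpoint} of Proposition~\ref{prop:dimension-two}:
\begin{itemize}
\item $n=q+1$ would force $\ord_n(q)=2\neq k$;
\item $n=q$ contradicts $\gcd(n,q)=1$.
\end{itemize}
Hence $n\leq q-1$.

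The main obstacle is not the logical assembly, which is routine. It is the reliance on two finite computations: the exhaustive sieve in Lemma~\ref{lem:twelve} and the explicit minor identities in Lemma~\ref{lem:finite-identity}. For the proof of Theorem~\ref{thm:thirty} itself, I would take both as established. The one point I would verify with care is the passage from the congruence modulo $S_n$ in the quotient ring $R_{p,n}$ to divisibility by $\Phi_n$ in $\F_p[Y]$. Since $S_n$ divides $Y^n-1$, reduction modulo $S_n$ is well defined on $R_{p,n}$. Taking the representative of degree less than $n$, as the paper prescribes, makes the hypothesis of Lemma~\ref{lem:low-multiple} exact.
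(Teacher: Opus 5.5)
Your proposal is correct and follows essentially the same route as the paper: Lemma~\ref{lem:twelve} reduces to the twelve rows of Table~\ref{tab:residual}, Lemmas~\ref{lem:finite-identity} and~\ref{lem:low-multiple} exclude them, and Proposition~\ref{prop:dimension-two} removes the endpoints $q$ and $q+1$; your explicit appeal to Lemma~\ref{lem:faithful} for faithfulness makes visible a step the paper leaves implicit. One small correction: the smallest margin in $k+3<n$ is actually at $(25,37,18)$, where $21<37$, not at the two rows you name; the paper checks all rows at once via $n\geq2k+1>k+3$ for $k\geq3$.
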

\begin{proof}
A counterexample would have one of the parameters in Lemma~\ref{lem:twelve}. For each of those parameters, $n\geq2k+1>k+3$, and Lemma~\ref{lem:finite-identity} supplies \eqref{eq:polynomial-obstruction}. Lemma~\ref{lem:low-multiple} excludes every row. This proves $n\leq q+1$. For $k\geq3$, Proposition~\ref{prop:dimension-two} removes the two endpoints $q$ and $q+1$.
\end{proof}

\section{The MDS conjecture for every composite length}\label{sec:composite}
The finite-dimensional result does not by itself control unbounded dimension. At composite lengths, however, subgroup rigidity forces prime factors to be large relative to $k$. This puts most parameters inside established geometric ranges. Exact field descent eliminates the few factorizations that remain.

\subsection{Geometric inputs and their precise use}
For clarity, a \emph{normal rational curve} in $\PG(r-1,q)$ is a projective image of
\begin{equation}\label{eq:nrc}
 \mathcal N_{r-1}=
 \{[1:a:\cdots:a^{r-1}]:a\in\F_q\}
 \cup\{[0:\cdots:0:1]\}.
\end{equation}
It has $q+1$ rational points; in the plane it is a nonsingular conic. We will use the following numerical consequences of the cited geometric theorems, as recorded in \cite[Section~19]{BallLavrauw}.
\begin{lemma}\label{lem:geometric}
Let $C$ be a nontrivial linear $[n,k]_q$ MDS code with $k\geq6$.
\begin{enumerate}[label=(\roman*)]
\item If $q>3125$ and $k<\tfrac12\sqrt q+2$, then $n\leq q+1$.
\item If $q=p^{2a+1}$ with $p$ odd, $a\geq1$, and
\begin{equation}\label{eq:voloch}
 k<\frac14\sqrt{pq}-\frac{29}{16}p+4,
\end{equation}
then $n\leq q+1$.
\end{enumerate}
\end{lemma}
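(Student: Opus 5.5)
The proof will be a translation argument: carry an MDS code of length at least $q+2$ to an arc, and then quote the geometric nonexistence theorems exactly as \cite[Section~19]{BallLavrauw} records them. No new geometry is needed; the work is in matching hypotheses.

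\textbf{Step 1: reduce to length exactly $q+2$.} Suppose, for contradiction, that $C$ is a nontrivial $[n,k]_q$ MDS code with $k\geq6$ and $n\geq q+2$. Take a generator matrix. Deleting $n-(q+2)$ columns keeps every $k$-subset of columns independent, so by Lemma~\ref{lem:general-column} the punctured code is again MDS, with parameters $[q+2,k]_q$. It is still nontrivial, since $k\leq q-1<(q+2)-2$ by Lemma~\ref{lem:bush}. Under the arc correspondence \cite[Theorem~17]{BallLavrauw}, its columns give a $(q+2)$-arc in $\PG(k-1,q)$. It therefore suffices to show that no $(q+2)$-arc exists in $\PG(k-1,q)$ in the ranges (i) and (ii).

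\textbf{Step 2: invoke the cited theorems.} For (i), I would use the result collected in \cite[Section~19]{BallLavrauw}. This combines Hirschfeld--Korchm\'aros for even $q$ with Storme--Thas and its extensions. It says that an arc in $\PG(k-1,q)$ of size larger than $q-\tfrac12\sqrt q+k-\text{const}$ lies on a normal rational curve \eqref{eq:nrc}, provided the dimension condition $k<\tfrac12\sqrt q+2$ holds and $q$ avoids a finite list of small exceptions. I expect the hypothesis $q>3125$ is what removes those exceptional small field sizes. A $(q+2)$-arc is far above the size threshold, so it would lie on a normal rational curve. That is impossible, because a normal rational curve has only $q+1$ points.

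For (ii), I would apply Voloch's bound \cite{Voloch} in the form recorded there for $q=p^{2a+1}$ with $p$ odd. Under \eqref{eq:voloch}, every sufficiently large arc is contained in a normal rational curve, or equivalently no $(q+2)$-arc exists. The same point count then finishes the case.

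\textbf{Step 3: duality and the main obstacle.} The main obstacle is bookkeeping rather than mathematics. Some of the cited statements are formulated for the dual dimension $q+2-k$, or for arcs in the plane lifted by induction on dimension, or with $\leq$ versus $<$ at the endpoints. So I will check each hypothesis against the exact versioned statement in \cite{BallLavrauw}, and pass to the dual code via Lemma~\ref{lem:dual-short} wherever a theorem is stated for the complementary dimension. Under that duality a $(q+2)$-arc in $\PG(k-1,q)$ corresponds to one in $\PG(q+1-k,q)$. The strict inequalities in (i) and (ii) are chosen to sit inside the cited ranges, and the restriction $k\geq6$ keeps us away from the hyperoval dimensions $k\in\{3,q-1\}$. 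With these checks done, both parts follow.
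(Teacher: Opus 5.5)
Your plan is essentially the paper's: pass to a $(q+2)$-arc, then invoke the ranges tabulated in \cite[Section~19]{BallLavrauw}, checking that the hypotheses of (i) and (ii) sit strictly inside them.

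Two corrections are needed in exactly the bookkeeping you identify as the main task.
\begin{enumerate}[label=(\alph*)]
\item You have the attributions reversed. Hirschfeld--Korchm\'aros supplies the odd-$q$ range: $q\geq529$, $q\notin\{729,3125\}$, $k\leq\frac12\sqrt q+2$. This is why $q>3125$ suffices. Storme--Thas supplies the even-$q$ range, $k<\frac12\sqrt q+\frac{15}{4}$.
\item For even $q$, the hyperoval dimension $k=q-1$ is excluded by $k<\frac12\sqrt q+2<q-1$, not by $k\geq6$. The bound $k\geq6$ only removes $k=3$.
\end{enumerate}

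For (ii), the paper does not stop at the citation. It derives the range from the planar conic-containment bound \cite[Theorem~53]{BallLavrauw}, with the threshold deliberately weakened by $2$. The steps are:
\begin{enumerate}[label=(\arabic*)]
\item Project the $(q+2)$-arc from $k-3$ of its points, giving a planar $(q+5-k)$-arc. The strict inequality \eqref{eq:voloch} is precisely what puts this arc above the weakened threshold, so it lies on a conic.
\item Lift conic containment back up one dimension at a time using the two-point projection theorem \cite[Theorem~24]{BallLavrauw}. Its field-size condition $q\geq r+1$ follows from Lemma~\ref{lem:bush}.
\item Conclude that the arc lies on a normal rational curve, which is impossible since that curve has only $q+1$ points.
\end{enumerate}
This is the ``lifted by induction on dimension'' step you defer. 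Carrying it out makes the constant in \eqref{eq:voloch} verifiable, rather than an endpoint taken on trust from the table.
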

\begin{proof}[Derivation from the cited results]
For odd $q$, the Hirschfeld--Korchm\'aros range \cite{HK}, in the precise form recorded in \cite[Section~19, p.~28]{BallLavrauw}, is
\[
 q\geq529,\quad q\notin\{729,3125\},\quad
 k\leq\tfrac12\sqrt q+2.
\]
For even $q$, the Storme--Thas range \cite{ST}, as recorded in the same table, is
\[
 k<\tfrac12\sqrt q+\tfrac{15}{4}.
\]
The stronger common restrictions in (i) lie inside these respective ranges. Since $k\geq6$ and, for even $q>3125$, $k<q-1$, neither exceptional MDS dimension needs to be retained. This proves (i).

For (ii), we use the Voloch range stated in \cite[Section~19]{BallLavrauw}, originating in \cite{Voloch}. Here is the projection mechanism and its constant. The planar bound recorded as \cite[Theorem~53]{BallLavrauw} states conic containment for every planar arc of size greater than
\[
 M_0(q)=q-\frac14\sqrt{pq}+\frac{29}{16}p-1.
\]
We use the deliberately weaker threshold $M(q)=M_0(q)+2$, namely
\[
 M(q)=q-\frac14\sqrt{pq}+\frac{29}{16}p+1.
\]
This weakening gives exactly the strict range \eqref{eq:voloch}; the $+1$ here is not the constant in the quoted planar theorem. Suppose, for contradiction, that $n\geq q+2$, and select a $(q+2)$-subset $\mathcal A$ of the generator arc.

To apply the planar bound, we project from subsets of the arc. For a subset $S\subseteq\mathcal A$ of size $s\leq k-3$, let $U_S$ be the vector span of representatives of its points. Projection means taking the remaining vectors modulo $U_S$:
\[
 \mathcal A_S\subseteq\PG(\F_q^k/U_S),\qquad
 \dim U_S=s,\qquad|\mathcal A_S|=q+2-s.
\]
These projected vectors are an arc. Indeed, a dependence among at most $k-s$ quotient vectors would lift to a dependence among at most $k$ original representatives, including those of $S$. This also proves that no remaining point becomes zero or coincides projectively with another.

At $s=k-3$ the quotient is three-dimensional and
\[
 |\mathcal A_S|=q+5-k>M(q),
\]
where the strict inequality is exactly \eqref{eq:voloch}. Thus every such planar projection lies on a conic.

To lift this conic containment back to the original space, use the two-point projection theorem \cite[Theorem~24]{BallLavrauw}: an arc in $\PG(r-1,q)$, $r\geq4$ and $q\geq r+1$, lies on a normal rational curve if its projections from two distinct arc points do so. Starting with $s=k-3$ and decreasing $s$ one at a time, apply this theorem to $\mathcal A_S$. Its two projections are $\mathcal A_{S\cup\{x\}}$ and $\mathcal A_{S\cup\{y\}}$, already covered by the induction hypothesis. The field-size condition holds because Lemma~\ref{lem:bush} gives $q\geq k+1\geq(k-s)+1$.

At $s=0$ we conclude that $\mathcal A$ lies on a normal rational curve. This is impossible because
\[
 |\mathcal A|=q+2>q+1=|\mathcal N_{k-1}|.
\]
The deep planar bounds and the two-point projection theorem are external inputs; the quotient argument above makes their application explicit.
\end{proof}

\subsection{The small composite-length enumeration}
Before applying these ranges uniformly, we remove a bounded set of small dimensions. The following integer argument is independent of the polynomial identity in Lemma~\ref{lem:finite-identity}.
\begin{lemma}\label{lem:small-composite}
Every nontrivial irreducible cyclic MDS code of composite length and dimension at most thirty satisfies $n\leq q+1$.
\end{lemma}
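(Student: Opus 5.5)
We argue by contradiction, reusing the bounded reduction of Section~\ref{sec:finite}. That reduction already depends only on the geometric ranges of Lemma~\ref{lem:geometric}, Ball's prime-field theorem, the Ball--De Beule range and exact field descent, and not on the polynomial identities of Lemma~\ref{lem:finite-identity}. So suppose a nontrivial irreducible cyclic $[n,k]_q$ MDS code has composite $n$, $k\leq30$ and $n>q+1$. The case $k=2$ is excluded at once by Proposition~\ref{prop:dimension-two}, since then $n\mid q+1$.

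For $3\leq k\leq5$ we use the cyclotomic values listed in the proof of Lemma~\ref{lem:twelve}. By Theorem~\ref{thm:cyclotomic}, $n$ divides $|\Phi_k(-t)|$ with $2\leq t\leq k-1$, and every prime factor $\ell$ of $n$ satisfies $\ell\equiv1\pmod k$. By Corollary~\ref{cor:factor-count}, a composite $n$ must also satisfy $n\geq(k+1)^2$. The only candidate large enough is the value $205=5\cdot41$ at $k=5$, and it contains the prime $5\not\equiv1\pmod5$. Hence no composite length survives for these dimensions.

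For $6\leq k\leq30$ we first bound the alphabet. Lemma~\ref{lem:geometric}(i) gives $q\leq3136$. Ball's theorem removes prime $q$. For $q=p^f$ with $f\geq2$, Ball--De Beule leaves only $k>2p-2$. These are exactly the hypotheses of the finite domain in the proof of Lemma~\ref{lem:twelve}. Corollary~\ref{cor:one-length} then leaves at most one candidate length $n=(u+1)k+1$ for each pair $(q,k)$. That candidate is subject to the necessary tests \eqref{eq:bounded-tests-a} and \eqref{eq:bounded-tests-b}. We then observe that every survivor in Table~\ref{tab:residual} is prime: $31,67,53,127,37,43,89,47,139,101,251,61$. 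So no composite length passes the tests, and the lemma follows without Lemmas~\ref{lem:low-multiple} or~\ref{lem:finite-identity}.

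To make the composite case independent of the full table, we can shortcut the enumeration directly. Composite $n$ forces $n\geq(k+1)^2$ and $q\geq k^2+k+2$ by Corollary~\ref{cor:factor-count}; for odd $k$ every prime factor is at least $2k+1$, so $n\geq(2k+1)^2$. Combined with $n\leq q+k-1\leq3165$, this gives $k\leq55$ in general and leaves only a short list of factorizations $n=\ell_1\ell_2\cdots$ with all $\ell_i\equiv1\pmod k$. Each such candidate is checked against $n=(u+1)k+1$, $\ord_\ell(q)=k$ and the descent condition $\ord_n(p)=fk$ from Theorem~\ref{thm:descent}.

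The main obstacle is making this finite check trustworthy and exhaustive. We must certify that, for every listed prime power $q\leq3136$ and every $k$ in range, the unique candidate $n$ is either prime or fails one of the order or descent tests. We would present it as a small table of composite candidates $(q,k,n)$ together with the failing condition, typically $\ord_\ell(q)\neq k$ for some prime factor $\ell$, or $\ord_n(p)<fk$.
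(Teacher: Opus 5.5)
Your main argument is correct, but it decomposes the proof differently from the paper. You obtain the lemma as a corollary of Lemma~\ref{lem:twelve}. That lemma already covers every $2\leq k\leq30$ with $n>q+1$, whether $n$ is composite or not, and every length in Table~\ref{tab:residual} is prime. So nothing further is needed, and your separate treatment of $k\leq5$ and your re-derivation of $q\leq3136$ are correct but redundant.

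The paper instead runs a separate, deliberately weaker sieve. It keeps only $n\geq2k+1$, $\Omega(n)\geq2$, $\gcd(n,q)=1$ and $\ell\equiv1\pmod k$ for every prime $\ell\mid n$. It does not use Corollary~\ref{cor:one-length} or the descent test $\ord_n(p)=fk$. This produces the twelve composite candidates of Table~\ref{tab:small-composite}. Each is then excluded by an explicit divisor $d$ with $\ord_d(q)\neq k$, which is subgroup rigidity alone. Note that the last row, $n=47^2$, passes the prime-factor test and needs the witness $d=n$.

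Your route is shorter and reuses work already done. The paper's route keeps the composite-length chain independent of the larger enumeration behind Table~\ref{tab:residual}, needs no descent input at the final step, and gives checkable witnesses for every surviving composite candidate.

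Your sketched ``independent shortcut'' is essentially the paper's route, but you do not carry it out. It is also unnecessary once you invoke Lemma~\ref{lem:twelve}. The bound $k\leq55$ there is vacuous given $k\leq30$, and the descent test you list turns out not to be needed for any composite candidate.
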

\begin{proof}
Dimension two is Proposition~\ref{prop:dimension-two}. Suppose $3\leq k\leq30$, $n$ is composite, and $n>q+1$.

For $k\in\{3,4,5\}$, the evaluations in Theorem~\ref{thm:cyclotomic} are
\[
 \begin{array}{c|c|l}
 k&\Phi_k(X)&\{\Phi_k(-t):2\leq t\leq k-1\}\\\hline
 3&X^2+X+1&3\\
 4&X^2+1&5,\ 10=2\cdot5\\
 5&X^4+X^3+X^2+X+1&11,\ 61,\ 205=5\cdot41.
 \end{array}
\]
Every prime factor of an admissible $n$ must be $1$ modulo $k$. For $k=3$ no displayed factor qualifies. For $k=4$ only the single factor $5$ qualifies; for $k=5$ each evaluation has at most one qualifying prime factor, with exponent one. Thus none has an admissible composite divisor.

Assume now $6\leq k\leq30$. If $q>3136$, then
\[
 q>3125,\qquad
 q>3136=4(30-2)^2\geq4(k-2)^2,
\]
so $k<\tfrac12\sqrt q+2$, contradicting Lemma~\ref{lem:geometric}(i). Hence $q\leq3136$.

Write $q=p^f$. Ball's theorem \cite[Theorem~1.7]{Ball}, using $k\leq q-1$, and the nonprime-alphabet theorem \cite[Theorem~5.2]{BallDeBeule} allow us to assume
\[
 f\geq2,\qquad k>2p-2.
\]
Since $k\leq30$, this gives $p<16$, and hence
\[
 \begin{array}{c|rrrrrr}
 p&2&3&5&7&11&13\\\hline
 \max\{f:p^f\leq3136\}&11&7&5&4&3&3.
 \end{array}
\]
Every power with $2\leq f$ up to these maxima is tested. For each one, enumerate $6\leq k\leq30$, $k>2p-2$, and $q+2\leq n\leq q+k-1$. Retain only
\begin{equation}\label{eq:small-sieve}
 n\geq2k+1,\quad \Omega(n)\geq2,\quad\gcd(n,q)=1,
 \quad \ell\equiv1\pmod k\quad(\ell\mid n\text{ prime}).
\end{equation}
These elementary tests leave exactly Table~\ref{tab:small-composite}. Trial division suffices, since every tested $n$ satisfies $n\leq3136+29=3165$.

\begin{table}[htbp]
\centering\small
\caption{All small composite candidates after \eqref{eq:small-sieve}; the final order differs from $k$ in every row.}\label{tab:small-composite}
\begin{tabular}{@{}rrrlrr@{}}
\toprule
$q$&$n$&$k$&factorization of $n$&$d\mid n$&$\ord_d(q)$\\\midrule
128&133&6&$7\cdot19$&7&3\\
243&247&6&$13\cdot19$&13&3\\
256&259&6&$7\cdot37$&7&3\\
1024&1027&6&$13\cdot79$&79&39\\
2187&2191&6&$7\cdot313$&313&39\\
3125&3131&10&$31\cdot101$&31&3\\
2187&2197&12&$13^3$&13&3\\
3125&3133&12&$13\cdot241$&13&4\\
2048&2059&14&$29\cdot71$&29&28\\
2401&2413&18&$19\cdot127$&19&3\\
512&529&22&$23^2$&23&11\\
2187&2209&23&$47^2$&2209&1081\\
\bottomrule
\end{tabular}
\end{table}

We now apply the full order condition to the candidates in the table. Each row has a divisor $d>1$ with $\ord_d(q)\neq k$, contradicting Theorem~\ref{thm:rigidity}. The sieve intentionally uses weaker conditions than full irreducibility; orders greater than $k$ are therefore valid exclusion witnesses as well. In the last row it is enough to check
\[
 2187^{23}\equiv706\not\equiv1\pmod{2209}.
\]
This excludes all small-dimensional composite counterexamples.
\end{proof}

\subsection{Reduction to three factorizations}
We can now let the dimension vary without a bound. Large prime factors increase the length, and the elementary MDS bound transfers that increase to $q$, placing the parameters in the geometric range.
\begin{proposition}\label{prop:three-shapes}
If a nontrivial irreducible cyclic MDS code has composite length $n>q+1$, then $k\geq31$ is even, $\ell=k+1$ is prime, and
\begin{equation}\label{eq:three-shapes}
 n=\ell^2,\qquad n=\ell(2k+1),\qquad
 \text{or }n=\ell(3k+1),
\end{equation}
where the other displayed factor is also prime.
\end{proposition}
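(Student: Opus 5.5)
The plan is to combine the factor-count bound of Corollary~\ref{cor:factor-count} with the geometric range of Lemma~\ref{lem:geometric}(i), with Lemma~\ref{lem:small-composite} removing small dimensions. First, Lemma~\ref{lem:small-composite} already gives $n\leq q+1$ for composite $n$ and $k\leq30$, so a counterexample has $k\geq31$. By Theorem~\ref{thm:rigidity}, every prime divisor of $n$ has the form $ak+1$ with $a\geq1$. By Corollary~\ref{cor:factor-count}, $a\geq2$ whenever $k$ is odd. Lemma~\ref{lem:bush} gives $q\geq n-k+1$. The aim is to show $q>4(k-2)^2$, equivalently $k<\tfrac12\sqrt q+2$, together with $q>3125$, for every factorization shape except those in \eqref{eq:three-shapes}. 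Lemma~\ref{lem:geometric}(i) then contradicts $n>q+1$.

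If $\Omega(n)\geq3$, then $q\geq(k+1)^3-k+1$. This exceeds $4(k-2)^2$ and $3125$ for $k\geq31$, so this case is excluded. If $\Omega(n)=2$, write $n=(ak+1)(bk+1)$ with $1\leq a\leq b$ and both factors prime. Then
\[
 q\geq abk^2+(a+b-1)k+2 .
\]
For $ab\geq4$ this exceeds $4k^2-16k+16$. The case $ab=3$ gives only $3k^2+3k+2$, which fails to exceed $4(k-2)^2$ once $k\geq19$. The exceptions are therefore exactly $(a,b)\in\{(1,1),(1,2),(1,3)\}$, which are the three shapes $\ell^2$, $\ell(2k+1)$ and $\ell(3k+1)$ with $\ell=k+1$ prime. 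Since $a=1$ is impossible for odd $k$ by Corollary~\ref{cor:factor-count}, $k$ must be even. In each surviving case the second factor is a prime divisor of $n$, hence prime.

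The main obstacle is the side condition $q>3125$ in Lemma~\ref{lem:geometric}(i). For $ab\geq4$ and $k\geq31$ we only get
\[
 q\gtrsim 4k^2\geq 3844,
\]
which is safe. Even without that, we have $q\geq k^2+k+2$ by Corollary~\ref{cor:factor-count}, so $q\leq3136$ forces $k\leq55$. I would close this gap by extending the finite sieve \eqref{eq:small-sieve} of Lemma~\ref{lem:small-composite} to $31\leq k\leq55$ and $q\leq3136$. That sieve uses the same reductions as before: Ball's theorem \cite{Ball} and \cite[Theorem~5.2]{BallDeBeule} to force $f\geq2$ and $k>2p-2$, then the rigidity order test at a divisor $d\mid n$. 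The expectation is that the $ab\geq4$ candidates there either fail \eqref{eq:small-sieve} or have a divisor of wrong order. If the enumeration leaves survivors, I would instead try Lemma~\ref{lem:geometric}(ii) for odd-exponent alphabets, or the descent condition $\ord_n(p)=fk$ of Theorem~\ref{thm:descent}.
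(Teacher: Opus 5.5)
Your argument is correct and is essentially the paper's proof: your cases $ab\geq4$ are exactly the paper's two exclusions, namely both prime factors at least $2k+1$ (giving $q\geq4k^2+3k+2$) and a cofactor of $\ell=k+1$ at least $4k+1$ (giving $q\geq4k^2+4k+2$), each contradicting Lemma~\ref{lem:geometric}(i). The worry in your last paragraph is unfounded: $q\geq4k^2+3k+2\geq3939>3125$ for $k\geq31$, so no extended sieve, Voloch range, or descent argument is needed at this step.
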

\begin{proof}
Lemma~\ref{lem:small-composite} gives $k\geq31$. Every prime divisor of $n$ is $1$ modulo $k$ by Theorem~\ref{thm:rigidity}.

We first bound the number of prime factors, counted with multiplicity. Suppose $\Omega(n)\geq3$. Corollary~\ref{cor:factor-count} gives
\[
 q\geq(k+1)^3-k+1\geq32^3-31+1=32738>3125.
\]
Also
\begin{align*}
 (k+1)^3-k+1-4(k-2)^2
 &=k^3-k^2+18k-14\\
 &=k^2(k-1)+18k-14>0.
\end{align*}
Thus $q>4(k-2)^2$, or $k<\tfrac12\sqrt q+2$. Lemma~\ref{lem:geometric}(i) gives a contradiction. Since $n$ is composite, the only remaining possibility is $\Omega(n)=2$.

With exactly two prime factors remaining, we next bound the smaller one. Write $n=rs$, with $r,s$ prime, allowing $r=s$. If both $r,s\geq2k+1$, then
\[
 q\geq n-k+1\geq(2k+1)^2-k+1
 =4k^2+3k+2\geq3939>3125,
\]
and
\[
 (4k^2+3k+2)-4(k-2)^2=19k-14>0.
\]
The same geometric contradiction follows. Therefore one factor is smaller than $2k+1$. The only positive multiple of $k$ plus one in the permitted interval $[k+1,2k]$ is $k+1$. Thus one factor is the prime $\ell=k+1$. Since $\ell\geq3$ is odd, $k$ is even.

It remains to bound the other prime factor. If this factor is at least $4k+1$, then
\[
 q\geq(k+1)(4k+1)-k+1=4k^2+4k+2>3125,
\]
while
\[
 (4k^2+4k+2)-4(k-2)^2=20k-14>0.
\]
Again Lemma~\ref{lem:geometric}(i) is a contradiction. The remaining factor is $1$ modulo $k$ and lies in $[k+1,4k]$, so it is one of
\[
 k+1,\qquad2k+1,\qquad3k+1.
\]
This gives \eqref{eq:three-shapes}.
\end{proof}

The geometric reduction has left only two prime factors with tightly controlled orders. Exact descent now makes the extension exponent $f$ restrictive enough to eliminate all three forms.
\begin{proposition}\label{prop:eliminate}
None of the factorizations in Proposition~\ref{prop:three-shapes} can occur for a counterexample.
\end{proposition}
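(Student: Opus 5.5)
The plan is to combine exact field descent (Theorem~\ref{thm:descent}) with the prime factorization of $n$ from Proposition~\ref{prop:three-shapes}, so that the extension degree $f$ is forced into $\{1,2,3\}$ or $\{1,\ell\}$. Each surviving value of $f$ is then removed by the prime-alphabet theorem, the Ball--De Beule range, or the Voloch range of Lemma~\ref{lem:geometric}(ii).

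Suppose a counterexample has $n>q+1$. Theorem~\ref{thm:descent} gives $\ord_n(p)=fk$, and Lemma~\ref{lem:bush} gives $n-k+1\leq q\leq n-2$.

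\emph{Pinning down $f$.} By \eqref{eq:order-lcm}, $\ord_n(p)$ is the least common multiple of the orders modulo the prime-power factors of $n$.
\begin{itemize}
\item For $n=\ell s$ with $s=jk+1$ prime and $j\in\{2,3\}$: $\ord_\ell(p)\mid\ell-1=k$ and $\ord_s(p)\mid jk$. Hence $fk\mid\lcm(k,jk)=jk$, so $f\mid j$.
\item For $n=\ell^2$: $\ord_{\ell^2}(p)\mid\varphi(\ell^2)=\ell k$, so $f\mid\ell$ and $f\in\{1,\ell\}$.
\end{itemize}
The case $f=1$ is a prime alphabet, excluded by Ball's theorem exactly as in Lemma~\ref{lem:small-composite}. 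In the case $n=\ell^2$ with $f=\ell=k+1$, we get $q\geq2^{k+1}>(k+1)^2=n$ for $k\geq31$, which is impossible.

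\emph{The case $f=2$.} This arises in both shapes $\ell(2k+1)$ and $\ell(3k+1)$. Ball--De Beule \cite[Theorem~5.2]{BallDeBeule} lets us assume $k>2p-2$, that is, $p\leq k/2+1$. Then
\[
 q=p^2\leq\Bigl(\tfrac k2+1\Bigr)^2<(k+1)(2k+1)-k+1\leq q,
\]
a contradiction. So $f=2$ is excluded for both shapes.

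\emph{The case $f=3$, which I expect to be the main obstacle.} This remains only for $n=(k+1)(3k+1)$, with $q=p^3\in[n-k+1,n-2]$. Here $p$ is odd, since $q\approx3k^2$ is far from $8$. Lemma~\ref{lem:geometric}(i) is too weak, because $\sqrt q/2\approx0.87k<k$. Instead I will apply Lemma~\ref{lem:geometric}(ii) with $a=1$. There $\sqrt{pq}=p^2$, so the required inequality becomes
\[
 k<\tfrac14p^2-\tfrac{29}{16}p+4,\qquad p\approx(3k^2)^{1/3}.
\]
Its right side grows like $0.52\,k^{4/3}$, which beats $k$ for large $k$. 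However, it is roughly borderline at $k=31$, so the argument splits in two.
\begin{itemize}
\item I will first derive an explicit threshold $k_0$ using $p^3\geq3k^2+3k+2$, chosen so that the inequality holds for all $k\geq k_0$.
\item For the finitely many even $k$ with $31\leq k<k_0$, I will do a direct integer check. The conditions to test are: $k+1$ and $3k+1$ prime, a prime cube $p^3$ lying in $[n-k+1,n-2]$, $\ord_\ell(q)=\ord_s(q)=k$, and $\ord_n(p)=3k$.
\end{itemize}
I expect this small range to be empty, or nearly so, since prime cubes in windows of length $k$ near $3k^2$ are very sparse. Should some triple survive, the fallback is the polynomial obstruction of Lemma~\ref{lem:low-multiple}, but I would try to avoid it so that the composite-length proof stays independent of that computation.
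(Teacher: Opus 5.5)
Your overall architecture matches the paper's. Exact descent gives $\ord_n(p)=fk$, order constraints at the prime factors pin down $f$ in each shape, the shape $\ell^2$ dies by size, and the shape $\ell(3k+1)$ is sent to the Voloch range. Your handling of $f=2$ differs from the paper's but is valid. In the shape $\ell(2k+1)$ you close it with Ball--De Beule plus $q\ge n-k+1$. Note that $f=2$ does not arise for $\ell(3k+1)$, since you already showed $f\mid 3$ there.

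The paper instead uses rigidity at $\ell=k+1$: $k=\ord_\ell(p)/\gcd(\ord_\ell(p),f)$ with $\ord_\ell(p)\mid k$. This forces $\ord_\ell(p)=k$ and $\gcd(k,f)=1$, so $f$ is odd because $k$ is even. The shape $\ell(2k+1)$ then becomes purely arithmetic: $\ord_{2k+1}(p)=k$, hence $\ord_n(p)=k$, contradicting $\ord_n(p)=fk$ with $f>1$.

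The genuine gap is the case $f=3$, which carries the entire third shape and is only planned, not proved. Your fallback to Lemma~\ref{lem:low-multiple} is not a proof: that lemma applies only after the divisibility is verified for a specific triple. Using it would also break the independence of the composite-length chain from Lemma~\ref{lem:finite-identity}.

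The missing idea is that your ``borderline at $k=31$'' is an artifact of treating $p$ as the real number $(3k^2)^{1/3}\approx14.2$. Since $p$ is prime and $p^3\ge3k^2+3k+2\ge2978>13^3$, in fact $p\ge17$. Then eliminate $k$ in favour of $p$, rather than the reverse. From $p^3>3k^2$ you get $k<p^{3/2}/\sqrt3$, and for $p\ge17$
\[
 \frac{p^{3/2}}{\sqrt3}\le\frac{p^2}{7}\iff 49\le 3p,
 \qquad
 \frac{p^2}{4}-\frac{29p}{16}+4-\frac{p^2}{7}=\frac{p(12p-203)}{112}+4>0 .
\]
Hence $k<\tfrac14\sqrt{pq}-\tfrac{29}{16}p+4$ holds uniformly, and Lemma~\ref{lem:geometric}(ii) applies with no threshold and no search.

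Your threshold-plus-search route would also succeed if carried out. No window $[3k^2+3k+2,\,3k^2+4k-1]$ with $31\le k\le 40$ contains a prime cube. However, you would still need a rigorous proof of the inequality beyond your threshold, and as submitted this step is an unexecuted promise.
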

\begin{proof}
Write $q=p^f$. Ball's prime-field theorem excludes $f=1$, so $f>1$. Theorem~\ref{thm:descent} gives
\begin{equation}\label{eq:full-prime-order}
 \ord_n(p)=fk.
\end{equation}
For each divisor $m>1$ of $n$, put $d_m=\ord_m(p)$. Subgroup rigidity and \eqref{eq:order-power} give
\begin{equation}\label{eq:divisor-prime-orders}
 k=\ord_m(p^f)=\frac{d_m}{\gcd(d_m,f)}.
\end{equation}
For the prime $\ell=k+1$, $d_\ell\mid\ell-1=k$. Hence
\[
 k=\frac{d_\ell}{\gcd(d_\ell,f)}\leq d_\ell\leq k,
\]
so equality holds throughout:
\begin{equation}\label{eq:first-prime}
 d_\ell=k,\qquad\gcd(k,f)=1.
\end{equation}
Since $k$ is even, $f$ is odd. These common order restrictions now allow us to treat the three factorizations in turn.

\smallskip\noindent\emph{Case 1: $n=\ell^2$.}
The reduction map
\[
 (\Z/\ell^2\Z)^\times\longrightarrow(\Z/\ell\Z)^\times
\]
has kernel of order $\ell$. The order of $p$ modulo $\ell^2$ is a multiple of its order $k$ modulo $\ell$, and the quotient of these orders divides the kernel order. Therefore
\[
 d_{\ell^2}\in\{k,\ell k\}.
\]
Together with \eqref{eq:full-prime-order} and $f>1$, this forces $f=\ell$. Thus
\[
 q=p^\ell\geq2^\ell>\ell^2=n,
\]
contradicting $n>q+1$. For the strict inequality, $2^5>5^2$, and for every integer $u\geq3$,
\[
 2u^2-(u+1)^2=u^2-2u-1>0.
\]
Induction proves $2^u>u^2$ for all $u\geq5$, which includes $\ell=k+1$.

\smallskip\noindent\emph{Case 2: $n=\ell r$, $r=2k+1$ prime.}
Equation~\eqref{eq:divisor-prime-orders} shows $k\mid d_r$, while $d_r\mid r-1=2k$. Hence $d_r\in\{k,2k\}$. By \eqref{eq:first-prime} and oddness of $f$,
\[
 \gcd(2k,f)=1.
\]
The possibility $d_r=2k$ would imply
\[
 \ord_r(p^f)=\frac{2k}{\gcd(2k,f)}=2k\neq k.
\]
Thus $d_r=k$, and the two primes are distinct. Formula~\eqref{eq:order-lcm} yields
\[
 \ord_n(p)=\lcm(d_\ell,d_r)=\lcm(k,k)=k,
\]
contradicting $\ord_n(p)=fk$ with $f>1$.

\smallskip\noindent\emph{Case 3: $n=\ell r$, $r=3k+1$ prime.}
Now $k\mid d_r\mid3k$, so $d_r\in\{k,3k\}$. Therefore
\[
 fk=\ord_n(p)=\lcm(k,d_r)\in\{k,3k\},
\]
and $f>1$ forces $f=3$. In particular $q=p^3$.

We can now compare the size of the remaining length with the geometric bound over a cubic extension. Put $t=n-q$. From $t\leq k-1$,
\begin{equation}\label{eq:cubic-size}
 p^3=(k+1)(3k+1)-t
 \geq3k^2+3k+2>3k^2.
\end{equation}
Since $k\geq31$,
\[
 p^3\geq3\cdot31^2+3\cdot31+2=2978>13^3.
\]
The prime $p$ is therefore at least $17$, in particular odd. For such $p$,
\begin{gather*}
 k<\frac{p^{3/2}}{\sqrt3},\qquad
 \frac{p^{3/2}}{\sqrt3}\leq\frac{p^2}{7}
 \iff49\leq3p,\\
 \frac{p^2}{4}-\frac{29p}{16}+4-\frac{p^2}{7}
 =\frac{p(12p-203)}{112}+4>0.
\end{gather*}
Combining these gives
\[
 k<\frac{p^2}{4}-\frac{29p}{16}+4
   =\frac14\sqrt{p\,q}-\frac{29}{16}p+4.
\]
The odd nonsquare field $q=p^3$ consequently satisfies Lemma~\ref{lem:geometric}(ii), a final contradiction.
\end{proof}

\subsection{The composite-length theorem and the main theorem}
The reductions now fit together without any additional case analysis.
\begin{theorem}\label{thm:composite}
Every nontrivial irreducible cyclic MDS code of composite length satisfies $n\leq q+1$. If $k\geq3$, then $n\leq q-1$.
\end{theorem}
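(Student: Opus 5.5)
The plan is to argue by contradiction, assembling the reductions of this section. Suppose a nontrivial irreducible cyclic MDS code $C$ has composite length $n$ and $n>q+1$. By Lemma~\ref{lem:faithful} it is faithful, so the trace model of Proposition~\ref{prop:trace-model} and subgroup rigidity (Theorem~\ref{thm:rigidity}) apply. Hence every hypothesis used in Section~\ref{sec:composite} is in force.

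The first step splits on the dimension. If $k\leq30$, Lemma~\ref{lem:small-composite} already gives $n\leq q+1$; this covers $k=2$ through Proposition~\ref{prop:dimension-two}. If $k\geq31$, Proposition~\ref{prop:three-shapes} forces three conditions: $k$ is even, $\ell=k+1$ is prime, and $n$ has one of the three shapes in \eqref{eq:three-shapes}. Proposition~\ref{prop:eliminate} then excludes each shape. So no counterexample exists, and $n\leq q+1$.

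For the strengthening when $k\geq3$, invoke \eqref{eq:strict-endpoint} of Proposition~\ref{prop:dimension-two}. Once $n\leq q+1$ is known, it removes the endpoints. The value $n=q+1$ is impossible because it gives $q\equiv-1\pmod n$, hence $\ord_n(q)=2\neq k$. The value $n=q$ is impossible because $\gcd(n,q)=1$. Therefore $n\leq q-1$.

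There is no substantive obstacle at this stage: the theorem is a bookkeeping combination of results already proved. The only point needing care is to confirm that the hypotheses line up. Proposition~\ref{prop:three-shapes} requires composite $n$ together with $n>q+1$, and it uses Lemma~\ref{lem:small-composite} to obtain $k\geq31$. Proposition~\ref{prop:eliminate} relies on Theorem~\ref{thm:descent}, which needs $n>q+1$ and nontriviality, both of which hold under our assumption. I would also note that none of these steps uses the polynomial identities of Lemma~\ref{lem:finite-identity} or the classification input of Section~\ref{sec:standard}. This keeps the composite-length chain independent, as claimed in the introduction. Together with Theorem~\ref{thm:thirty}, the result then yields Theorem~\ref{thm:main}.
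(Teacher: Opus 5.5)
Your proposal is correct and follows the paper's proof: Lemma~\ref{lem:small-composite} handles $k\leq30$, Propositions~\ref{prop:three-shapes} and~\ref{prop:eliminate} exclude a counterexample with $k\geq31$, and \eqref{eq:strict-endpoint} of Proposition~\ref{prop:dimension-two} removes the endpoints $q$ and $q+1$ when $k\geq3$. Your hypothesis check and your remark that this chain avoids Lemma~\ref{lem:finite-identity} and the classification input also agree with the paper.
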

\begin{proof}
If $k\leq30$, apply Lemma~\ref{lem:small-composite}. Otherwise a counterexample would, by Proposition~\ref{prop:three-shapes}, have one of the three forms in \eqref{eq:three-shapes}; Proposition~\ref{prop:eliminate} excludes every form. Thus $n\leq q+1$ in all dimensions. For $k\geq3$, Proposition~\ref{prop:dimension-two} removes the endpoints $q$ and $q+1$.
\end{proof}
Together with the finite-dimensional result, this yields the main theorem stated in the introduction.
\begin{proof}[Proof of Theorem~\ref{thm:main}]
The composite-length assertion is Theorem~\ref{thm:composite}; the arbitrary-length assertion for $k\leq30$ is Theorem~\ref{thm:thirty}. In either case, \eqref{eq:strict-endpoint} gives $n\leq q-1$ when $k\geq3$.
\end{proof}

The following remark makes the independence of the two proof chains explicit.
\begin{remark}
The two proof chains are
\[
 \begin{gathered}
 \text{trace model and subgroup rigidity}
 +\text{exact descent}
 +\text{geometric bounds}\\
 +\text{small composite enumeration}
 \quad\Longrightarrow\quad\text{Theorem~\ref{thm:composite}},
 \end{gathered}
\]
and
\[
 \begin{gathered}
 \text{trace model and subgroup rigidity}
 +\text{exact descent}
 +\text{bounded arithmetic reduction}\\
 +\text{the finite polynomial identity and multiplier argument}
 \quad\Longrightarrow\quad\text{Theorem~\ref{thm:thirty}}.
 \end{gathered}
\]
Both also use the prime-field and small-characteristic-range MDS theorems at the stated exclusion steps. Neither uses the classification assumption of Theorem~\ref{thm:standard}, the companion-orbit obstruction, or Fan's recent preprint. The composite-length theorem does not use Lemma~\ref{lem:finite-identity}. The finite-dimensional proof uses the independently established geometric ranges; it does not depend on the composite-length theorem.
\end{remark}

The scope of the composite-length conclusion can also be illustrated directly by its possible length factorizations.
\begin{example}
Theorem~\ref{thm:composite} applies to $n=\ell^a$ for every prime $\ell$ and every $a\geq2$, as well as to lengths with any number of prime factors. Its scope is irreducible cyclic codes in the check-polynomial sense of Definition~\ref{def:cyclic}; it does not assert the same theorem for arbitrary reducible cyclic codes.
\end{example}

\end{document}